\newtheorem{prop}{Proposition}
\def\nn{\nonumber}
\def\be{\begin{equation}}
\def\ee{\end{equation}}
\global\long\def\mA{\mathcal{A}}%
\global\long\def\mB{\mathcal{B}}%
\global\long\def\mD{\mathcal{D}}%
\global\long\def\mH{\mathcal{H}}%
\global\long\def\mI{\mathcal{I}}%
\global\long\def\mK{\mathcal{K}}%
\global\long\def\mM{\mathcal{M}}%
\global\long\def\mN{\mathcal{N}}%
\global\long\def\mO{\mathcal{O}}%
\global\long\def\mZ{\mathcal{Z}}%
\global\long\def\e{\epsilon}%
\global\long\def\ra{\rightarrow}%
\global\long\def\avg#1{\left\langle #1\right\rangle }%
\global\long\def\f#1#2{\frac{#1}{#2}}%
\global\long\def\del{\partial}%
\global\long\def\t{\theta}%
\global\long\def\a{\alpha}%
\global\long\def\b{\beta}%
\global\long\def\g{\gamma}%
\global\long\def\G{\Gamma}%
\global\long\def\r{\rho}%
\global\long\def\d{\delta}%
\global\long\def\Tr{\text{Tr}}%
\global\long\def\tr{\text{tr}}%
\global\long\def\ket#1{\left\langle #1\right|}%
\global\long\def\bra#1{\left|#1\right\rangle }%
\global\long\def\I{\mathbb{I}}%
\global\long\def\R{\mathbb{R}}%
\global\long\def\C{\mathbb{C}}%
\global\long\def\p{\varphi}%
\global\long\def\D{\Delta}%
\global\long\def\O{\Omega}%
\global\long\def\S{\Sigma}%
\global\long\def\l{\ell}%
\global\long\def\app{\approx}%
\global\long\def\sgn{\text{sgn}}%
\global\long\def\arcsinh{\text{arcsinh}}%
\global\long\def\arccosh{\text{arccosh}}%
\global\long\def\vp{\varphi}%
\title{Modular flow in JT gravity and entanglement wedge reconstruction}
\author{Ping Gao}
\affiliation{Department of Physics and NHETC,\\ Rutgers University, Piscataway, NJ 08854, USA}
\emailAdd{ping.gao@rutgers.edu}
\abstract{It has been shown in recent works that JT gravity with matter with two boundaries has a type II$_\infty$ algebra on each side. As the bulk spacetime between the two boundaries fluctuates in quantum nature, we can only define the entanglement wedge for each side in a pure algebraic sense. As we take the semiclassical limit, we will have a fixed long wormhole spacetime for a generic partially entangled thermal state (PETS), which is prepared by inserting heavy operators on the Euclidean path integral. Under this limit, with appropriate assumptions of the matter theory, geometric notions of the causal wedge and entanglement wedge emerge in this background. In particular, the causal wedge is manifestly nested in the entanglement wedge. Different PETS are orthogonal to each other, and thus the Hilbert space has a direct sum structure over sub-Hilbert spaces labeled by different Euclidean geometries. The full algebra for both sides is decomposed accordingly. From the algebra viewpoint, the causal wedge is dual to an emergent type III$_1$ subalgebra, which is generated by boundary light operators. To reconstruct the entanglement wedge, we consider the modular flow in a generic PETS for each boundary. We show that the modular flow acts locally and is the boost transformation around the global RT surface in the semiclassical limit. It follows that we can extend the causal wedge algebra to a larger type III$_1$ algebra corresponding to the entanglement wedge. Within each sub-Hilbert space, the original type II$_\infty$ reduces to type III$_1$.}
\begin{document}

\maketitle

\section{Introduction}

Jackiw-Teitelboim (JT) gravity \cite{Jackiw:1984je,Teitelboim:1983ux} with negative cosmological constant is a two-dimensional toy model for the near horizon physics of a near-extremal black hole. In recent years, it has be heavily studied, for example in \cite{Almheiri:2014cka,Kitaev:2018wpr,Suh:2020lco,Harlow:2018tqv,Yang:2018gdb,Maldacena:2016upp} (see also \cite{Mertens:2022irh} for a review). It has a dilaton field, which can be understood as the area of transverse $S^2$ in the near-extremal black hole through dimensional reduction. It plays the role of a Lagrange multiplier that restricts the curvature to be $R=-2$, which corresponds to AdS$_2$ spacetime. JT gravity can be quantized and is exactly solvable. We can add matters to JT gravity and study how quantum gravity interacts with matters. These matters could be non-perturbative ones, e.g. end-of-world branes \cite{Gao:2021uro,Goel:2020yxl} and conical singularities \cite{Witten:2020wvy}, or a generic bulk local quantum field theory \cite{Maldacena:2016upp,Mertens:2017mtv}. In this paper, we will focus on the latter, which has minimal coupling to JT gravity in the sense that it does not couple to the dilaton field. This leads to a large simplification since the background spacetime is still AdS$_2$. 

In JT gravity, the graviton degrees of freedom only live on the boundary and behave as the parameterization mode with its Schwarizian derivative as the effective action \cite{Maldacena:2016upp}. We can define the boundary operators for the bulk matter fields by pushing them close to the boundary by the AdS/CFT dictionary. In this sense, we assume these boundary operators form a 0+1 dimensional CFT. The interaction between matters and graviton is through the coupling between the boundary operator and parameterization modes. To fully characterize this theory, we need to understand the algebraic structure of these boundary operators. Each boundary algebra $\mA_{L/R}$ is generated by the boundary matter operators $\mO_{i,L/R}$ and the boundary Hamiltonian $H_{L/R}$. It has been shown in \cite{Penington:2023dql,Kolchmeyer:2023gwa} that for the JT gravity with matter with two boundaries and disk topology, all bounded operators on the left and right boundary form two type II$_\infty$ von Neumann algebras $\mA_{L/R}$, which are commutant to each other and are factors. In plain language, this means that these two algebras commute with each other and the only bounded operator in this theory commutes with both of them is proportional to identity. It follows that the full algebra of bounded operators in the theory is generated by the union of $\mA_L$ and $\mA_R$.

This analysis is at a fully quantum level and phrased in a pure boundary algebra language. For pure JT gravity with two boundaries, the phase space is two-dimensional and the canonical coordinate is the regularized geodesic length $\l$ connecting the left and right boundaries. After canonical quantization, the Hilbert space is $L^2(\R)$, namely all square normalizable Wheeler-DeWitt wavefunctions of $\l$ \cite{Harlow:2018tqv}. As the gravity states are all prepared by Euclidean path integral connecting two boundaries, this $L^2(\R)$ can be equivalently labeled by Hartle-Hawking states $\bra{\b}$, where $\b$ is the length of the Euclidean path. Together with the matters, the full Hilbert space is $\mH=\mH_{matter}\otimes L^2(\R)$. The states in this Hilbert space is generated by acting $\mA_L,\mA_R$ on $\bra{\O}\bra{0}$, where $\bra{\O}$ is the vacuum of matter Hilbert space and $\bra{0}$ is for $\b=0$ state.

In the context of AdS/CFT, an important question is to ask which bulk subregion is dual to a boundary subregion. It is well-known that for empty AdS, a boundary spatial region (and its causal diamond) is dual to the bulk causal region bounded by RT surface \cite{Ryu:2006bv}. This formula was later generalized to HRT surface \cite{Hubeny:2007xt} to include non-static spacetime and to the quantum extremal surface (QES) \cite{Engelhardt:2014gca} when subleading in $1/N$ corrections are included. The bulk subregion dual to the boundary subregion is called the entanglement wedge \cite{Czech:2012bh,Wall:2012uf,Headrick:2014cta}. However, all these proposals are made in a semiclassical sense, which relies on a fixed spacetime background, in which these geometric quantities are well defined. On the other hand, for JT gravity with matter, the spacetime is fluctuating (as $\l$ is not a fixed number), subregions of the bulk are not well-defined and the dilaton profile is also fluctuating. Therefore, there is no bulk notion of any of these surfaces and we can only define the entanglement wedge for the left and right boundaries in a pure algebraic way \cite{Kolchmeyer:2023gwa}.\footnote{Note that there is no boundary subregions but just the left and right boundaries in 0+1 dimension.}

As we take the semiclassical limit in the theory of JT gravity with matter, the spacetime will be dominated by a saddle and all classical geometric background and notions therein become emergent. The parameter that controls the semiclassical limit is $\phi_b$, the renormalized boundary value of the dilaton field. It is the effective $1/G_N$ in JT gravity and the semiclassical limit corresponds to $\phi_b\ra\infty$. The states with multiple heavy operators with weight $\mu\sim O(\phi_b)$ inserted on the Euclidean path are called partially entangled thermal states (PETS), which were first studied in \cite{Goel:2018ubv} (see \eqref{eq:pets}). As $\phi_b\ra\infty$, a generic PETS is dual to a long wormhole spacetime connecting the left and right boundaries (see Figure \ref{fig:Geometry-of-boundary.}). In this emergent classical spacetime, we will have an emergent causal wedge, which is the bulk region as the overlap of the causal past and causal future for the left or right boundary. Moreover, by the RT formula, the entanglement wedge for a boundary is geometrically given by the bulk causal diamond between the global minimal dilaton point (recall that dilaton represents the area of transverse $S^2$ in higher dimensions) and the boundary. Due to the existence of the long wormhole, the causal wedge is nested inside the entanglement wedge \cite{Headrick:2014cta, Wall:2012uf} for a generic PETS, and the gap between them is the causal shadow.

As $\phi_b\ra \infty$, we will assume that the heavy operators become generalized free fields and decouple from light operators, whose conformal weight $\D\sim O(1)$ does not grow with $\phi_b$. This is physically reasonable because we should not expect tunneling or decay of spacetime in the strict $G_N\ra 0$ limit.\footnote{For example, a black hole with a Hawking quanta decay involves a coupling between heavy and light fields, but it is a $O(G_N)$ effect.}  It follows that different generic PETS are orthogonal to each other in the semiclassical limit. Therefore, given a generic PETS $\bra{\Psi}$ and the corresponding Euclidean geometry $g$, we can build a sub-Hilbert space $\mH_g$ spanned by inserting light operators on the Euclidean path defining $\bra{\Psi}$ but keeping the heavy operators therein unchanged. We can analytically continue the Euclidean geometry $g$ to Lorentzian spacetime and identify the $Z_2$ time-reversal symmetric geodesic connecting two boundaries as the global Cauchy slice $\S_g$. The states in $\mH_g$ prepared by Euclidean path integral are equivalent to the states of the dual bulk QFT on $\S_g$. Therefore, the full Hilbert space in the semiclassical limit contains a direct sum structure
\be 
\mH\supset \oplus \mH_g \label{eq:1.1}
\ee 
where each sub-Hilbert space $\mH_g$ is labeled by a different Euclidean geometry $g$. It is not an equal sign because non-generic PETS may have subtleties in their semiclassical limit. 

As pointed out in \cite{Leutheusser:2022bgi}, the subregion for the causal wedge should be dual to a subalgebra on the boundary. In particular, the bulk fields in the causal wedge are an ordinary QFT in a Rindler spacetime, which should be a type III$_1$ von Neumann algebra \cite{Leutheusser:2021frk}. By the standard HKLL reconstruction \cite{Hamilton:2005ju,Hamilton:2006az,Hamilton:2006fh,Hamilton:2007wj}, these bulk fields are dual to
 operators living on the boundary. In our notation, these operators are light operators with conformal weights $\D\sim O(1)$ and survive in the semiclassical limit. In contrast, the operators with conformal weight $\mu\sim O(\phi_b)$ does not survive in the semiclassical limit because they have divergent matrix elements. However, these operators still exist in a PETS and backreact to gravity to form a long wormhole. The subregion-subalgebra duality \cite{Leutheusser:2022bgi} implies that these light operators on each boundary form a type III$_1$ subalgebra $\mA_{L/R}^0$. These two subalgebra $\mA_{L/R}^0$ are universal and independent on the details of $\bra{\Psi}$, which is consistent with the fact that the causal wedge is universally defined for any semiclassical bulk state.

Since there is a causal shadow, $\mA_{L/R}^0$ does not generate the full bulk algebra by acting on a generic PETS $\bra{\Psi}$. An immediate question follows: How do we reconstruct the causal shadow from the boundary algebra? In other words, there must be a larger subalgebra from $\mA_{L/R}$ surviving in the semiclassical limit. To characterize this subalgebra by extending $\mA_{L/R}^0$ is equivalent to an entanglement wedge reconstruction \cite{Dong:2016eik,Headrick:2014cta,Wall:2012uf,Czech:2012bh,Cotler:2017erl,Chen:2019gbt,Jafferis:2015del,Faulkner:2017vdd}. It has been proposed in \cite{Jafferis:2015del,Faulkner:2017vdd}
 that for a given bulk state $\bra{\Psi}$, the modular flow from a boundary subregion will be able to reconstruct its entanglement wedge. Unfortunately, the modular flow is usually quite complicated and acts nonlocally. It is hard to verify this proposal unless in limited cases, let alone to prove it in general. 
 
 In this paper, we will study the modular flow in the theory of JT gravity with matter in the semiclassical limit and will show that the modular flow of one boundary for a generic PETS acts locally as the boost transformation around the global minimal dilaton point, i.e. the RT surface (see Figure \ref{fig:5}). It follows that the modular flow extends the bulk fields in the causal wedge to the whole entanglement wedge for that boundary by bulk locality. The argument for this extension sufficiently covering the whole entanglement wedge can summarized as follows:
 \begin{enumerate}
\item For a generic PETS, only heavy operators backreact to the geometry in the semiclassical limit. The solution has smooth AdS$_2$ metric and non-smooth but continuous dilaton profile due to the backreaction. Since the light operators decouple from both the heavy operators and the dilaton field, they are dual to a bulk QFT living on a smooth AdS$_2$ background and enjoy an $SL(2)$ symmetry.\footnote{Note that $SL(2)$ is an isometry of the metric and also a symmetry of light fields but not the symmetry of a PETS because the dilaton profile is not invariant under $SL(2)$. However, as the dilaton and heavy operators are not part of the algebra in the semiclassical limit, this does not matter for us to utilize the $SL(2)$ for light operators.}
\item Using the replica trick, we show that the modular flow acts locally as an $SL(2)$ transformation for any boundary light operators. This $SL(2)$ transformation is the boost around the global RT surface, namely the globally minimal dilaton point.
\item Using HKLL reconstruction, we can extend the modular flow to any bulk field living in the causal wedge. As the bulk-boundary correlations are $SL(2)$ invariant functions, the modular flow acts locally on the bulk field by the same boost transformation.
\item Since the $SL(2)$ transformation is uniquely and globally defined for AdS$_2$, we can smoothly extended the modular flow parameter in the HKLL reconstruction to any real number to move the bulk field beyond the causal wedge. As the boost transformation can move a bulk point arbitrarily close to the edge of the entanglement wedge.\footnote{To be more precise, consider a bulk field in the causal wedge being flowed to far future and far past. Then the causal diamond defined by these two points is arbitrarily close to the horizon of entanglement wedge.} By bulk causality we can justify that the union of all modular flowed operators generate the whole entanglement wedge.
\end{enumerate}

 From the algebra viewpoint, for a generic PETS, the modular flow extends $\mA_{L/R}^0$ to a larger subalgebra $\mA_{L/R}^{EW_g}$, which is again type III$_1$ by subalgebra-subregion duality \cite{Leutheusser:2022bgi}. The full algebra of bounded operators in the sub-Hilbert space $\mH_g$ is generated by the union of $\mA_{L}^{EW_g}$ and $\mA_{R}^{EW_g}$. Similar to \eqref{eq:1.1}, for the algebra of all bounded operators we have
 \be 
 \mB(\mH)\supset \oplus_g \mB(\mH_g) \simeq \oplus_g \{\mA^{EW_g}_L,\mA^{EW_g}_R\},\quad \mA^{EW_g}_{L/R}\simeq \left(\mA^0_{L/R}\right)_\text{modular extension in $g$}
 \ee
 This is a nontrivial example of solvable modular flow and justifies the entanglement wedge reconstruction.

The organization of this paper is as follows. In Section \ref{sec:2.1}, we quickly review the von Neumann algebra in JT gravity with matter; in Section \ref{sec:2.2} we discuss the semiclassical limit in the Euclidean path integral formalism; subsequently, we show an equivalent geometric formalism by gluing EAdS$_2$ disks with $SL(2)$ charge conservation in the semiclassical limit in Section \ref{sec:2.3}; the structure of the semiclassical Hilbert space \eqref{eq:1.1} is discussed in Section \ref{sec:2.4}; in Section \ref{eq:2.5} we discuss the operator algebra $\mA_{L/R}^0$ in the semiclassical limit. To reconstruct the entanglement wedge, we study the modular flow by replica trick in Section \ref{sec:3.1} and show it is the boost transformation around the global RT surface; following this result, we discuss the extension of operator algebra by the modular flow in Section \ref{sec:3.2}. We summarize the conclusion and discuss a few problems and future directions in Section \ref{sec:4}. In Appendix \ref{app:1}, we provide a new and simple derivation of $sl(2)$ 6j symbol directly from the Euclidean path integral of JT gravity with matter. In Appendix \ref{app:2}, we use this Euclidean path integral to show that the configuration with crossing Wick contractions of heavy operators of the same type is exponentially suppressed relative to a non-crossing Wick contraction configuration in the semiclassical limit. This is a crucial step to simplify the computation of the modular flow in the replica trick. In Appendix \ref{app:3}, we solve a nontrivial example of non-generic PETS to address some subtleties of the structure of Hilbert space discussed in Section \ref{sec:2.4}.

\section{JT gravity with matter in the semiclassical limit} \label{sec:2}

\subsection{A quick review of the von Nuemann algebra in JT gravity with matter} \label{sec:2.1}

JT gravity \cite{Jackiw:1984je,Teitelboim:1983ux} with matter is defined by the following Euclidean action \cite{Harlow:2018tqv,Jafferis:2019wkd}
\begin{align}  \label{JTact} 
I[g,\Phi,\vp]&=-S_0 \chi +I_{JI}[g,\Phi]+I_m[g,\vp] \\
I_{JT}[g,\Phi]&=-\int_\mM \sqrt{g}\Phi (R+2)-2 \int_{\del\mM}\sqrt{h}\Phi(K-1)
\end{align}
with boundary condition
\be  
g_{uu}\ra 1/\e^2,\quad \Phi\ra \phi_b/\e
\ee 
where $u$ is the Euclidean boundary affine time. In this paper, we only consider disk topology (Euler characteristic $\chi=1$) by setting $S_0\ra\infty$. Since the topology is fixed, we will suppress $S_0$ throughout this paper. Since the boundary is a circle of circumference $\b$, $u$ is periodically identified $u\simeq u+\b$. 
Integrating over the dilaton field $\Phi$ leads to AdS$_2$ geometry $R+2=0$, whose metric can be chosen as 
\be  
ds^2=d\r^2+\sinh^2\r d\phi^2,\quad \phi\simeq\phi+2\pi
\ee
The Gibbons-Hawking term in \eqref{JTact} is nontrivial and promotes $\phi$ to dynamical field $\phi(u)$ on the boundary. This is the boundary graviton degrees of freedom of reparameterization, whose dynamics is given by Schwarzian derivative of $\phi(u)$ \cite{Maldacena:2016upp}. One can quantize this Schwarzian theory and solve the quantum gravity of pure JT gravity \cite{Yang:2018gdb,Jafferis:2019wkd,Kitaev:2018wpr}. 

With matter action, since it does not couple to dilaton, this is a quantum field theory of $\vp(\r,\phi)$ living in the AdS$_2$ background. We can define boundary operator by pushing $\vp(\r,\phi)$ to the boundary \cite{Kolchmeyer:2023gwa}
\be  
\mO(\phi)=\lim_{\r\ra\infty}e^{\D \r} \vp(\r,\phi) \label{eq:2.5}
\ee 
where $\D$ is the conformal dimension of $\vp$. We assume $\mO(\phi)$ forms a unitary CFT on a circle of circumference $2\pi$. As $\mO(\phi)$ is an operator located at $\phi$ and $\phi$ is promoted to a dynamical field $\phi(u)$, the correlation function of $\mO$ with graviton correction is given by the Euclidean path integral \cite{Yang:2018gdb}
\be  
\avg{\mO_1(u_1)\cdots\mO_n(u_n)}_{QG}=\int D\phi e^{-Sch(\phi(u))} \avg{\mO_1(\phi(u_1))\cdots\mO_n(\phi(u_n))}_{CFT} \label{pt-f}
\ee 
The CFT correlation function is fixed by OPE coefficients and conformal symmetry. In particular, the two-point function and three-point function on a circle is 
\begin{align}
\avg{\mO_i(\phi_1)\mO_j(\phi_2)}&=\f {\d_{ij}} {(\sin\f{\phi_1-\phi_2}{2})^{2\D_i}} \label{7}\\
\avg{\mO_i(\phi_1)\mO_j(\phi_2)\mO_k(\phi_3)}&=\f {C_{ijk}} {(\sin\f{\phi_1-\phi_2}{2})^{\D_i+\D_j-\D_k}(\sin\f{\phi_1-\phi_3}{2})^{\D_i+\D_k-\D_j}(\sin\f{\phi_2-\phi_3}{2})^{\D_j+\D_k-\D_i}} \label{8}
\end{align}
where $\phi_1-\phi_2,\phi_2-\phi_3,\phi_1-\phi_3\in(0,2\pi)$. Unitarity requires that all OPE coefficients $C_{ijk}$ are real.

We can split the disk along the horizontal diameter to two half-disks. The boundary of the lower half-disk is a semicircle with ends at $\phi=0$ and $\phi=\pi$. Analytically continue the boundary time $u\ra i u$, we will have a Lorentzian theory with two asymptotic boundaries, which we call left and right systems. The Hilbert space of JT with matter theory for this two-sided system is $\mH=\mH_0\otimes L^2(\R)$, where $\mH_0$ is the Hilbert space of the boundary CFT and $L^2(\R)$ is the Hilbert space of pure JT gravity \cite{Harlow:2018tqv}. 

\begin{figure}
\begin{centering}
\subfloat[]{\begin{centering}
\includegraphics[height=1.2cm]{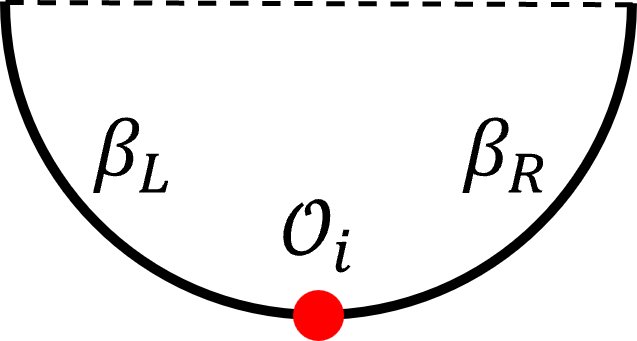}\label{fig:1a}
\par\end{centering}

}\subfloat[]{\begin{centering}
\includegraphics[height=1.2cm]{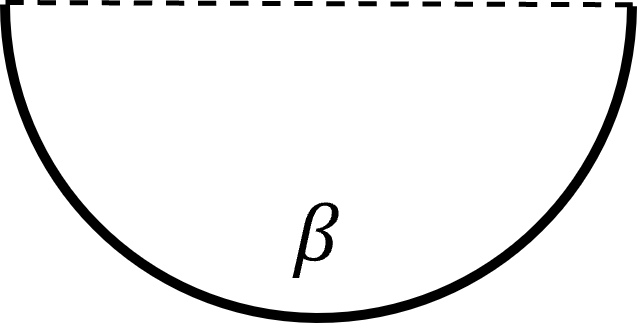}\label{fig:1b}
\par\end{centering}
}\subfloat[]{\begin{centering}
\includegraphics[height=1.7cm]{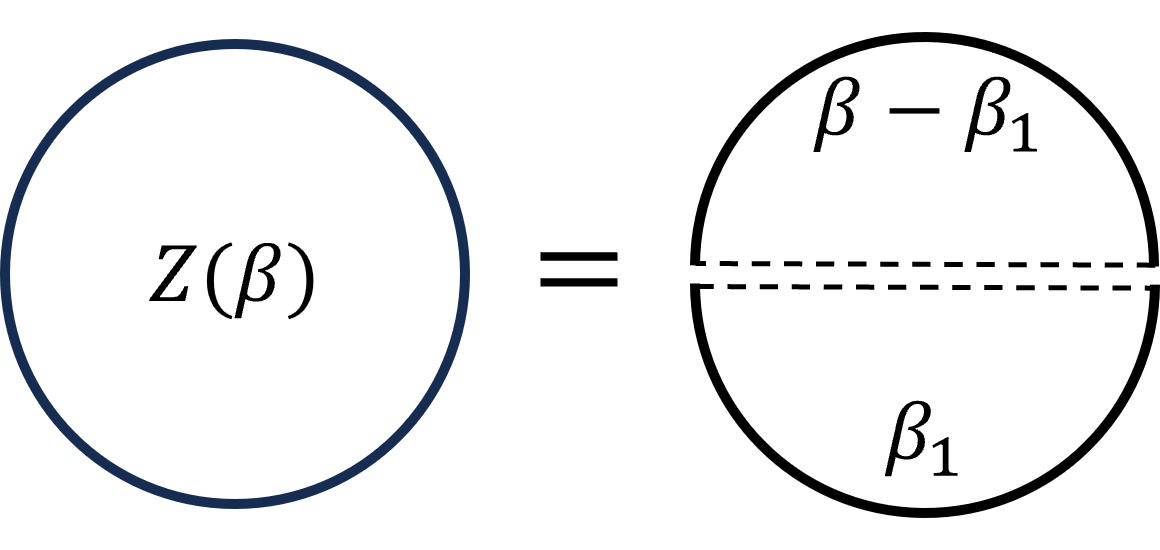}\label{fig:1c0}
\par\end{centering}
}\\
\subfloat[]{\begin{centering}
\includegraphics[height=2cm]{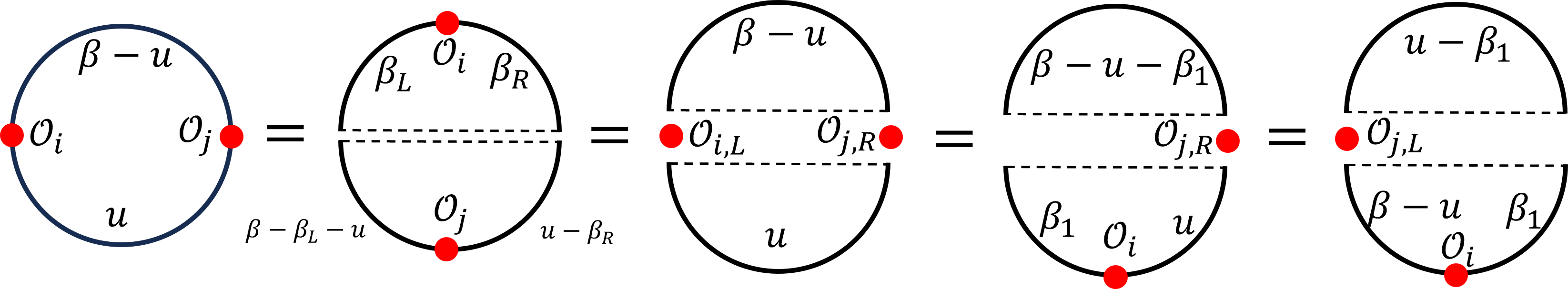}\label{fig:1d0}
\par\end{centering}
}
\\
\subfloat[]{\begin{centering}
\includegraphics[height=2cm]{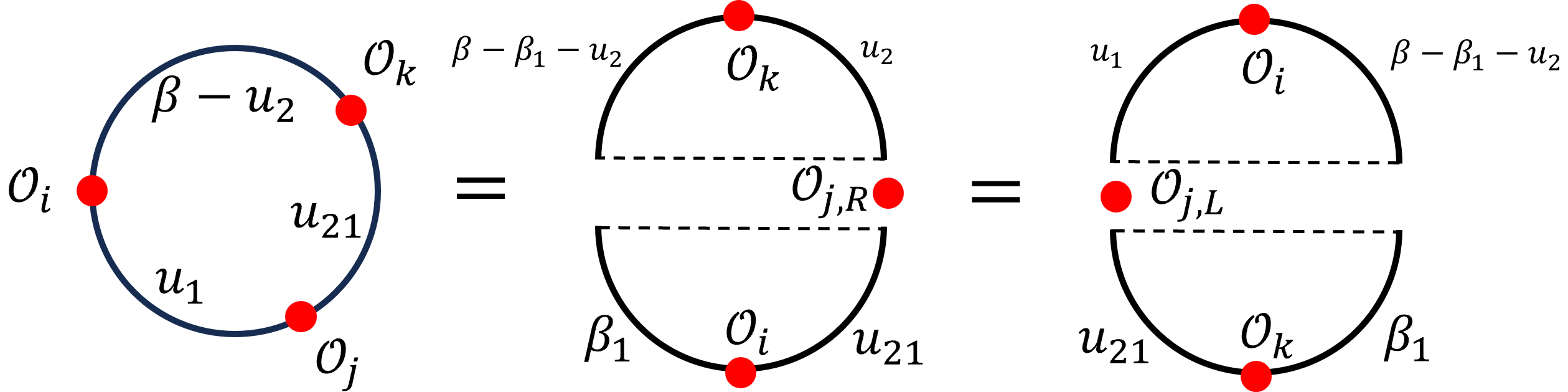}\label{fig:1e0}
\par\end{centering}
}
\par\end{centering} 
\caption{(a) The graphic representation of state $\bra{i;\b_L,\b_R}$. (b) The graphic representation of Hartle-Hawking state without matter $\bra{\b}$. (c) The partition function $Z(\b)$ is equivalent to the inner product of two Hartle-Hawking states. (d) A few identities of two-point functions with inner product and matrix elements. (e) A few identities of three-point functions with matrix elements.}
\end{figure}

For the pure JT gravity part, we can canonically quantize the two-sided theory \cite{Harlow:2018tqv} and find that the system has a canonical variable of renormalized length $\l$ of the geodesic connecting two AdS boundaries at identical left and right Lorentzian time,\footnote{The left and right opposite time shift is the boost transformation in the global gauge symmetry $SL(2)$. Correspondingly, $\l$ is invariant under this opposite time shift.} and the Hamiltonian has a Liouville potential. In this formalism, states are labeled by $\bra{\l}$ and all sqaure-normalizable wavefunctions span the Hilbert space $L^2(\R)$. For the matter part, the Hilbert space is span by all single operator $\mO_i(\phi)$ inserted at $\phi\in[0,\pi]$ on the lower semicircle. Here we do not need to include multiple operator insertion because they can be expanded in terms of single operators through OPE. However, as $\phi$ is dynamical, it is better to consider a new basis of Hartle-Hawking states $\bra{\Psi^\mO_{\b_L,\b_R}(\l)}$ introduced in \cite{Kolchmeyer:2023gwa}, where one $\mO$ is inserted but its location $\phi$ is integrated over $[0,\pi]$ against some wavefunction corresponding to the state with geodesic length $\l$. $\b_L$ and $\b_R$ labels the boundary Euclidean time of $\mO$ to the two ends of the semicircle (see Figure \ref{fig:1a}). Correspondingly, we define the state prepared in this way as
\be 
\bra{i;\b_L,\b_R}\equiv \int_{-\infty}^{\infty}d\l \bra{\Psi^{\mO_{i}}_{\b_L,\b_R}(\l)}\otimes \bra{\l} \label{hh1}
\ee
For the Hartle-Hawking state without matter (see Figure \ref{fig:1b}), we just take $\mO_i=\I$ and this leads to
\be  
\bra{\b}\equiv \int_{-\infty}^{+\infty} d\l \Psi_\b(\l)\bra{0}\otimes\bra{\l}=\int_{-\infty}^{+\infty} d\l\int_0^\infty dk \r(k)e^{-\b k^2/(4\phi_b)} 2K_{2ik}(4e^{-\l/2})\bra{0}\otimes \bra{\l} \label{hh2}
\ee 
where $\Psi_\b(\l)$ is the Hartle-Hawking wavefunction for a half-disk that is bounded by geodesic $\l$ and an EAdS boundary of length $\b=\b_L+\b_R$, $ \bra{0}$ is the vacuum of matter fields and $\r(k)$ is the spectral density with $k\geq 0$ 
\be  
\r(k)=\f {2k} {\pi^2}\sinh (2\pi k)=\f 1 {\pi \G(2ik) \G (-2ik)}
\ee 

Using the Hartle-Hawking states in \eqref{hh1} and \eqref{hh2}, we can express the Euclidean quantum gravity correlation functions in terms of inner products and matrix elements within these states. For example, the partition function of the Schwarzian theory on a circle with circumference $\b$ in our notation is
\be 
Z(\b)= \f {(4\phi_b/\b)^{3/2}} {\sqrt{\pi}} e^{4\phi_b \pi^2/\b} \label{Z}
\ee
where we need to divide an infinite volume of $SL(2)$ in the measure of \eqref{pt-f} if no matter field is inserted because of the global $SL(2)$ gauge symmetry of pure JT gravity. On the other hand, it is the inner product of two Hartle-Hawking states\footnote{The inner product of the Hilbert space $\mH$ is well-defined because $\bra{\l}$ is delta function normalized and the CFT is unitary.} (see Figure \ref{fig:1c0})
\be 
Z(\b)=\avg{\b-\b_1|\b_1}=\int_{-\infty}^\infty d\l \Psi_{\b-\b_1}(\l)\Psi_{\b_1}(\l)=\int_0^\infty dk e^{-\b k^2/(4\phi_b)}\r(k) \label{13}
\ee 
where we used orthogonality of $\bra{\l}$ states $\avg{\l|\l'}=\d(\l-\l')$ and the normalization identity
\be  
\int d\l 2K_{2ik}(4\e^{-\l/2}) 2 K_{2ik'}(4e^{-\l/2})= \f{\d(k-k')}{\r(k)} \label{eq:2.14}
\ee 
Let us define the operators acting on the left (at $\phi=0$) as $\mO_{i,L}$, and the operators acting on the right (at $\phi=\pi$) as $\mO_{i,R}$. The two-point function of matter operators can be expressed in various ways as
\begin{align}
\avg{\mO_j(u)\mO_i(0)}_{QG}&=\avg{i;\b_L,\b_R|j;\b-\b_L-u,u-\b_R}=\avg{\b-u|\mO_{i,L}\mO_{j,R}|u} \label{14-equal}\\
&=\avg{\b-u-\b_1|\mO_{j,R}|i;\b_1,u}=\avg{u-\b_1|\mO_{j,L}|i;\b-u,\b_1}\label{14-equal2}\\
&=\d_{ij}\int_{-\infty}^{\infty}d\l \Psi_{\b-u}(\l) (4e^{-\l})^\D \Psi_{u}(\l) \label{15}\\
&=\d_{ij}\int_0^\infty dk_L dk_R \r(k_L) \r(k_R) \G^{\D_i}_{k_L,k_R}e^{-u k_R^2/(4\phi_b)}e^{-(\b-u)k_L^2/(4\phi_b)} \label{2ptO}
\end{align}
where in the last step we integrate over $\l$ and we define
\be  
\G^{\D}_{k_1,k_2}=\int_{-\infty}^\infty d\l 2K_{2ik_1}(4e^{-\l/2})(4e^{-\l})^\D 2K_{2ik_2}(4e^{-\l/2})=\f{1}{\G(2\D)}\prod_{\pm\pm}\G(\D\pm ik_1\pm ik_2) \label{Gd19}
\ee 
where the product is over all four choices of signs. Taking $\D\ra 0$ in \eqref{Gd19}, it reduces to the normalization \eqref{eq:2.14} for $k_{1,2}\geq 0$.\footnote{To determine the normalization of the delta function we can integrate $k_1$ in \eqref{Gd19} from 0 to $\infty$, which by Barnes integral of four gamma functions leads to \eqref{eq:2.14}. However, this normalization is 1/2 of \cite{Kolchmeyer:2023gwa,Yang:2018gdb}, and our $\r(k)$ is twice of theirs. The discrepancy comes from how to treat the gamma function integral in $\D\ra 0$ limit. However, our normalization is consistent with the 6j symbol in Appendix \ref{app:1}, which reduces to \eqref{Gd19} by setting one of $\D$ to zero. This normalization discrepancy does not affect any main conclusion of this paper.} The first line \eqref{14-equal} is due to the rotation symmetry of the Euclidean diagram in Figure \ref{fig:1d0}. Note that \eqref{15} is a nontrivial result by solving the Schwarzian Euclidean path integral \eqref{pt-f} and express the result in terms of integral over geodesic length $\l$ \cite{Yang:2018gdb}. Comparing with \eqref{7}, we see the rule is simply replacing $\sin(\phi_{ij}/2)$ with $e^{\l_{ij}/2}/2$ when gluing against Hartle-Hawking wavefunctions. This rule is not surprising because semiclassically  a two-point function in AdS background should be an exponential decaying function of the geodesic distance $\l$ with rate of conformal dimension $\D$ at least for heavy fields and long enough distance. Indeed, this rule also holds for three-point functions \eqref{8} (see Figure \ref{fig:1e0})
\begin{align}
&\avg{\mO_k(u_2)\mO_j(u_1)\mO_i(0)}_{QG}\nn\\
=&\avg{k;\b_1,u_{21}|\mO_{j,R}|i;\b-u_2-\b_1,u_1}=\avg{i;u_1,\b-u_2-\b_1|\mO_{j,L}|k;u_{21},\b_1} \label{19}\\
=& C_{ijk}\int d\l_1 d\l_2 d\l_3 \Psi_{\b-u_2}(\l_1)\Psi_{u_1}(\l_2)\Psi_{u_{21}}(\l_3)I(\l_1,\l_2,\l_3) \nn\\
&\times (2e^{-\l_1/2})^{\D_i+\D_k-\D_j}(2e^{-\l_2/2})^{\D_i+\D_j-\D_k}(2e^{-\l_3/2})^{\D_j+\D_k-\D_i}\nn\\
\equiv & \int_0^\infty \left[\prod_{s=1,2,3}dk_s \r(k_s)\right] \sqrt{\G^{\D_i}_{k_1k_2}\G^{\D_j}_{k_2k_3}\G^{\D_k}_{k_3k_1}}e^{-\f{(\b-u_2)k_1^2}{4\phi_b}}e^{-\f{u_1k_2^2}{4\phi_b}}e^{-\f{u_{21}k_3^2}{4\phi_b}}V_{ijk}(k_1,k_2,k_3) \label{3ptO}
\end{align}
where $u_{21}\equiv u_2-u_1,u_1,u_2\in[0,\b]$ and $I(\l_1,\l_2,\l_3)$ is the wavefunction for GHZ state \cite{Yang:2018gdb} that corresponds to the bulk triangular region bounded by three geodesics
\begin{align}
I(\l_1,\l_2,\l_3)&=\int_0^\infty dk \r(k) 2K_{2ik}(4e^{-\l_1/2})2K_{2ik}(4e^{-\l_2/2})2K_{2ik}(4e^{-\l_3/2}) \nn\\
&=\exp \left[ -2\left(e^{(\l_1-\l_2-\l_3)/2}+e^{(\l_2-\l_3-\l_1)/2}+e^{(\l_3-\l_2-\l_1)/2}\right) \right] \label{eq:2.22}
\end{align}
In \eqref{3ptO}, we define the function $V_{ijk}(k_1,k_2,k_3)$, which by \eqref{Gd19} can be written as
\be 
\sqrt{\G^{\D_i}_{k_1k_2}\G^{\D_j}_{k_2k_3}\G^{\D_k}_{k_3k_1}}V_{ijk}(k_1,k_2,k_3)=C_{ijk}\int dk_0 \r(k_0)\G^{\f{\D_i+\D_k-\D_j}{2}}_{k_1,k_0}\G^{\f{\D_i+\D_j-\D_k}{2}}_{k_2,k_0}\G^{\f{\D_j+\D_k-\D_i}{2}}_{k_3,k_0}
\ee 
which in the notation of $V_{abc}(k_{ca},k_{ab},k_{bc})$ is symmetric under permutation of $a,b,c$ assuming $k_{ab}=k_{ba}$. 

From \eqref{2ptO} and \eqref{3ptO}, we can consider an alternative energy basis $\bra{i;k_L,k_R}$ and $\bra{0;k}$ as Laplace transformation of basis $\bra{i;\b_L,\b_R}$ and $\bra{\b}$. We call them energy basis because we can define left Hamiltonian $H_L$ and right Hamiltonian $H_R$ as the conjugate operator to $\b_L$ and $\b_R$ respectively and they act on these basis diagonally
\be  
H_{L/R}\bra{i;k_L,k_R}=\f{k_{L/R}^2}{4\phi_b}\bra{i;k_L,k_R},\quad H_L\bra{0;k}=H_R\bra{0,k}=\f{k^2}{4\phi_b}\bra{0;k}
\ee 
It is shown in \cite{Kolchmeyer:2023gwa} that these energy basis are orthogonal and span the full Hilbert space. Define the following projectors onto vacuum sector and non-vacuum sector 
\be 
\int_0 ^\infty dk\r(k) \bra{0;k}\ket{0;k},\quad \int_0^\infty dk_L dk_R \r(k_L)\r(k_R)\bra{i;k_L k_R}\ket{i;k_L k_R}
\ee 
where the spectral density is a convention of convenience and these continuous energy basis are respectively $1/\r(k)$ and $1/(\r(k_L)\r(k_R))$ normalized. Let us assume one-point function of a single operator vanishes in  \eqref{pt-f}. Inserting these basis into \eqref{13}, \eqref{14-equal}, \eqref{14-equal2} and \eqref{19}, we can work out the following inner products and matrix elements
\begin{gather}
\avg{0;k|\b}=e^{-\b k^2/(4\phi_b)},\quad \avg{j;k_L, k_R|i;\b_L,\b_R}=\d_{ij}\sqrt{\G^{\D_i}_{k_L,k_R}}e^{-\b_L k_L^2/(4\phi_b)}e^{-\b_R k_R^2/(4\phi_b)} \\
\avg{0;k_1|\mO_{i,L/R}|0,k_2}=0,\quad \avg{j;k_L,k_R|\mO_{i,L/R}|0;k}=\d_{ij}\f{\d(k_{R/L}-k)}{\r(k)}\sqrt{\G^{\D_i}_{k_L,k_R}} \label{26}\\
\avg{k;k_L',k_R'|\mO_{j,L/R}|i;k_L,k_R}=\f{\d(k_{R/L}-k_{R/L}')}{\r(k_{R/L})}V_{ijk}(k_{R/L},k_{L/R},k_{L/R}')\sqrt{\G^{\D_j}_{k_{L/R},k_{L/R}'}} \label{27}
\end{gather}

The matrix elements \eqref{26} and \eqref{27} define the operators $\mO_{i,L}$ and $\mO_{i,R}$ completely. Following \cite{Kolchmeyer:2023gwa}, one can define the left algebra $\mA_L=\{\mO_{i,L}, H_L\}''$ and right algebra $\mA_R=\{\mO_{i,R}, H_R\}''$, where the double prime means double commutant, which close the algebra to a von Neumann algebra. Using the matrix elements \eqref{26} and \eqref{27}, we can show that the left and right algebras are commutant to each other $\mA_L=\mA_R'$. Moreover, they are both factors, which means $\mA_L\cap\mA_R=\C \cdot\I$. It follows that the algebra of all bounded operators $\mB(\mH)$ acting on the Hilbert space $\mH$ is generated by the union of $\mA_L$ and $\mA_R$, namely $\mB(\mH)=(\mA_L\cup\mA_R)''$.

Von Neumann algebras are classified into three types. In this paper, we will not review the classification but refer readers to \cite{Sorce:2023fdx}. Roughly speaking, type I allows a factorization of Hilbert space $\mH=\mH_L\otimes\mH_R$, but type II and III von Neumann algebras do not. On the other hand, type I and II von Neumann algebras allow the existence of a trace but type III does not. For type II, if all operators have finite trace then we can rescale the definition of the trace such that all operators have trace within $[0,1]$, and call it type II$_1$; if there exists an operator with infinite trace, then we call it type II$_\infty$. The von Neumann algebra $\mA_L$ and $\mA_R$ are both type II$_\infty$ \cite{Penington:2023dql,Kolchmeyer:2023gwa} because 
\begin{enumerate}
    \item The Hilbert space does not not allow factorization as we can see from the energy basis.
    \item There exists a trace
\be 
\tr a \equiv \lim_{\b\ra 0}\avg{\b|a|\b}, \quad a\in\mA_{L/R} \label{eq:tr}
\ee 
\item The identity operator has infinite trace $\tr \I=Z(0)=\infty$ by \eqref{Z}.
\end{enumerate}

\subsection{The semiclassical limit} \label{sec:2.2}

The II$_\infty$ von Neumann algebra in JT gravity with matter holds for any finite $\phi_b$. In this case, we can algebraically define the entanglement wedge algebra of left and right as $\mA_L$ and $\mA_R$ respectively. However, as the gravity variable $\l$ is a quantum operator, which has fluctuation in a generic state, the bulk field $\p$ does not live in a fixed background. Consequently, we do not have a semiclassical bulk geometry spacetime causal structure. In particular, there is no notation of a bulk causal wedge and the algebraic entanglement wedge does not correspond to a local bulk quantum field theory in a fixed spacetime region bounded by QES formula \cite{Engelhardt:2014gca}.

As we take $\phi_b\ra \infty$, we arrive at the semiclassical limit. Unlike higher dimensional, in JT gravity, there are two levels of semiclassical limit. We first take $S_0\ra\infty$ to restrict ourselves to disk topology and eliminate non-perturbative baby universes with probabilities of $O(e^{-S_0})$. In this case, $\phi_b$ becomes the effective $1/G_N$ analogous to higher dimensions, which controls the perturbative fluctuations of gravity. When we further take $\phi_b\sim 1/G_N\ra \infty$, the Euclidean path integral of JT gravity is dominated by the saddle. Solving the saddle leads to a fixed spacetime background and all geometric notations emerge. It follows that the theory reduces to matters propagating in a fixed curved spacetime background. As we will show shortly, the geometric notation of the causal wedge becomes {\it emergent} under this limit. At the same time, the entanglement wedge of the left or right boundary is endowed with a geometric meaning and is bounded by the RT surface, which in JT gravity corresponds to the global minimal dilaton point. From the geometric picture, we will see that the emergent causal wedge is nested inside the entanglement wedge as expected \cite{Headrick:2014cta, Wall:2012uf}.

The states defined in Section \ref{sec:2.1} are good to show the properties of the von Neumann algebra, but are not convenient to discuss the semiclassical limit. Instead, we will consider a different set of states by inserting primaries at different locations on the lower Euclidean semicircle (see Figure \ref{fig:pets}). As we consider all possible insertions of primaries, these states span the same Hilbert space. These states are called partially entangled thermal state (PETS), which was first studied in \cite{Goel:2018ubv}. The simplest PETS state has one heavy operator insertion
\begin{equation}
\bra{\Psi}\equiv\bra{i;\b_L/2,\b_R/2}=e^{-\b_{L}H_{L}/2-\b_{R}H_{R}/2}\mO_i\bra{v}\bra 0\label{eq:1}
\end{equation}
where $\bra v$ is the ground state of the CFT and $\bra{0}$ is the $\b=0$ Hartle-Hawking state. If we have a UV completion of JT with matter on a two-sided system, e.g. two identical SYK model, the state $\bra{v}\bra{0}$ could be understood as the EPR state of the two-sided system with maximal entanglement entropy. Acting on this state, $\mO_i$ could be chosen as either $\mO_{i,L}$ or $\mO_{i,R}$. The conformal weight of the heavy operator $\mO_i$ is $\mu\sim O(\phi_b)$. A generic PETS with multiple heavy operators insertion is pictorially represented in Figure \ref{fig:pets}. 

Even though the Hilbert space is non-factorized in type II von Neumann algebra, we can still define reduced density matrix $\r_{L,R}\in\mA_{L,R}$ for either left or right side. Using the pictorial representation in Figure \ref{fig:1a} for PETS, we can easily find out the reduced density matrix for one side by conjugating the state and connecting the Euclidean path on the other side. For example, for $\bra{\Psi}$ defined in \eqref{eq:1}, the right reduced density matrix is
\be 
\r_R=e^{-\b_RH_R/2}\mO_{i,R}e^{-\b_L H_R}\mO_{i,R}e^{-\b_RH_R/2} \label{eq:rhoR}
\ee
where we assume $\mO_{i,R}$ is hermitian. We pictorially represent $\r_R$ in Figure \ref{pic-rhoR}. Using the Euclidean path integral, it is straightforward to check that \eqref{eq:rhoR} obeys
\be  
\avg{\Psi|a_R|\Psi}=\Tr \r_R a_R,\quad \forall a_R\in \mA_R
\ee 
with the trace defined in \eqref{eq:tr}.

\begin{figure}
\begin{centering}
\subfloat[]{\begin{centering}
\includegraphics[height=2cm]{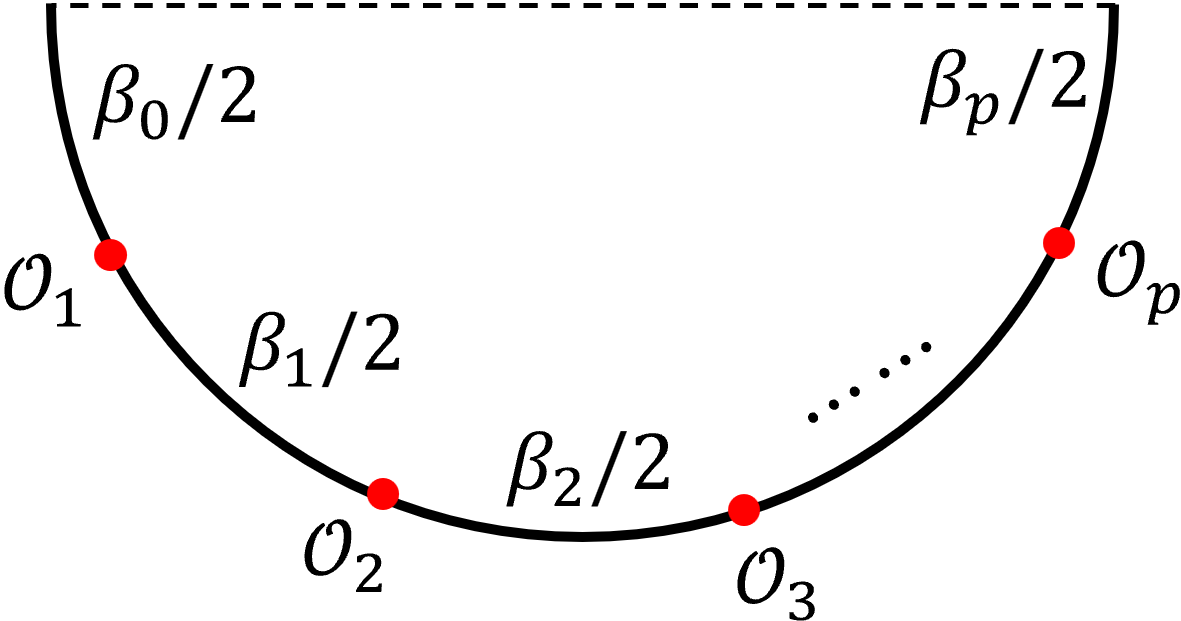}\label{fig:pets}
\par\end{centering}

}\subfloat[]{\begin{centering}
\includegraphics[height=2.4cm]{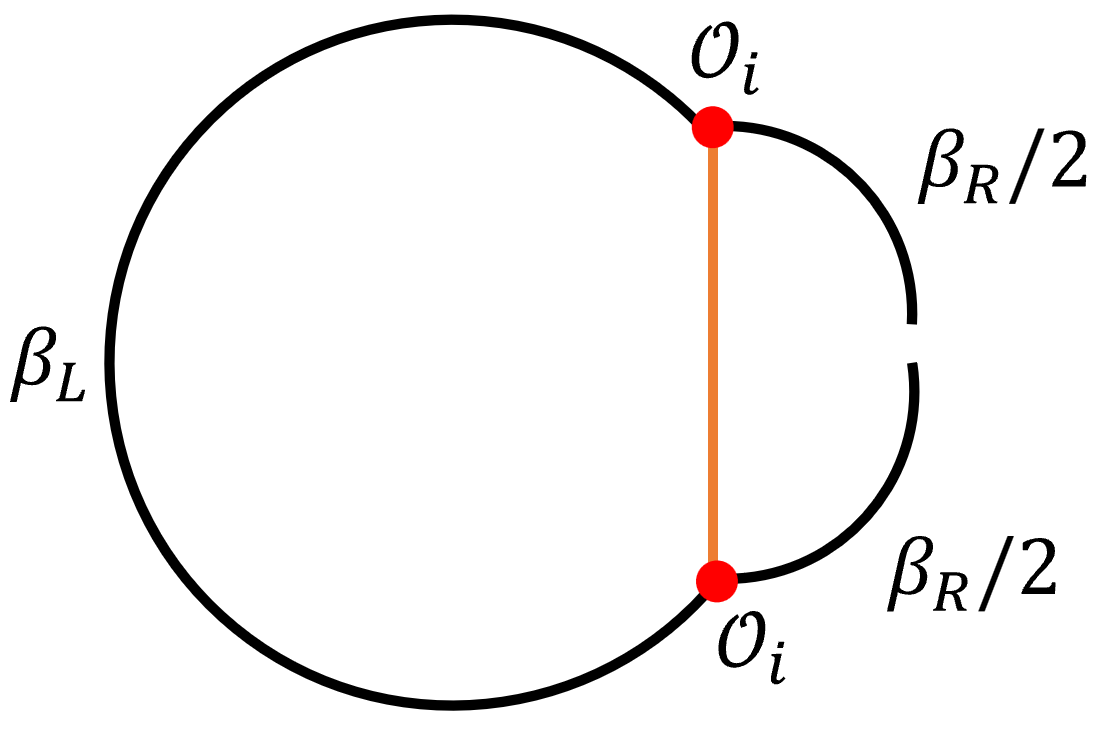}\label{pic-rhoR}
\par\end{centering}
}\subfloat[$\mu<\mu_{*}$\label{fig:1c}]{\begin{centering}
\includegraphics[width=3cm]{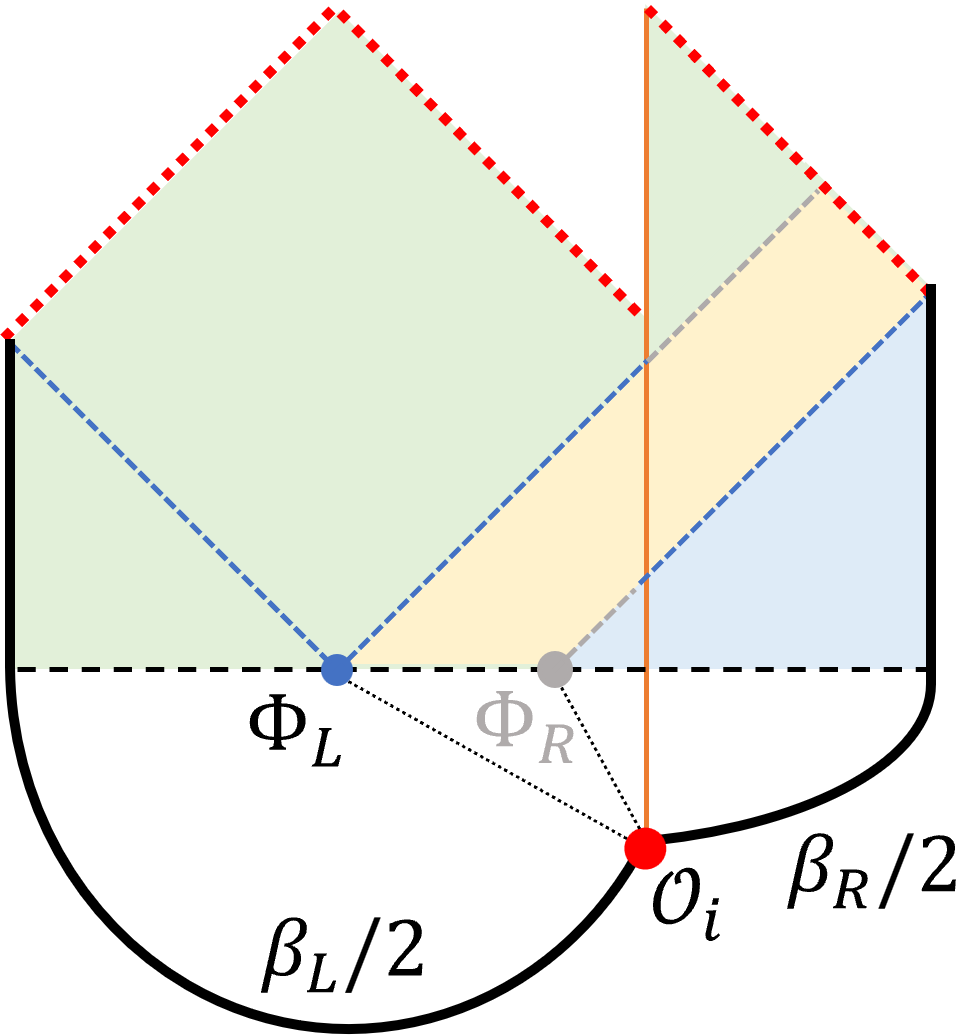}
\par\end{centering}
}\subfloat[$\mu>\mu_{*}$\label{fig:1d}]{\begin{centering}
\includegraphics[width=3cm]{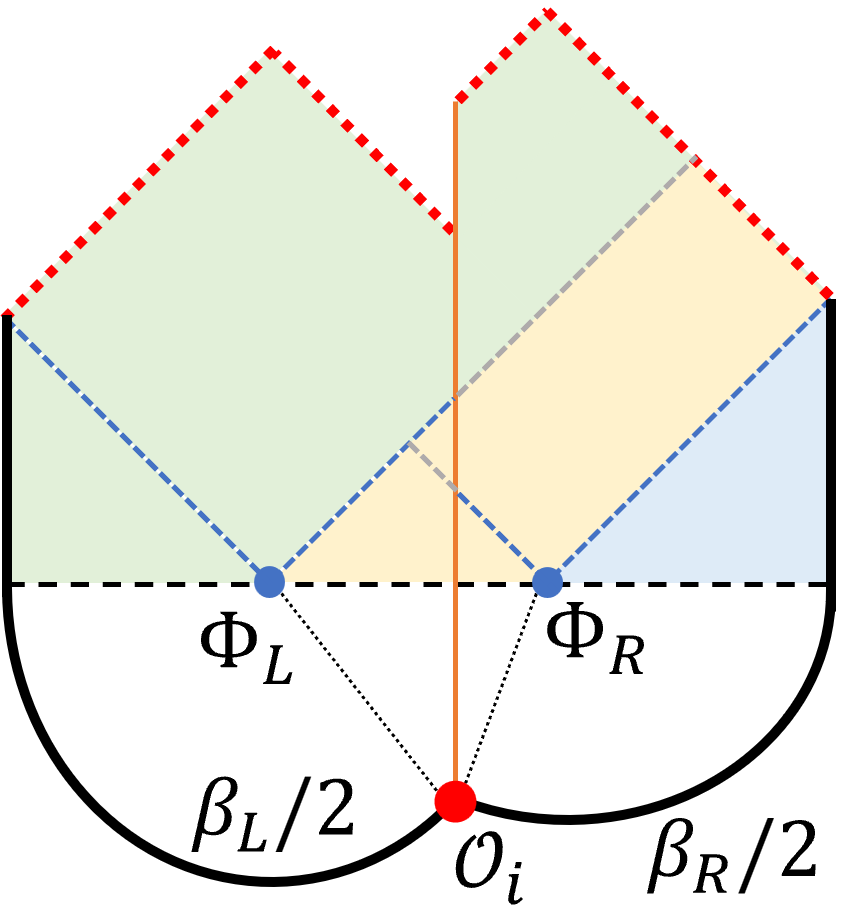}
\par\end{centering}
}
\par\end{centering} 
\caption{(a) A generic PETS with $p$ primaries inserted on the Euclidean path. (b) The right reduced density matrix $\r_R$ for state $\bra{\Psi}$. The orange line connecting two $\mO_i$ represents the geodesic that back reacts on the geometry. (c) and (d) are the Lorentzian spacetime for $\bra{\Psi}$ for $\mu<\mu_*$ and $\mu>\mu_*$ respectively. In both pictures, we assume $\b_{L}>\b_{R}$ and the global minimum of the dilaton profile is always at $\Phi_{L}$. The red dashed line
is when $\Phi=-\Phi_{L,R}$, which corresponds to a cutoff at inner
horizons. The orange line is the geodesic of the heavy matter, along
which the dilaton profiles on both sides are glued continuously (but
not smoothly). We have chosen the inertial frame of this geodesic
such that it is a vertical straight line. For the right system, the yellow region is the entanglement
wedge and the blue region is the causal wedge. For $\mu<\mu_{*}$
there is only one local minimum of dilaton at $\Phi_{L}$ (the invisible
``local minimum'' $\Phi_{R}$ is labeled gray) and there is no python's
lunch. For $\mu>\mu_{*}$ there are two local minima of dilaton at
$\Phi_{L,R}$ and there is a python's lunch between them with a bulge
at the orange geodesic of the heavy matter.}
\end{figure}

It has been studied in \cite{Goel:2018ubv} that for this state there is a fixed spacetime geometry in semiclassical limit $\phi_b\ra\infty$. Due to the backreaction of heavy operator $\mO_i$, the Euclidean
geometry is the gluing of two EAdS disks along the geodesic connecting two
$\mO_i$. For each disk, the origin is the minimum of dilaton $\Phi_{L,R}$
on that disk. If $\b_{L}>\b_{R}$, we have $\Phi_{L}<\Phi_{R}$
and vice versa. Continuing to Lorentzian signature at the Cauchy slice connecting these two minima $\Phi_{L,R}$, we will have a spacetime with a long wormhole connecting two AdS boundaries. There are a few cases depending on the parameters. Without loss of generality, let us assume $\b_{L}>\b_{R}$. In this case, the right causal wedge is nested inside
the right entanglement wedge. On the other hand, the left causal wedge
is always equal to the left entanglement wedge. In other words, the global minimal surface is at $\Phi_{L}$. If $\mu<\mu_{*}$ for some critical $\mu_*$, there is only one visible
local minimal $\Phi=\Phi_L$ from both boundaries because the other one is hidden
behind the geodesic connecting two $\mO_i$'s. If $\mu>\mu_{*}$, there
are two visible local minima $\Phi_{L,R}$, one from each boundary and the
global minimum is $\min(\Phi_{L},\Phi_{R})=\Phi_L$. By tuning $\mu$ over
the critical value $\mu_{*}$ from below, a python's lunch \cite{Brown:2019rox} emerges, which means that the dilaton value (or the transverse area in higher dimension) from $\Phi_R$ to $\Phi_L$ is not monotonically decreasing. We summarize these conclusions in Figure
\ref{fig:1c} and \ref{fig:1d}.

For a generic PETS with multiple heavy operators inserted, the semiclassical limit is subtle because it depends on how the OPE coefficient $C_{ijk}$ scales with $\phi_b$. This is an essential difference between JT with matter theory and higher dimensional holographic theories, such as $\mN=4$ SYM. In the higher dimensional holographic theories, the CFT data is given together with the parameter $N$ that controls the semiclassical limit. In particular, the gravity degrees of freedom emerge from the large $N$ limit of the boundary theory. On the other hand, the JT with matter theory is well-defined for arbitrary unitary CFT as long as it couples with JT in the way described in Section \ref{sec:2.1}. In this construction, the gravity degrees of freedom is independently defined from the CFT matter part. One may naturally expect that not all unitary CFT allows a bulk effective field theory description in the semiclassical limit. To have a sensible semiclassical limit, in this paper, we will make the following assumptions about the CFT data:
\begin{enumerate}
    \item All primary operators in the CFT can be separated into two categories: heavy operators $\mO_i$ with conformal dimension $\mu\sim O(\phi_b)$ and light operators $\mO^l_i$ with conformal dimension $\D\sim O(1)$ in $\phi_b\ra \infty$ limit. To distinguish these two types of operators, we take the notation of $\mu$ to represent the conformal dimension of a heavy operator and the notation of $\D$ to represent the conformal dimension of a light operator.
    \item For heavy operators, we assume there is a set $\mO_i$ analogous to the ``single trace" operators in higher dimensional holographic theories, which become generalized free fields in the strict $\phi_b\ra\infty$ limit. This is physically reasonable because we should not expect tunneling or decay of spacetime in the strict $G_N\ra 0$ limit.\footnote{In principle, this assumption of decoupling between heavy and light operators in the semiclassical limit could be made weaker, but we need to specify how they couple to each other with appropriate scaling of $\phi_b$. We thank Edward Witten for discussing this point.}
    \item For light operators, we assume they decouple from the ``single trace" heavy operators, but light operators could have nontrivial light-light-light OPE coefficients $C_{l_1l_2l_3}\sim O(1)$ in  $\phi_b\ra \infty$ limit.
    \item To make above two assumptions consistent with CFT, we need to impose that all other heavy operators are analogous to ``multi-trace" operators $[\mO_i\mO_j\cdots]$, which only have nontrivial OPE coefficients $C_{\mO_1 \mO_2 [\mO_1\mO_2]}\neq 0$ for at least one of $\mO_{1,2}$ is heavy such that the crossing symmetries are obeyed.\footnote{To be more precise, for the four-point function of ``single trace" heavy operators, we requires $C_{h_1h_2[h_1h_2]}\neq 0$ to obey the crossing symmetry of generalized free fields; for the four-point function of heavy-heavy-light-light, which factorizes as the product of two-point function of heavy and light operators, we need $C_{hl[hl]}\neq 0$ to obey the crossing symmetry for this factorization.} On the other hand, any OPE coefficients involving only ``single trace" heavy operators should vanish $C_{h_1h_2l}=C_{l_1l_2h}=C_{h_1h_2h_3}=0$.
\end{enumerate}
The assumption of ``single trace" heavy operators to form a generalized free field theory is crucial and leads to a fixed spacetime background for each PETS in the semiclassical limit. On the other hand, the decoupling between heavy and light operators leads to an emergent closed algebra for light operators in the semiclassical limit. One could impose a stronger but simpler assumption by starting with two decoupled CFTs (the heavy and light) even for finite $\phi_b$. However, the more technical and weaker assumptions above for the CFT data are analogous to a large $N$ holographic theory in higher dimensions. In particular, the light fields form a close algebra only in $\phi_b\ra\infty$ limit is analogous to the statement of single trace operators forming an emergent algebra only in $N\ra \infty$ limit \cite{Leutheusser:2021frk}.

Let us consider a generic PETS with $p$ insertions of ``single trace" heavy
operators in Figure \ref{fig:pets}. Let us assume the spectrum of heavy ``single trace" operators is dense and a generic PETS should have all $p$ different primaries with conformal weight $\mu_{i}$. For simplicity, we assume they are Hermitian. Note that the insertion of ``multi-trace" heavy operators corresponds to the limit of two close ``single trace" heavy operators (or a ``single trace" heavy operator and a light operator), and thus we do not consider them separately. From now on, we will only consider the insertion of ``single trace" heavy operators and briefly call them ``heavy operators" unless specified. This state can explicitly written as
\begin{equation}
\bra{\Psi_{p}}=e^{-\b_{0}H/2}\mO_{1}e^{-\b_{1}H/2}\cdots \mO_{p-1}e^{-\b_{p-1}H/2}\mO_{p}e^{-\b_{p}H/2}\bra v \bra 0 \label{eq:pets}
\end{equation}
where all operators (and the Hamiltonian) are from $\mA_R$ and we suppressed the subscript $R$. It follows that the right reduced density matrix is
\begin{equation}
\r_{p}=e^{-\b_{p}H/2}\mO_{p}e^{-\b_{p-1}H/2}\cdots \mO_{1}e^{-\b_{0}H}\mO_{1}e^{-\b_{1}H/2}\cdots \mO_{p}e^{-\b_{p}H/2} \label{eq:rhopets}
\end{equation}
Since these heavy operators are generalized free fields, the CFT contribution to the the trace of $\rho_p$ is a product of $p$ two-point functions. As shown in Figure \ref{fig:Partition-function-of}, the full quantum
gravity expression for the partition function $Z_{p}$ for this density
matrix is given by 
\begin{equation}
Z_{p}=\Tr \r_p=\avg{\Psi_p|\Psi_p}=\int\left(\prod_{i=1}^{p}d\l_{i}\right)\Psi_{\b_{0}}(\l_{1})(4e^{-\l_{1}})^{\mu_{1}}\Psi_{\b_{1}}(\l_{1},\l_{2})\cdots (4e^{-\l_{p}})^{\mu_{p}}\Psi_{\b_{p}}(\l_{p})\label{eq:4}
\end{equation}
where the wave functions are given by \eqref{hh2} and
\begin{align}
\Psi_{\b}(\l_{1},\l_{2}) & =e^{-\b k^{2}/(4\phi_{b})}\r(k)2K_{2ik}(4e^{-\l_{1}/2})2K_{2ik}(4e^{-\l_{2}/2})
\end{align}
As explained by \cite{Goel:2018ubv,Mertens:2017mtv}, we can use an
integral representation of Bessel function $K$ 
\begin{equation}
K_{\nu}(z)=\f 1{2i}\int_{-i\infty}^{+i\infty}e^{-z\cos\a}e^{i\nu\a}d\a\label{eq:7}
\end{equation}
to get a more geometry-related expression for the partition function.
Taking this representation into (\ref{eq:4}), we see that for each
$e^{-\mu_{i}\l_{i}}$ there are two additional variables $\g_{i}$
and $\a_{i}$ from the two Bessel functions involving $\l_{i}$ respectively.
Integrating over $\l_{i}$, we have
\begin{equation}
Z_{p}=\int\left(\prod_{i=0}^{p}dk_{i}\right)\prod_{i=1}^{p}d\a_{i}d\g_{i}e^{-I_{p}(k_{i},\b_{i};\mu_{i};\a_{i},\g_{i})}
\end{equation}
where the action $I_p$ is given by
\begin{equation}
I_{p}=\sum_{i=0}^{p}\left(\f{k_{i}^{2}}{4\phi_{b}}\b_{i}-\log\r(k_{i})\right)+2\sum_{i=1}^{p+1}k_{i-1}(\a_{i-1}+\g_{i})+\mu_{i}\log(\cos\a_{i}+\cos\g_{i})+I_{0}(\mu_{i})
\end{equation}
where $\a_{0}=\g_{p+1}=0$ and $I_{0}(\mu_{i})$ is a constant.

\begin{figure}
\begin{centering}
\includegraphics[width=5cm]{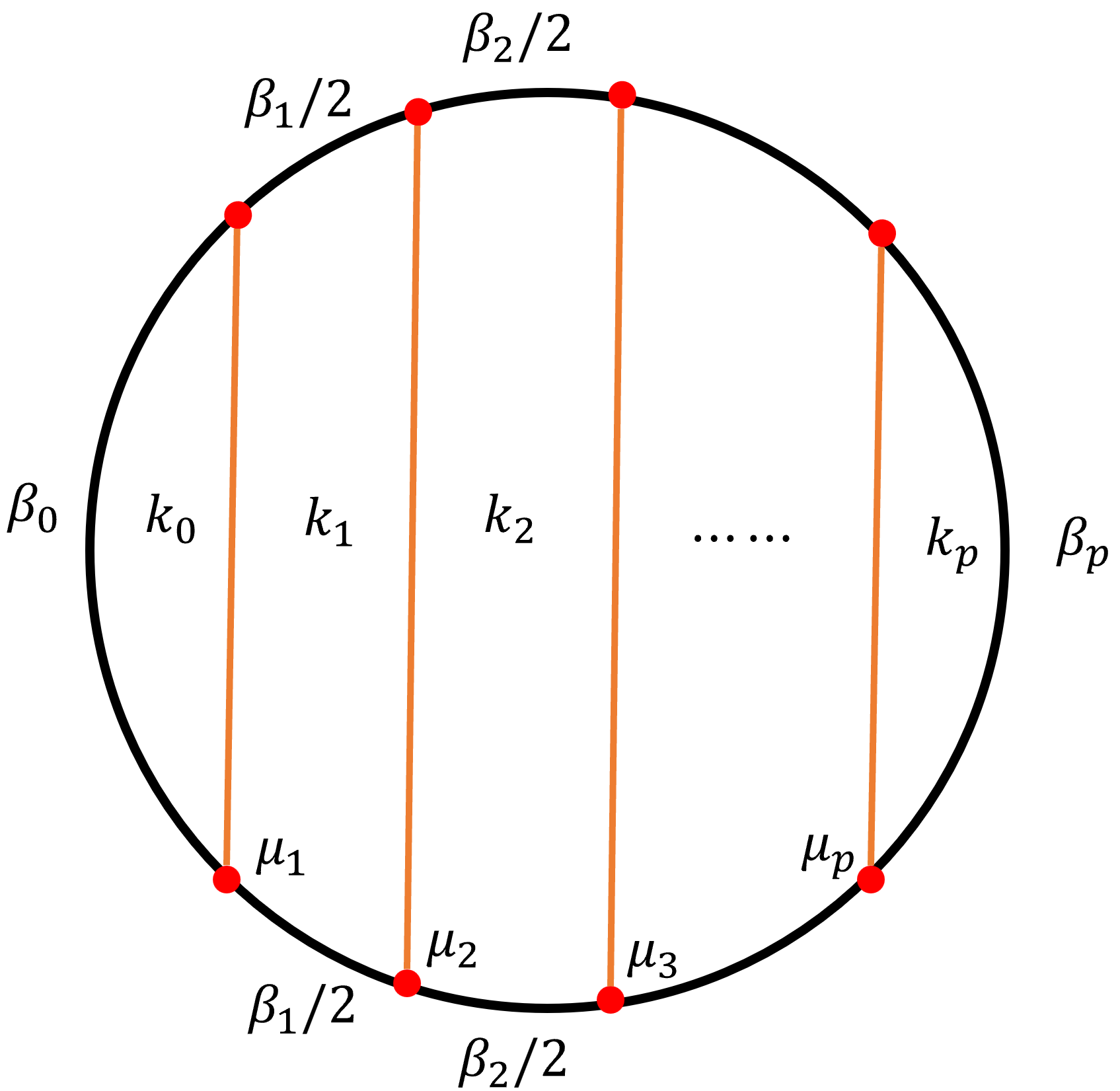}
\par\end{centering}
\caption{Partition function of the density matrix $\r_{p}$.} \label{fig:Partition-function-of}
\end{figure}

To consider the semiclassical limit, we rescale $\mu_{i}\ra 2\mu_{i}\phi_{b},$$k_{i}\ra 2k_{i}\phi_{b}$
and take $\phi_{b}\ra\infty$ limit. Under this limit we have
\begin{equation}
I_{p}/(2\phi_{b})\app\sum_{i=0}^{p}\left(\f{k_{i}^{2}}2\b_{i}-2\pi k_{i}\right)+2\sum_{i=1}^{p+1}k_{i-1}(\a_{i-1}+\g_{i})+\mu_{i}\log(\cos\a_{i}+\cos\g_{i}) \label{eq:2.40}
\end{equation}
where we have dropped the irrelevant constant $I_{0}$. Since $\phi_{b}$
is large, the integral over $k_{i}$, $\a_{i}$ and $\g_{i}$ is approximated
by their saddles. The variation of $\a_{i}$ and $\g_{i}$ gives
\begin{equation}
\f{k_{i-1}}{\sin\g_{i}}=\f{k_{i}}{\sin\a_{i}}=\f{\mu_{i}}{\cos\a_{i}+\cos\g_{i}}\label{eq:11-1}
\end{equation}
The variation of $k_{i}$ gives
\begin{equation}
k_{i}\b_{i}=2(\pi-\a_{i}-\g_{i+1})\label{eq:12-1}
\end{equation}
As shown in \cite{Goel:2018ubv}, the on-shell parameters $\a_{i}$
and $\g_{i}$ have geometric meaning, which indicates an equivalent
geometric computation in the semiclassical limit, as we will explain in the next subsection. This geometric computation
will be very helpful for us to understand the modular flow in the
semiclassical limit. 

\subsection{The geometric formalism in semiclassical limit} \label{sec:2.3}

The geometric formalism in the semiclassical limit treats each boundary as a circle embedded in
EAdS$_{2}$ space with a $SL(2)$ charge $Q$, and the bulk as the
disk inside the circle. Each contracted pair of heavy operators $\mO$
are connected by a geodesic with charge $Q_{\mO}$. For two joint boundaries
connected by $\mO$, their charges need to be conserved, which fixes
the Euclidean geometry. In this sense, $2\a_{i}$ and $2\g_{i}$ are
opening angles of two joint disks to the geodesic of $\mO$ in between.
For example, the geometry for the partition function $Z_{p}$ is shown
in Figure \ref{fig:Geometry-of-boundary.}, where the black curve
is the overall boundary that are connected by many arcs of different
circles.

\begin{figure}
\begin{centering}
\includegraphics[width=9cm]{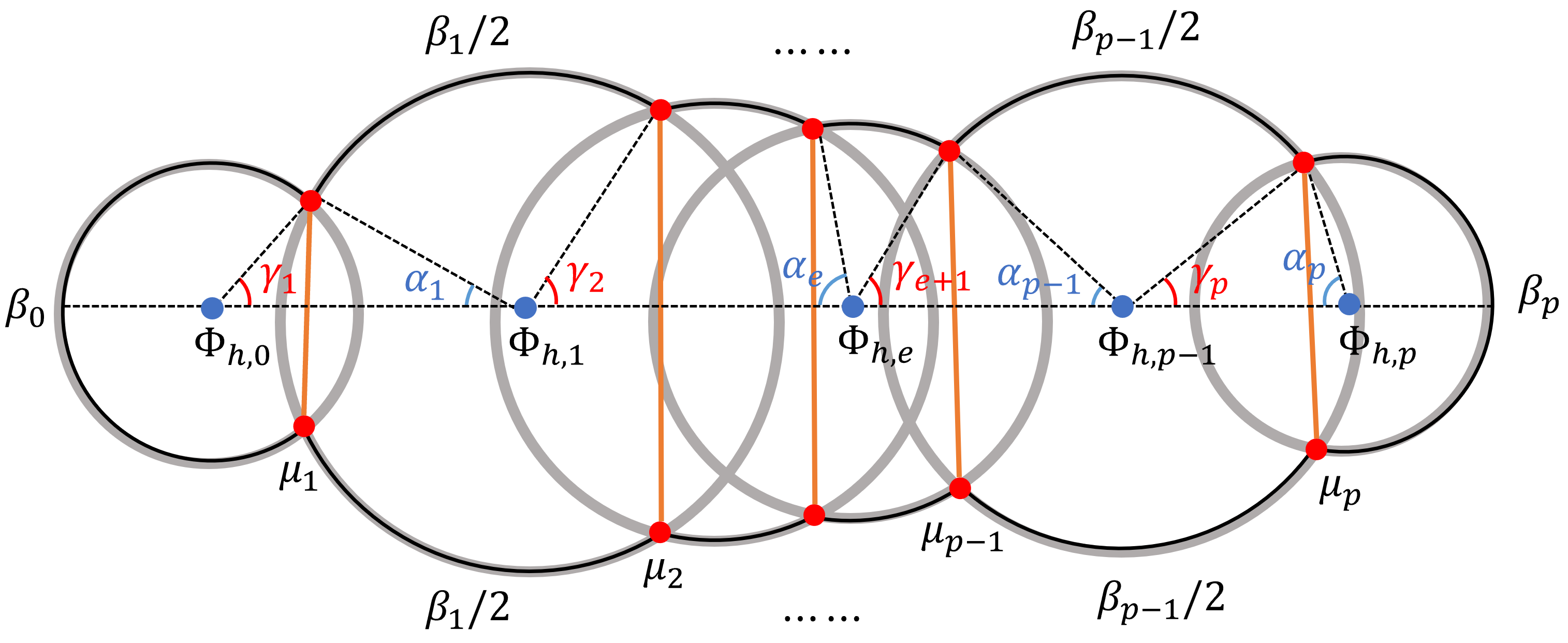}
\par\end{centering} 
\caption{Geometry of the boundary for the partition function $Z_p$.}\label{fig:Geometry-of-boundary.}
\end{figure}

The EAdS$_{2}$ can be parameterized as a hypersurface in three dimension
\begin{equation}
-Y_{-1}^{2}+Y_{0}^{2}+Y_{1}^{2}=-1,\quad ds^{2}=-dY_{-1}^{2}+dY_{0}^{2}+dY_{1}^{2}
\end{equation}
This metric has $SO(2,1)\simeq SL(2)$ isometry generated by 
\begin{equation}
Y_{\mu}\ra M_{i\nu}^{\mu}Y_{\mu},\qquad i=1,2,3
\end{equation}
where 
\begin{equation}
M_{1}(x)=\begin{pmatrix}\cosh x & \sinh x\\
\sinh x & \cosh x\\
 &  & 1
\end{pmatrix},\quad M_{2}(x)=\begin{pmatrix}\cosh x &  & \sinh x\\
 & 1\\
\sinh x &  & \cosh x
\end{pmatrix},\quad M_{3}(x)=\begin{pmatrix}1\\
 & \cos x & \sin x\\
 & -\sin x & \cos x
\end{pmatrix}
\end{equation}
We can choose the coordinate 
\begin{equation}
Y_{-1}=\cosh\r,\quad Y_{0}=\sinh\r\sin\tau,\quad Y_{1}=\sinh\r\cos\tau,\quad\tau\in[0,2\pi],\r\in[0,\infty)\label{eq:8}
\end{equation}
and the metric becomes 
\begin{equation}
ds^{2}=d\r^{2}+\sinh^{2}\r d\tau^{2} \label{eq:2.47}
\end{equation}
The EAdS boundary is a circle with large $\r_{\e}$ and $\tau\in[0,2\pi]$
with boundary condition
\begin{equation}
2\pi\e\sinh\r_{\e}=b,\quad\e\ra0
\end{equation}
This is equivalent to setting the boundary metric as $du_E^{2}/\e^{2}$
for periodicity $u_E\sim u_E+b$. This circle can be equivalently written
as 
\begin{equation}
Q^{\mu}Y_{\mu}=\f 1{\e},\quad Q^{\mu}=\{\f 1{\e\cosh\r_{\e}},0,0\}\label{eq:11}
\end{equation}
where $Q^{\mu}$ is the $SL(2)$ charge of this circle. The dilaton
profile inside each EAdS boundary circle of charge $Q$ is given by
\begin{equation}
\Phi(Y)=\phi_{b}Q\cdot Y\label{eq:20}
\end{equation}
For the circle centered at the origin, the charge is given by (\ref{eq:11})
and we have the dilaton profile as $\Phi=\f{2\pi\phi_{b}}b\cosh\r$
with horizon area $\Phi_{h}=2\pi\phi_{b}/b$. 

Each gray circle in Figure \ref{fig:Geometry-of-boundary.} is an
EAdS$_{2}$ boundary but with different charges 
\begin{equation}
Q_{i}^{\mu}=\{\f 1{\e\cosh\r_{i,\e}},0,0\},\quad2\pi\e\sinh\r_{i,\e}=b_{i}\label{eq:12}
\end{equation}
when we put the center of each circle at the origin of EAdS$_{2}$.
For Figure \ref{fig:Geometry-of-boundary.}, we can choose the circle
with minimal dilaton horizon area at the origin of EAdS$_{2}$. Suppose
this is the $e$-th circle with charge $Q_{e}$ given by (\ref{eq:12}).
Other circles are moved away from the origin by an appropriate $SL(2)$
transformation
\begin{align}
Y^{i} & (\tau)=M_{2}(z_{i})\cdot Y(\r_{i,\e},\tau)\label{eq:22-1}
\end{align}
where $Y^{i}$ is the coordinate of the $i$-th circle and here we
only have $M_{2}$ transformation because all circles in Figure \ref{fig:Geometry-of-boundary.}
are aligned along a horizontal line. The parameters $z_{i}$ are positive
for $i>e$, negative for $i<e$ and equal to zero for $i=e$. Geometrically,
they are the geodesic distance $D_{i}$ between the center of the
$i$-th circle to the origin
\begin{equation}
\cosh D_{i}\equiv-Y^{(i)}\cdot Y^{(e)}|_{\r=0}=\cosh z_{i}\iff D_{i}=z_{i}
\end{equation}

The circles (\ref{eq:22-1}) can be equivalently characterized by
the charges 
\begin{equation}
Q_{i}=\{\f 1{\e\cosh\r_{i,\e}},0,0\}\cdot M_{2}(-z_{i})
\end{equation}
The dilaton profile and horizon area (the minimal value of the dilaton) for each circle is 
\begin{equation}
\Phi_{i}(Y)=\phi_{b}Q_{i}\cdot Y\implies\Phi_{h,i}=\f{2\pi \phi_b\cosh z_{i}}{b_{i}}
\end{equation}
Since the total arc opening angle of each circle is $2(\pi-\a_{i}-\g_{i+1})$,
we have
\begin{equation}
2(\pi-\a_{i}-\g_{i+1})=2\pi\b_{i}/b_{i}
\end{equation}
Comparing with (\ref{eq:12-1}), we should identify
\begin{equation}
k_{i}=2\pi/b_{i}\label{eq:27-1}
\end{equation}
and the horizon areas now becomes $\Phi_{h,i}=\phi_b k_{i}\cosh z_{i}$. 

To show that this geometric formalism is valid, we need to justify
the equations (\ref{eq:11-1}). They are from the charge conservation
and joint constraints for circles. The charge conservation leads to
\begin{equation}
(Q_{i}-Q_{i-1})\cdot(Q_{i}-Q_{i-1})=\mu_{i}^{2}\label{eq:28}
\end{equation}
The joint conditions for two neighboring circles are
\begin{align}
Y^{i-1}(\g_{i}) & =Y^{i}(\pi-\a_{i})\label{eq:29}
\end{align}
Expanding (\ref{eq:28}), we have
\begin{equation}
\mu_{i}^{2}=2k_{i}k_{i-1}\cosh(z_{i}-z_{i-1})-k_{i}^{2}-k_{i-1}^{2}\label{eq:30}
\end{equation}
Expanding (\ref{eq:29}), we have
\begin{equation}
\f{k_{i}}{\sin\a_{i}}=\f{k_{i-1}}{\sin\g_{i}},\quad\cos\a_{i}=\f{k_{i-1}\cosh(z_{i}-z_{i-1})-k_{i}}{k_{i-1}\sinh(z_{i}-z_{i-1})}\label{eq:31}
\end{equation}
One can easily show that (\ref{eq:30}) and (\ref{eq:31}) are equivalent
to (\ref{eq:11-1}).

\subsection{The structure of the semiclassical Hilbert space} \label{sec:2.4}

As we see from the computation above, a generic PETS with all different heavy operator insertions corresponds to a unique fixed geometry of a long wormhole in Figure \ref{fig:Geometry-of-boundary.} that is determined by the saddle equations.

There exists more special PETS with a few identical heavy operators. For example, we may have the following state with two identical insertions
\be 
\bra{\Phi_2}=e^{-\b_L H/2}\mO e^{-\b_0 H} \mO e^{-\b_R H/2}\bra{v}\bra{0}
\ee 
The norm of this state includes different contributions from three Wick contractions of four $\mO$ in the reduced density matrix $\r_R$ (see Figure \ref{fig:phi2}). Since we are working in $\phi_b\ra \infty$ limit, each contraction in Figure \ref{fig:phi2} evaluates as $e^{-\phi_b I}$ for some on-shell action $I$. Note that the last diagram in Figure \ref{fig:phi2}  with crossing Wick contraction is more involved than the first two because it contains $sl(2)$ 6j symbol \cite{Mertens:2017mtv,Suh:2020lco} as we show in Appendix \ref{app:1} using a straightforward Euclidean path integral. 

\begin{figure}
\begin{centering}
\includegraphics[width=10cm]{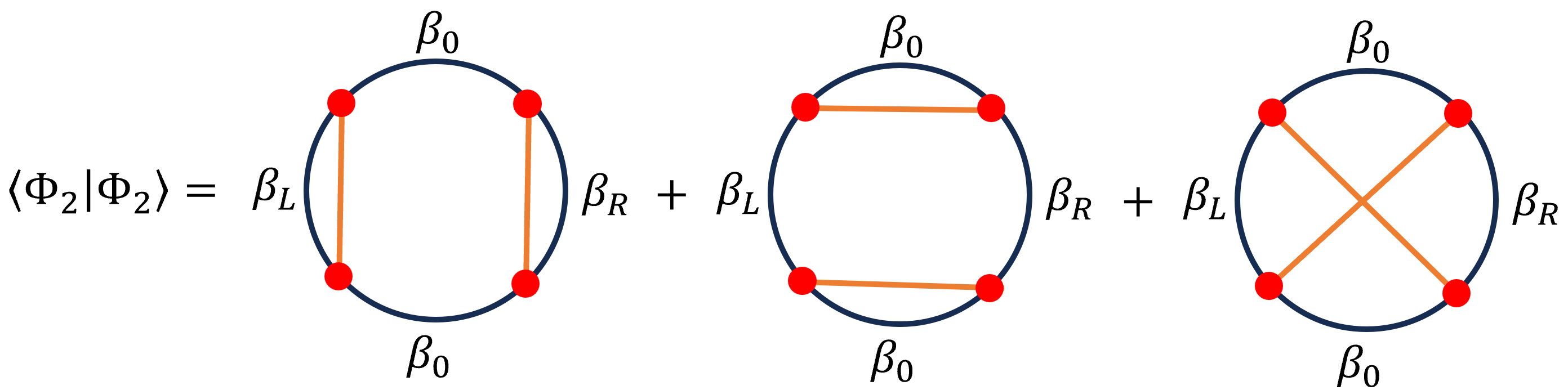}
\par\end{centering} 
\caption{The norm of state $\bra{\Phi_2}$}\label{fig:phi2}
\end{figure}

For generic $\b_{L,R,0}$, this exponential factor will pick out the dominant saddle and lead to a dominant geometry corresponding to this state. It has been shown in \cite{Mertens:2017mtv} that the contraction with crossing (the last diagram in Figure \ref{fig:phi2}) is subleading to the non-crossing diagrams for a four-point function because the former can be regarded as OTOC and the latter is TOC. It is generally true that OTOC is no greater than TOC. In Appendix \ref{app:2}, we show that for a generic Euclidean path integral, the crossing Wick contraction of any two pairs of heavy operators of the same type is always exponentially suppressed relative to a non-crossing Wick contraction in the semiclassical limit. Note that the first two diagrams in Figure \ref{fig:phi2} have different geometries in the semiclassical limit. The first diagram has a long wormhole geometry and one can show that the global minimal dilaton is at the center. The second diagram has a short wormhole similar to a thermofield double state. Depending on the parameters $\b_{L,R,0}$, one of these two diagrams dominates over the other and defines a geometric background dual to $\bra{\Phi_2}$. The detailed analysis is given in Appendix \ref{app:3}.

In the special case with  $\b_L=\b_R=\b_0$, we see that the first two diagrams in Figure \ref{fig:phi2} are identical but with a relative rotation of 90 degrees. Therefore, they evaluate to exactly the same on-shell action though the Lorentzian geometries are completely different. Such a fine-tuned state does not correspond to a unique geometric background, which leads to the ambiguity of defining a bulk field $\p$. Therefore, we may say it does not have a sensible semiclassical limit. We will exclude the fine-tuned states in the rest of this paper.

Though in Appendix \ref{app:2} we excluded the crossing contraction of the same type of heavy operators in the leading saddle, there could be states, whose leading saddle has crossing contractions between different types of heavy operators. For example, consider the state 
\be 
\bra{\Phi_3}=e^{-\b_L H/2}\mO_1 e^{-\b_1 H}  \mO_2 e^{-\b_2 H} \mO_1 e^{-\b_R H/2}\bra{v}\bra{0}
\ee
where $\b_1,\b_2\ll \b_{L,R}$. As the two $\mO_1$ are close to each other on the Euclidean boundary, we may expect the leading saddle of $\avg{\Phi_3|\Phi_3}$ to includes their contractions within the bra and ket states. On the other hand, $\mO_2$ must contract between the bra and ket states, which leads to two crossing with the contractions of $\mO_1$. To analyze the semiclassical limit of this state, we need to study the semiclassical limit of 6j symbols with two different conformal weights.\footnote{In \cite{Jafferis:2022wez} the semiclassical limit of 6j symbol with two identical conformal weight is studied.} This is outside of the scope of this paper and we will only consider the states without crossing contractions between different types of operators.

We may conclude that a non-fine-tuned PETS $\bra{\Psi}$ with only heavy operator insertions and no crossing contractions corresponds to a fixed Euclidean geometry, which is determined by a unique leading saddle point in the semiclassical limit. For each such state, we can consider more states by inserting finite numbers of light operators $\mO_i^l$ on the Euclidean path. As these light operators contribute to the path integral only at $e^{-O(1)}$ order, they do not change the saddle and we can regard them as states with light matters in the same geometric background. Let us denote them as $\bra{\Psi;\mO^l_i\cdots}$ and normalize these states to one by dividing their $e^{-O(\phi_b)}$ norms. These states have nonzero overlap with each other, which is given by the correlation function of the light operators in the Euclidean geometry background $g$ corresponding to $\bra{\Psi}$
\be 
\avg{\Psi;\mO^{l}_i\cdots|\Psi;\mO^{l}_j\cdots}=\avg{\mO^l_i\cdots\mO^l_j\cdots}_g
\ee  
We may consider these states to span a Hilbert space $\mH_g$ labeled by the saddle Euclidean geometry $g$. For two normalized states corresponding to different geometries, their inner product obeys Schwarz inequality
\be  
\avg{\Psi_{g_1}|\Psi_{g_2}}=e^{-\phi_b (I_{12}+(I_1+I_2)/2)}\leq 1, \quad \bra{\Psi_{g_i}}\in\mH_{g_i} \label{eq:innerprod}
\ee 
where $I_{12}$, $I_1$ and $I_2$ are the corresponding on-shell actions of the Euclidean path integral. Since $g_1$ and $g_2$ are different geometries, the action difference must be $\sim O(1)$. In the semiclassical limit, the exponential suppression \eqref{eq:innerprod} vanishes and leads to the orthogonality of two states with distinct geometries. Therefore, the full Hilbert space in the semiclassical limit contains the following direct sum structure
\be 
\mH\supset\oplus_g \mH_g \label{eq:Hib}
\ee 

For a state $\bra{\Psi}\in\mH_g$, the CFT data of the light operators $\mO_{i}^{l}$ on the Euclidean boundary defines a bulk field theory of $\p_i$ in AdS$_2$ regarding $\mO_{i}^{l}$ as the boundary limit of $\p_i$ in the sense of \eqref{eq:2.5}. The mass of bulk fields $\p_i$ is given by the standard relation in AdS/CFT (e.g. conformal dimension $\D_i=\f 1 2 +\sqrt{\f 1 4+m_i^2}$ for scalars) and their couplings are proportional to the OPE coefficient $C_{lll}$. For such state $\bra{\Psi}$, we can consider the Lorentzian continuation along the geodesic connecting the left and right endpoints of $\bra{\Psi}$. This geodesic is the time-reversal symmetric global Cauchy slice $\S_g$ of the Lorentzian spacetime. Therefore, $\mH_g$ can also be identified as the Hilbert space of bulk fields $\p_i$ canonically quantized on $\S_g$. 

\begin{figure}
\begin{centering}
\subfloat[]{\begin{centering}
\includegraphics[width=5cm]{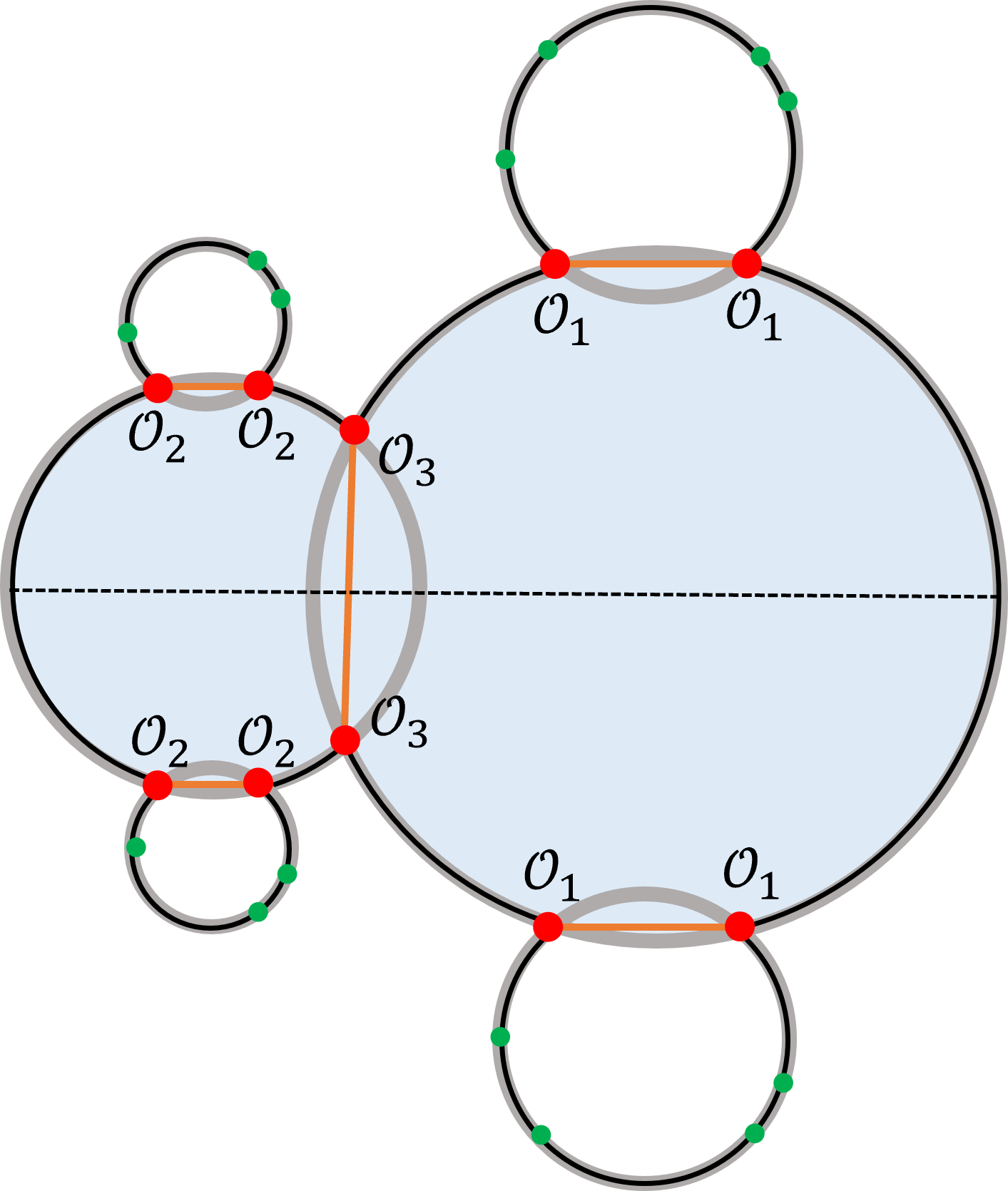}\label{fig:dif1}
\par\end{centering}

}\subfloat[]{\begin{centering}
\includegraphics[width=5cm, trim = 0cm -2.45cm 0cm 0cm]{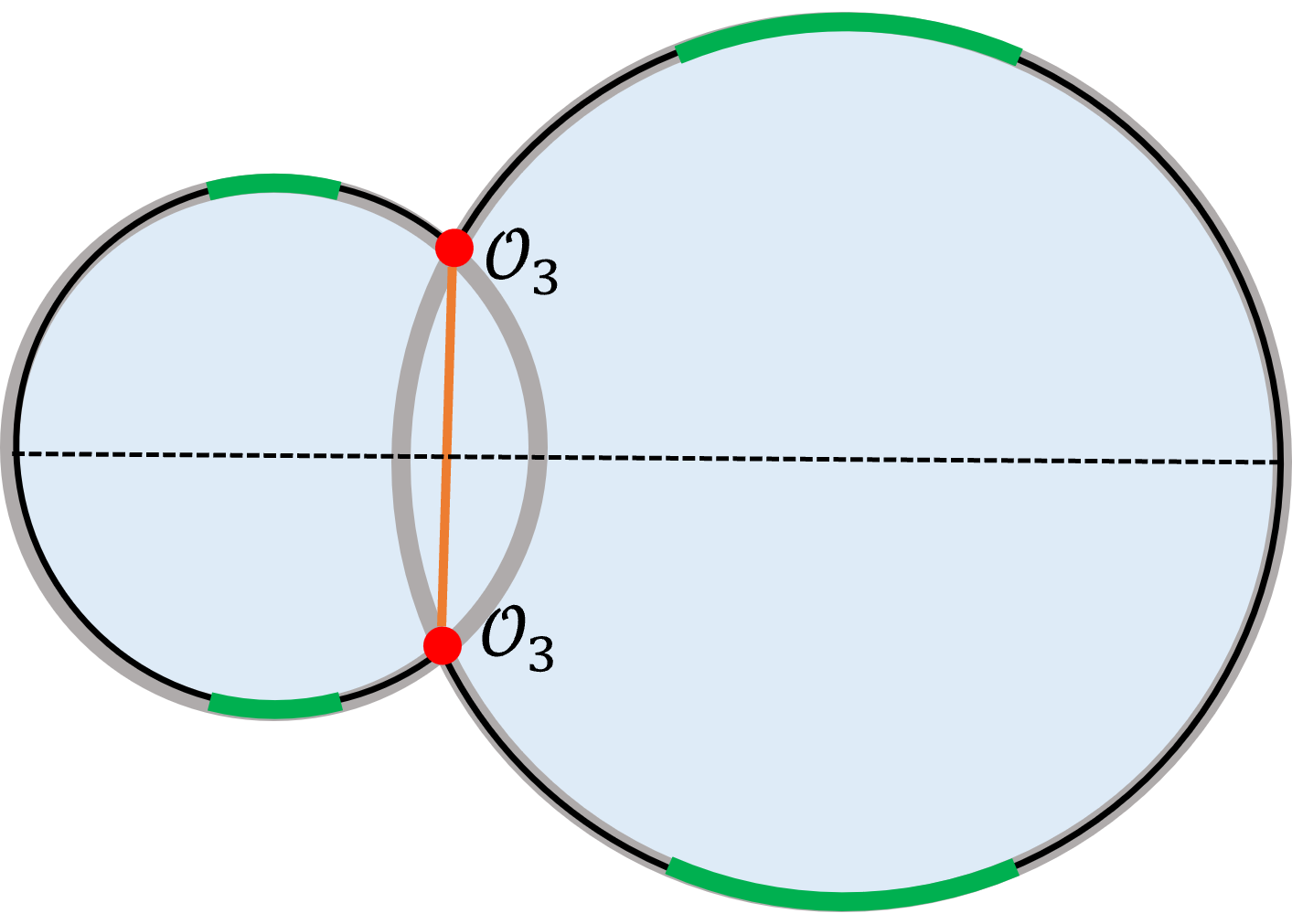}\label{fig:dif2}
\par\end{centering}
}
\par\end{centering} 
\caption{(a) and (b) are two states with different Euclidean geometries, where red dots are heavy operators and green dots/arcs are light operators. The state in (a) has multiple identical heavy operators, which lead to disks above and below the central row disks, which are filled with blue color. However, the central row disks of (a) and (b) are the same. Therefore, when we analytically continue the Euclidean geometry along the horizontal dashed line, they lead to the same Lorentzian geometry. The light operators on the off-central row disks (green dots) in (a) is equivalent to some integral of light operators on the central row disk (green arcs) in (b) if the integral measure is chosen properly. \label{fig:dif}}
\end{figure}

One should note that our Hilbert space is labeled by the Euclidean geometry rather than the Lorentzian geometry. This is because Lorentzian geometry only depends on the central row of EAdS$_2$ disks in the Euclidean saddle geometry in the norm of the state but does not depend on the disks above or below the central row (the blue-shaded disks in Figure \ref{fig:dif}). If the PETS only contains different heavy operators, the saddle geometry only has the central row and the Lorentzian geometry corresponds to the unique Euclidean geometry. However, when a PETS has two identical operators $\mO$, the saddle of the partition function could be the configuration with these two $\mO$'s self-contracted rather than contracting between the bra and ket states. In this case, disks above and below the central row do not affect the Lorentzian continuation, which is along the horizontal dashed line in Figure \ref{fig:dif}. In particular, in the Lorentzian spacetime, bulk fields $\p_i$ cannot probe the existence of the disks below the central row because we can equivalently replace the Euclidean boundary condition of $\p_i$ on these disks by some conditions above the geodesic connecting the two self-contracting $\mO$'s. Figure \ref{fig:dif} gives a pictorial explanation for this point by replacing the light operators on the off-central row disks (green dots) in Figure \ref{fig:dif1} with some integral of light operators on the central row disk (green arcs) in Figure \ref{fig:dif2}. If we choose the integral measure properly, they should lead to the same bulk field configuration of $\p_i$ on the horizontal Cauchy slice.

The argument before shows that a non-fined-tuned PETS without crossing contractions has a unique  Euclidean saddle geometry. However, the inverse statement is more subtle. Let us consider two PETS containing the same number of heavy operators at the same location on the Euclidean path. The only difference is that the heavy operators in the two PETS have $O(1)$ difference in their conformal dimensions. Note that to derive the saddle equation in \eqref{eq:2.40} we need to rescale $\mu\ra 2\mu\phi_b$. Therefore, the $O(1)$ difference in the conformal dimension does not change the saddle and leads to the same geometry. However, since they are different heavy operators, the generalized free field assumption implies zero inner product between these two PETS. Moreover, as we assume the light operators are decoupled from heavy operators, the inner product between these two PETS with any additional light operator insertions always vanishes. Therefore, the Hilbert space $\mH_g$ should be further decomposed as direct sum of $\mH_g^i$ where $i$ indexes the heavy operator variations in the spectrum of width $O(1)$. We will not explore this further in this paper but it is clear that the fine structure of this decomposition of Hilbert space depends on the spectrum of heavy operators.

The spectrum of heavy operators affects another subtle issue of the structure of $\mH_g$. Suppose a PETS $\bra{\Psi}\in\mH_g$ contains two identical heavy operators $\mO$ with saddle geometry with these two $\mO$'s self-contracted. In the language of Figure \ref{fig:dif1}, the norm of this state has disks above and below the central row. We can consider another PETS $\bra{\Psi'}$ by replacing $\mO$ by $\mO'$ with $O(1)$ conformal dimension difference. By the same argument, these two states correspond to the same Euclidean geometry. However, since these operators are self-contracted, the inner product $\avg{\Psi|\Psi'}$ is $O(1)$ because we are free from the selection rule of generalized free field theory. Inserting light operators in the Euclidean path of $\bra\Psi$ and $\bra{\Psi'}$ generate more states. However, there is an isomorphism mapping the states generated from $\bra\Psi$ to $\bra{\Psi'}$ because
\be 
\avg{\Psi;\mO^{l}_i\cdots|\Psi;\mO^{l}_j\cdots}=\avg{\Psi';\mO^{l}_i\cdots|\Psi';\mO^{l}_j\cdots}=\avg{\Psi|\Psi'}\avg{\mO^{l}_i\cdots\mO^{l}_j\cdots}_g
\ee
where $\avg{\mO^l_i\cdots\mO^l_j\cdots}_g$ is the CFT correlation function of light operators on the boundary of the Euclidean geometry $g$. Here the light operators $\mO^l_{i,j}$ are inserted at the same locations in $\bra\Psi$ and $\bra{\Psi'}$. Therefore, the Hilbert space $\mH_g$ has a tensor product structure
\be 
\mH_g\simeq \mH^{\text{heavy}}_g\otimes\tilde{\mH}_g
\ee 
where $\mH^{\text{heavy}}_g$ is spanned by the states with different heavy operators with $O(1)$ conformal weight difference but corresponding to the same Euclidean geometry $g$. Its structure depends on the $O(1)$ width spectrum of heavy operators. The $\tilde \mH_g$ factor is isomorphic to the Hilbert space generated by any state picking from $\mH^{\text{heavy}}_g$. In the bulk viewpoint, it is the Hilbert space of all bulk fields in the geometric background $g$.

In conclusion, the semiclassical Hilbert space $\mH$ of JT gravity with matter theory contains a direct sum over sub-Hilbert spaces $\mH_g$ labeled by the saddle Euclidean geometry $g$ as given in \eqref{eq:Hib}. Depending on the spectrum of heavy operators, for each sub-Hilbert space $\mH_g$, it may be decomposed to more sectors, which share the same Euclidean geometry.

\subsection{Operator algebra in the semiclassical limit} \label{sec:2.5}

As we have seen before, all states in the Hilbert space survive in the semiclassical state and each of them is related to an Euclidean geometry with light matter fields. On the other hand, not all operators in the left and right algebra $\mA_{L,R}$ survive in the semiclassical limit because some of them will have singular matrix elements.

It is straightforward to check that all heavy operators become ill-defined. For example, the matrix element in the second equation of \eqref{26} becomes singular when we take $\mu_i\sim O(\phi_b)\ra\infty$ because $\G^{\mu}_{k,k'}\sim \mu^{2\mu-3/2}$. Moreover, the Hamiltonian $H_{L,R}$ are singular when they act on a PETS. Note that the saddle value of $k \sim O(\phi_b)$ in \eqref{eq:2.40}, which leads to boundary energy $E_{L,R}=k_{L,R}^2/(4\phi_b)\sim O(\phi_b)\ra\infty$ for any PETS in the semiclassical limit. 

As for light operators, it is more subtle to discuss their matrix elements in the basis of \eqref{26} and \eqref{27}, because of the decoupling between ``single trace" heavy and light operators in the semiclassical limit. It is more convenient to confirm the survival of light operators in $\mA_{L,R}$ in each sub-Hilbert space $\mH_g$ because multi-point correlation functions of light operators are finite in any state $\bra{\Psi}\in\mH_g$. Moreover, even though $H_{L,R}$ becomes singular in the semiclassical limit, the light operators $\mO^l_{i,L/R}(u)=e^{iH_{L/R}u}\mO^l_{i,L/R}e^{-iH_{L/R}u}$ with $u\in\R$ in Heisenberg picture are still bounded operators because the unitary $e^{-iH_{L/R}u}$ for real $u$ is  bounded.\footnote{Note that $e^{-iH_{L/R}u}$ for finite $u$ itself may not have a semiclassical limit because of the highly oscillation with frequency $\sim O(\phi_b)$.} Its matrix elements are given by analytic continuation of some correlation functions of light operators in the background $g$.

By our assumption that ``single trace" heavy operators are decoupled from light operators in the semiclassical limit, light operators $\mO^l_{i,L/R}(u)$ for all $u\in \R$ in $\mA_{L/R}$ should generate a closed algebra, which we call $\mA_{L/R}^0$.\footnote{Due to the short distance singularity, to define this algebra property, we need to smear each $\mO^l_{L/R,i}(t)$ appropriately along the Lorentzian time $u$.} Note that these two algebras are {\it emergent} under semiclassical limit, because for finite $\phi_b$ light operators do not close as an algebra. They are universal emergent subalgebras of $\mA_{L/R}$ and act on any sub-Hilbert space $\mH_g$ in the semiclassical limit. This is similar to the single trace operators in a holographic theory, in which they form a universal emergent algebra under $N\ra \infty$ limit independent of the state \cite{Leutheusser:2021frk}. 

In the Lorentzian signature, the algebra $\mA_{L/R}^0$ are the light operators living on the left and right boundary at all times respectively. For the left or right boundary in the geometry of Figure \ref{fig:Geometry-of-boundary.}, the leftmost or rightmost disk with metric \eqref{eq:2.47} after analytic continuation of $\tau\ra it$ becomes
\begin{equation}
ds^{2}=d\r_{L/R}^{2}-\sinh^{2}\r_{L/R} dt_{L/R}^{2}
\end{equation}
where $t_{L/R}$ is the Rindler time for the left or right boundary, which is related to boundary physical time by $u_{{L/R}}(t_{L/R})=(2\pi)^{-1}b_{L/R} t_{L/R}$ with $b_L=b_0$ and $b_R=b_p$, the periodicity of Euclidean boundary physical times for $\bra{\Psi_p}$. By the well-known HKLL reconstruction \cite{Hamilton:2005ju,Hamilton:2006az,Hamilton:2006fh,Hamilton:2007wj}, we can extend a boundary light operator $\mO^l_{i,L/R}(u)$ to a dual bulk field in this causal wedge
\begin{equation}
\p_{i,L/R}(\r_{L/R},t_{L/R})=\int dt K_{\D_i}(\r_{L/R},t_{L/R};t)\mO^l_{i,L/R}(u_{L/R}(t)) \label{eq:hkll}
\end{equation}
where $\D_i$ is the conformal dimension of $\mO^l_{i,L/R}$, and the kernel only depends on the geometry of the causal wedge. This reconstruction serves as a duality between $\mA^0_{L/R}$ and the algebra of bulk fields in the causal wedge $\mA^{CW}_{L/R}$. By the general argument for bulk QFT for a subregion, this emergent algebra $\mA^{CW}_{L/R}\simeq \mA^0_{L/R}$ is type III$_1$.

An immediate limitation is that HKLL does not work beyond the causal wedge. As we see from Figure \ref{fig:Geometry-of-boundary.}, the Lorentzian continuation of a generic PETS $\bra{\Psi}$ corresponds to a long wormhole connecting the left and right boundary. The sub-Hilbert space $\mH_g$ for this geometry consists of generic bulk fields living on the global Cauchy slice, which is clearly beyond the causal wedge. In other words, acting $\mA^0_{L/R}$ on a state in $\mH_g$ does not generate the whole $\mH_g$, and there must be some other operators in $\mA_{L/R}$ survival under the semiclassical limit. 

What is missed? In the next section, we will show that given a reduced density matrix $\r_R$ for a generic PETS, the modular flow of $\mA^0_{R}$ by $\r_R$ extends beyond the causal wedge and reconstruct the full right entanglement wedge, which is the causal diamond between the global minimal dilaton point (i.e. global RT or QES surface in higher dimensions) and the right boundary. The same entanglement wedge construction holds for the left side and altogether generates the bulk field algebra on the global Cauchy slice. Hence, by acting on any state in $\mH_g$, this modular flow extended algebra from both sides generates the whole $\mH_g$.

\section{Modular flow in the semiclassical limit} \label{sec:3}

Let us quickly review a few basic concepts in von Neumann algebra. Given a state $\bra{\Psi}$ in a Hilbert space $\mH$, assume there are two von Neumann algebras $\mA_L$ and $\mA_R$, which are commutant to each other. If $\bra{\Psi}$ cannot be annihilated by any nonzero operator in $\mA_L$ and $\mA_R$, and all operators in $\mA_L$ or $\mA_R$ acting on $\bra{\Psi}$ can generate a dense set in $\mH$, we call $\bra{\Psi}$ a cyclic and separating state. For such a state, Tomita-Takesaki theory states that there exists an antilinear operator $S_\Psi$ such that $S_\Psi a\bra{\Psi}=a^\dagger\bra{\Psi}$ for $a\in\mA_R$, and this operator has a unique polar decomposition $S_\Psi=J_\Psi \D_\Psi^{1/2}$, where $J_\Psi$ is an anti-unitary operator obeying $J_\Psi^2=J_\Psi J_\Psi^\dagger=1$, and $\D_\Psi$ is a hermitian and positive-definite operator. $J_\Psi$ and $\D_\Psi$ are called modular conjugation and modular operator respectively. Similarly, we can define $S_\Psi'$ relative to the commutant algebra $\mA_L$, and find the polar decomposition obeys $J_\Psi'=J_\Psi$ and $\D_\Psi'=\D_\Psi^{-1}$.  There are many nontrivial features of these two operators in the context of algebraic quantum field theory and quantum information theory, and we refer readers to Witten's note \cite{Witten:2018zxz} for a pedagogical review. 

For type III von Neumann algebra, the properties of the modular operator is crucial for the classification though we may not be able to write down the explicit form of $\D_\Psi$. For type I/II von Neumann algebra, thanks to the existence of the density matrix and trace, the modular operator has an explicit form $\D_\Psi=\r_L^{-1} \r_R$, where $\r_{L/R}$ is the reduced density matrix of $\bra{\Psi}$ for $\mA_{L/R}$ respectively. Tomita-Takesaki theory states that the modular flow of $\mA_{R}$ is an automorphism $\D_\Psi^{i s}\mA_R\D_\Psi^{-i s}=\mA_R$ for $s\in\R$. Indeed, this defines a 1-parameter group of modular automorphism $\D_\Psi^{is}a\D_\Psi^{-is}$ for each operator $a\in\mA_R$. This modular automorphism is also conventionally called modular flow. 

In this section, given a generic PETS $\bra{\Psi}$ we will study the modular flow of a right light operator in the semiclassical limit. As the algebra at finite $\phi_b$ is type II$_\infty$, the modular flow of $\mO^l_{j,R}(u)$ can be written in terms of the reduced density matrix
\be 
O^l_{j,R}(s;u)=\D_\Psi^{is}\mO^l_{j,R}(u)\D_\Psi^{-is}=\r_R^{is}\mO^l_{j,R}(u)\r_R^{-is}
\ee
where we used the fact that $\r_L$ commutes with any right operator. It has been argued in \cite{Faulkner:2017vdd} that the modular zero mode, namely integrating $O^l_{j,R}(s;u)$ over all $s\in\R$, approximately
commutes with both left and right bulk algebras after appropriate smearing.\footnote{Rigorously speaking, the modular zero mode is not an operator but a quadratic form because it has matrix elements that make sense, but it does not make sense as an operator, since acting on the vacuum it gives a ``state" that is not square-integrable. We thank Edward Witten for pointing this out to us.} Therefore, it is conjectured to exist an entanglement wedge reconstruction using the modular flow \cite{Jafferis:2015del, Faulkner:2017vdd}. However, as modular flow usually acts in a very nonlocal way except in very limited known cases, it is very hard to show the entanglement wedge reconstruction in general. In the following, we will show that the modular flow for a generic PETS in JT gravity acts geometrically in the semiclassical limit, and suffices to reconstruct the entanglement wedge manifestly.

\subsection{Modular flow by replica trick} \label{sec:3.1}

Let us consider a generic PETS $\bra{\Psi_p}$ in \eqref{eq:pets} with all different heavy operators. The corresponding reduced density matrix is \eqref{eq:rhopets}. As we explained in Section \ref{sec:2.4}, all states in the sub-Hilbert space $\mH_g$ corresponding to $\bra{\Psi_p}$ are spanned by the states with insertion of light operators on the Euclidean path of $\bra{\Psi_p}$. Since the von Neumann algebra is defined to be close under weak operator topology, which means closed in the sense of matrix element, we need to study the matrix element of a modular flowed operator
\be 
\avg{\Psi_p;\mO^l_i\cdots|O^l_{j,R}(s;u)|\Psi_p;\mO^l_k\cdots}
\ee
Due to the OPE expansion of light operators, it is sufficient to just consider the states with at most one light operator insertion. It boils down to the following matrix elements
\be 
\avg{\Psi_p|O^l_{j,R}(s;u)|\Psi_p},\quad \avg{\Psi_p;\mO^l_i|O^l_{j,R}(s;u)|\Psi_p},\quad \avg{\Psi_p;\mO^l_i|O^l_{j,R}(s;u)|\Psi_p;\mO^l_k} \label{eq:3.3}
\ee 
where we are sloppy about the location of the light operators $\mO^l_i$ and $\mO^l_k$.

The modular flow is hard to compute directly in the original real $s$ signature,
and we will use a replica trick \cite{Faulkner:2018faa} for modular flow in JT gravity similar to \cite{Gao:2021tzr} and compute
the following Euclidean version matrix elements
\begin{align}
g_j(n,m,u_E)&=\avg{\Psi_p|\r_R^{n-m-1}\mO^l_{j,R}(u_E)\r_R^m|\Psi_p} \label{eq:me1}\\
g_{ij}(n,m,u_E)&=\avg{\Psi_p;\mO^l_i|\r_R^{n-m-1}\mO^l_{j,R}(u_E)\r_R^m|\Psi_p} \label{eq:me2}\\
g_{ijk}(n,m,u_E)&=\avg{\Psi_p;\mO^l_i|\r_R^{n-m-1}\mO^l_{j,R}(u_E)\r_R^m|\Psi_p;\mO^l_k} \label{eq:me3}
\end{align}
for all integers $n\geq m+1\geq0$. Here we also analytically continued the Lorentzian time to Euclidean time $u\ra-iu_E$ and it is compatible with the integer powers of $\r_R$ because $\r_R$ is in the form of $e^{-\b_p H_R/2}\cdots e^{-\b_p H_R/2}$. In the end, we will analytically
continue $n\ra1$, $m\ra-is$, $u_E\ra i u$. The Euclidean replica matrix elements are much easier to compute because they are just $n$ copies of the Euclidean path integral in $\bra{\Psi_p}$. As $\mO^l$ are light operators, in the sense that their conformal
weight $\D/\phi_{b}\ra 0$ in the semiclassical limit, we should
ignore its back reaction to the geometry. Therefore, we just need
to solve the geometry of $n$ copies of $\r_{R}$ for all matrix elements \eqref{eq:me1}-\eqref{eq:me3} (see Figure \ref{fig:Replica-geometry.})

\begin{figure}
\begin{centering}
\includegraphics[width=6cm]{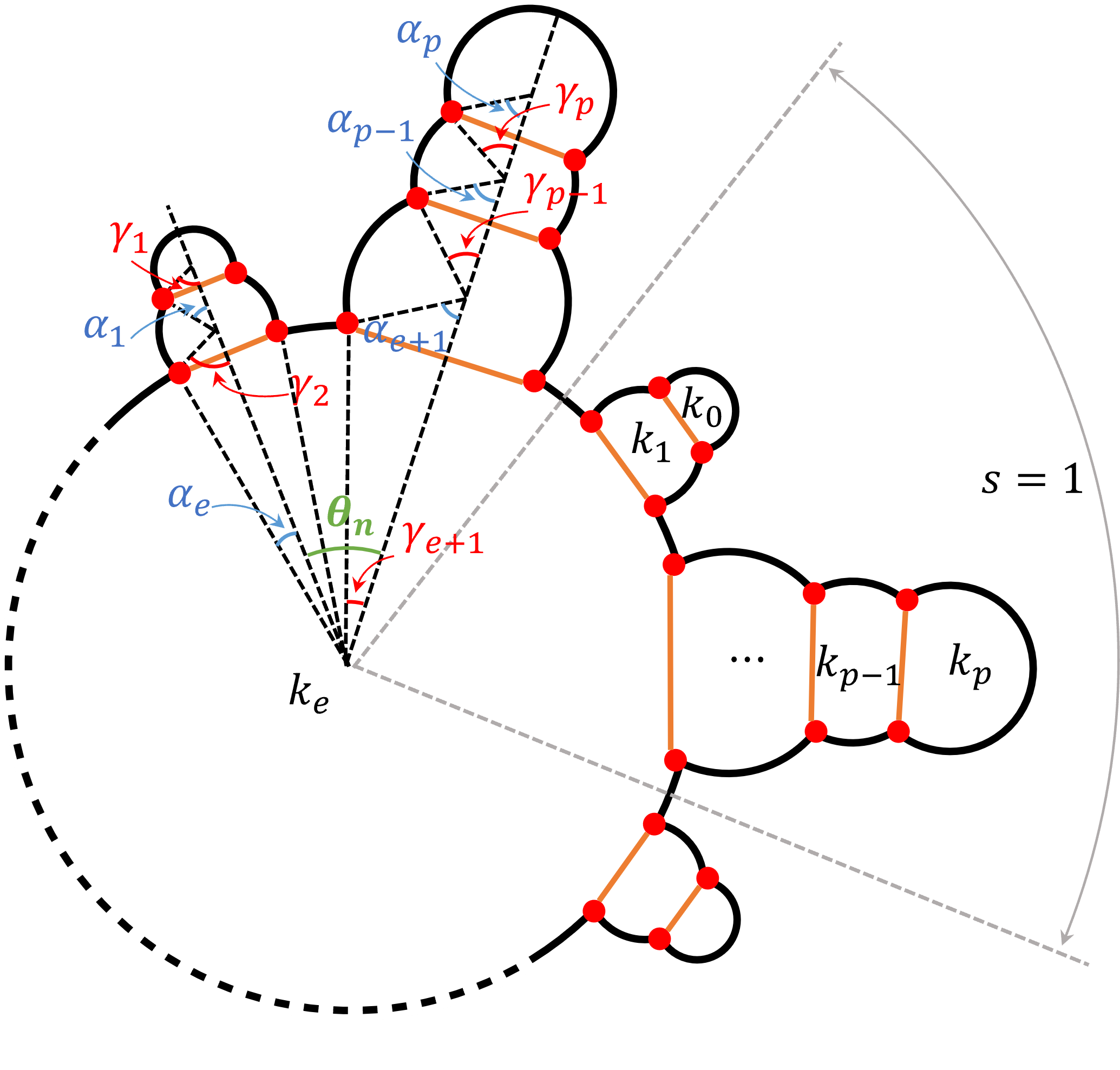}
\par\end{centering}
\caption{Replica geometry for $\protect\Tr \r_R^{n}$, where $ \t_{n}=\pi/n$. The gray region is labeled as the first replica $s=1$ (note that it does not represent the range for one copy of $\r_R$) and subsequent replicas are labeled anti-clockwise.}\label{fig:Replica-geometry.}
\end{figure}

For the $n$ replica partition function $\Tr\r_{R}^{n}$, there are $2n$
$\mO_{i}$ heavy operators in total for each $i$. As we assume these heavy operators become generalized free fields in the semiclassical limit, their contributions are sums of all Wick contractions. As shown in Appendix \ref{app:2}, the case with crossing Wick contractions of $\mO_{i}$ is subdominant to the case of no-crossing contractions. Furthermore, we assume that the leading order saddle respects $n$-replica symmetry. This means that each
$\mO_{i}$ can either contract with the other $\mO_{i}$ in the same copy
of $\r_{R}$ or another $\mO_{i}$ in a neighboring replica of $\r_{R}$.\footnote{There are two neighboring replicas, but the contraction with one of them preserves $n$-replica symmetry.} 
With these restrictions, we only have $p+1$ possibilities in the
$n$ replica partition function $Z_{p}^{(n)}=\sum_{e=0}^{p}Z_{e,p}^{(n)}$,
where
\begin{align}
&Z_{e,p}^{(n)} =\int dk_{e}\prod_{s=1}^{n}\left[\prod_{j\neq e}dk_{j}^{(s)}\prod_{i=1}^{p}d\l_{i}^{(s)}4^{\mu_i}Z_{s,e}(\{k,\l\})\right]\Psi_{n\b_{e}}(k_{e};\l_{e}^{(1)},\cdots,\l_{e}^{(n)},\l_{e+1}^{(1)},\cdots,\l_{e+1}^{(n)})\\
&Z_{s,e}(\{k,\l\})  =\Psi_{\b_{0}}(k_{0}^{(s)};\l_{1}^{(s)})\left(\prod_{i\neq e,p}e^{-\mu_{i}\l_{i}^{(s)}}\Psi_{\b_{i}}(k_{i}^{(s)};\l_{i}^{(s)},\l_{i+1}^{(s)})\right)e^{-\mu_{e}\l_{e}^{(s)}}e^{-\mu_{p}\l_{p}^{(s)}}\Psi_{\b_{p}}(k_{p}^{(s)};\l_{p}^{(s)})\\
&\Psi_{\b}(k;\{\l_{j}\})  =e^{-\b k^{2}/(4\phi_{b})}\r(k)\prod_{j}2K_{2ik}(4e^{-\l_{j}/2})
\end{align}
where $s$ labels different replicas. Using the integral representation
(\ref{eq:7}), and the rescaling $k_i^{(s)}\ra 2\phi_b k_i^{(s)},\mu_i\ra 2\phi_b \mu_i$ in the semiclassical limit, we have
\begin{align}
Z_{e,p}^{(n)} & =\int dk_{e}\prod_{s=1}^{n}\left(\prod_{i\neq e}dk_{i}^{(s)}\prod_{i=1}^{p}d\a_{i}^{(s)}d\g_{i}^{(s)}\right)\exp\left(-\sum_{s}I_{e,s}^{(n)}\right) \label{eq:3.10}\\
I_{e,s}^{(n)}/(2\phi_{b}) & \app\sum_{i=0}^{p}\left(\f{(k_{i}^{(s)})^{2}}2\b_{i}-\f{2\pi k_{i}^{(s)}}{1+(n-1)\d_{i,e}}\right)+2\sum_{i=1}^{p+1}k_{i-1}^{(s)}(\a_{i-1}^{(s)}+\g_{i}^{(s)})+\mu_{i}\log(\cos\a_{i}^{(s)}+\cos\g_{i}^{(s)})\label{eq:39}
\end{align}
where we have dropped an irrelevant constant $I_{n,p}(\mu_i)$.
Since different replicas do not interact with each other (the total action in \eqref{eq:3.10} is just a sum over replicas), it is natural to assume the dominant saddle
of the above action should also respect replica symmetry so that all variables
with different $s$ subscripts should be equal. For simplicity, in
the following we will omit the subscript $s$. Variation of $k_{i}$
gives
\begin{equation}
k_{i}\b_{i}=2(\pi/(1+(n-1)\d_{i,e})-\a_{i}-\g_{i+1})\label{eq:40}
\end{equation}
which is the same as (\ref{eq:12-1}) except $i=e$. However, this
exactly matches with the replica geometry in Figure \ref{fig:Replica-geometry.}
if we use the dictionary (\ref{eq:27-1}). Indeed, from Figure \ref{fig:Replica-geometry.},
the circumference $b_{e}$ of the central circle is given by
\begin{equation}
b_{e}=2\pi\b_{e}/(\t_{n}-\a_{e}-\g_{e+1}),\qquad(\t_{n}\equiv\pi/n)
\end{equation}
which is consistent with (\ref{eq:40}). Variation of $\a_{i}$ and
$\g_{i}$ leads to the same equations as (\ref{eq:31}), which has
been shown to be equivalent to the charge conservation and joint condition
for $\mO_{i}$ in the geometric formalism in Section \ref{sec:2.3}.

Therefore, the replica geometry for each choice of $e$ is constructed
as follows. For the first replica ($s=1$, the gray region in Figure \ref{fig:Replica-geometry.}),
the circles are given by
\begin{equation}
Y_{(1)}^{i}(\tau)=\begin{cases}
M_{2}(z_{i}(n))\cdot Y(\r_{i,\e},\tau) & i\geq e\\
M_{3}(\t_{n})\cdot M_{2}(z_{i}(n))\cdot Y(\r_{i,\e},\tau) & i<e
\end{cases}
\end{equation}
where $z_{e}(n)=0$. Then for the $s$-th replica, the circles are
given by
\begin{equation}
Y_{(s)}^{i}(\tau)=M_{3}(2(s-1)\t_{n})\cdot Y_{(1)}^{i}(\tau)
\end{equation}

\begin{figure}
\begin{centering}
\subfloat[]{\begin{centering}
\includegraphics[width=7cm]{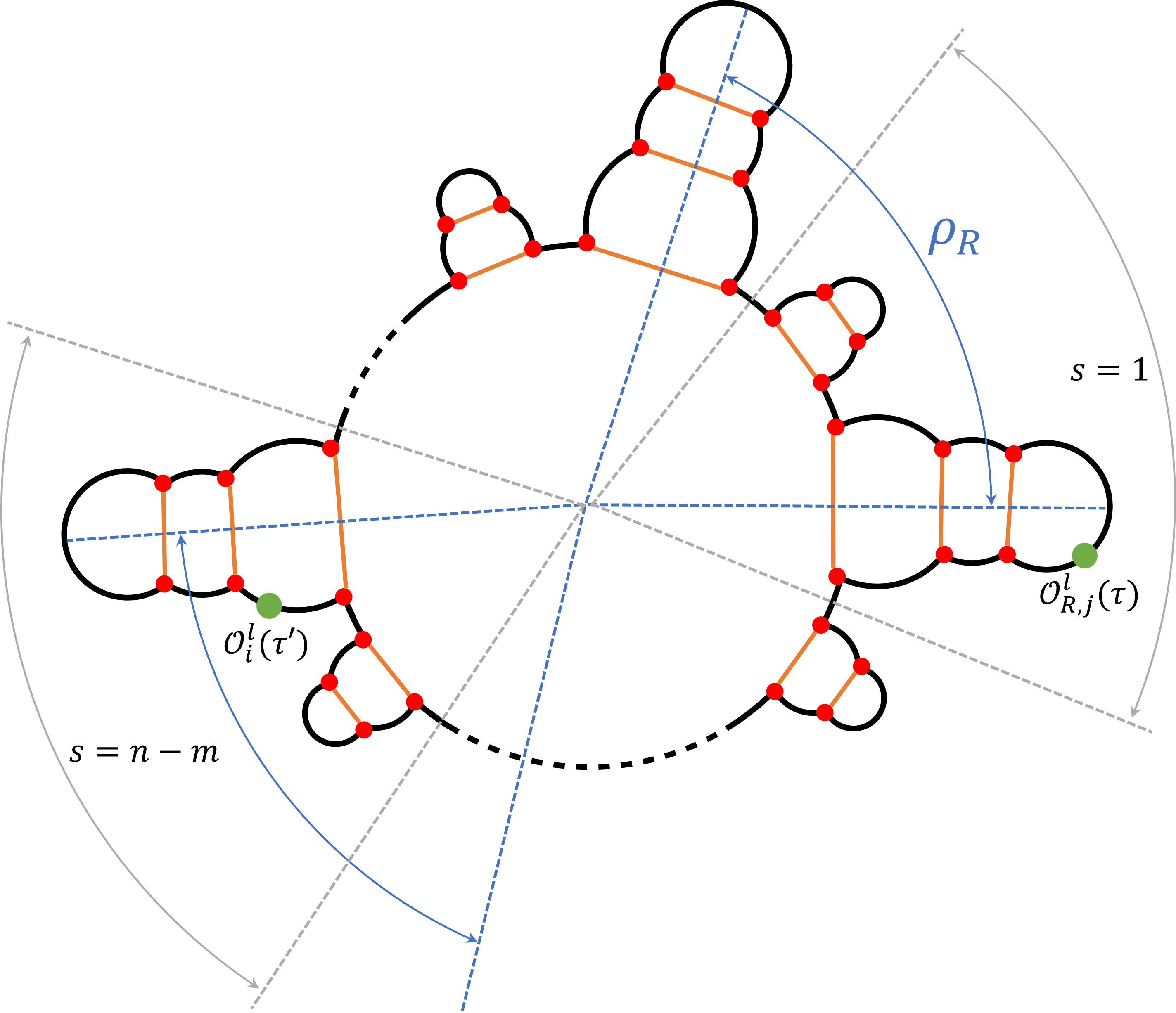}\label{fig:rep2p}
\par\end{centering}

}\subfloat[]{\begin{centering}
\includegraphics[width=7cm]{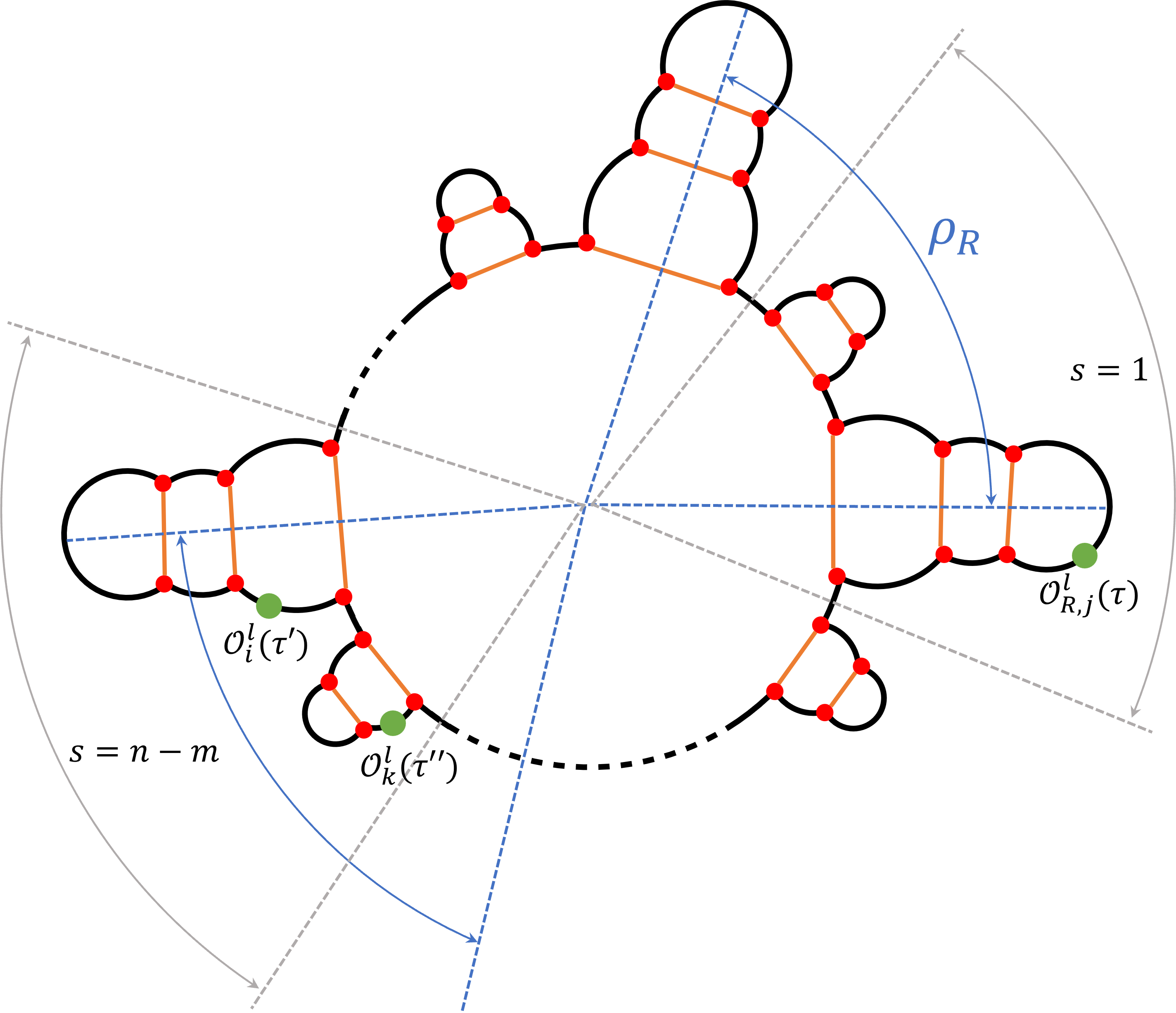}\label{fig:rep3p}
\par\end{centering}

}
\par\end{centering} 
\caption{(a) The configuration for the replica two-point function $g_{ij}(n,m,u_E)$. (b) The configuration for the replica three-point function $g_{ijk}(n,m,u_E)$. The green dots in both plots are the light operators and the blue range is for one copy of $\r_R$. \label{fig:rep}}
\end{figure}

Given this geometry, the matrix elements \eqref{eq:me1}-\eqref{eq:me3} become one-, two- and three-point functions of light operators in this Euclidean geometric background respectively. Recall that the light operators are decoupled from heavy operators and dilaton field in the semiclassical limit, and thus the correlations of light operators in a generic PETS enjoy an $SL(2)$ symmetry. We can assume this CFT of light operators has a trivial one-point function, which implies vanishing \eqref{eq:me1}. For \eqref{eq:me2}, it is a two-point function given by 
\begin{equation}
g_{ij}(n,m,u_E)=\d_{ij}(\cosh D)^{-\D}
\end{equation}
where $D$ is the geodesic distance between $\mO_{j,R}^l(u_E)$ and
$\mO^l_{i}$ on the replica diagram in Figure \ref{fig:rep2p}. For a simpler notation, let us use $\mO_{j,R}^l(\tau)$ momentarily with $\tau=2\pi u_E/b_p$ in terms of the EAdS$_2$ boundary coordinate. Let us assume $\mO^l_i$ is on the $\a$-th disk in $\bra{\Psi_p}$ with boundary coordinate $\tau'$. By the operator $\r_R^{n-m-1}$ between $\ket{\Psi_p;\mO^l_i}$ and $\mO^l_{j,R}(\tau)$, we may put $\mO^l_{j,R}(\tau)$ on the first replica and $\mO^l_i(\tau')$ on $(n-m)$-th replica. Since the embedding space is hyperbolic, the geodesic distance can be simply expressed as
\begin{equation}
\cosh D=-Y_{(n-m)}^{\a}(\tau')\cdot Y_{(1)}^{p}(\tau)=-Y_{(1)}^{p}(\tau')\cdot M_{3}(2(1-n+m)\t_{n})\cdot Y_{(1)}^{p}(\tau)
\end{equation}
Analytic continuation to $n\ra1$, $m\ra-is$, and
$\tau\ra it=i(2\pi u/b_p)$ leads to
\begin{equation}
\avg{\Psi_p;\mO^l_i|O^l_{j,R}(s;u)|\Psi_p}=\d_{ij}(-Y_{(1)}^{\a}(\tau')\cdot M_{3}(-2i\pi s)\cdot Y_{(1)}^{p}(it))^{-\D}\label{eq:46}
\end{equation}
Note that $Y_{(1)}^{p}$ is the right boundary coordinate in the geometry
of Lorentzian continuation of Figure \ref{fig:Geometry-of-boundary.}
because the saddle equations of the replica geometry reduce back to
the ones of a single copy when $n\ra1$. From (\ref{eq:46}) it is
clear that the modular flow acts just as an $SL(2)$ transformation on the embedding coordinate of $\mO_{j,R}^l(u)$
\begin{equation}
Y^{p}(it)\ra \tilde Y^{p}(s;it)\equiv M_{3}(-2i\pi s)\cdot Y^{p}(it) \label{eq:sl2}
\end{equation}

Similarly, for the replica three-point function \eqref{eq:me3}, the configuration is given in Figure \ref{fig:rep3p}. We assume $\mO^l_{k}$ is on the $\g$-th disk of $\bra{\Psi_p}$ with boundary coordinate $\tau''$. By our notation of replicas, there are two cases. If $\g\leq e$, it is on the $(n-m)$-th replica; if $\g>e$, it is on the $(n-m+1)$-th replica. For the first case, the three-point function is given by
\begin{align} 
g_{ijk}(n,m,u_E)=&C_{ijk}(-Y^\a_{(n-m)}(\tau')\cdot Y^p_{(1)}(\tau))^{-\f{\D_i+\D_j-\D_k}{2}}(-Y^\a_{(n-m)}(\tau')\cdot Y^\g_{(n-m)}(\tau''))^{-\f{\D_i+\D_k-\D_j}{2}}\nn\\
&\times(-Y^\g_{(n-m)}(\tau'')\cdot Y^p_{(1)}(\tau))^{-\f{\D_k+\D_j-\D_i}{2}}
\end{align} 
Analytic continuation to $n\ra1$, $m\ra-is$, and
$\tau\ra it$ leads to
\begin{align} 
\avg{\Psi_p;\mO^l_i|O^l_{j,R}(s;u)|\Psi_p;\mO^l_k}=& C_{ijk}(-Y^\a_{(1)}(\tau')\cdot \tilde Y^p_{(1)}(s;it))^{-\f{\D_i+\D_j-\D_k}{2}}(-Y^\a_{(1)}(\tau')\cdot Y^\g_{(1)}(\tau''))^{-\f{\D_i+\D_k-\D_j}{2}}\nn\\
&\times(-Y^\g_{(1)}(\tau'')\cdot \tilde Y^p_{(1)}(s;it))^{-\f{\D_k+\D_j-\D_i}{2}} \label{eq:3.21}
\end{align} 
which again justifies the $SL(2)$ transformation \eqref{eq:sl2}. For the second case, we need to change $Y^\g_{(n-m)}$ to $Y^\g_{(n-m+1)}$. Noting the periodicity $M_3(2\pi+x)=M_3(x)$, we have
\begin{align}
Y^\g_{(n-m+1)}(\tau'')\cdot Y^p_{(1)}(\tau)=Y^\g_{(1)}(\tau'')\cdot M_3(2m\t_n)\cdot Y^p_{(1)}(\tau)\\
Y^\a_{(n-m)}(\tau')\cdot Y^\g_{(n-m+1)}(\tau'')=Y^\a_{(1)}(\tau')\cdot M_3(2\t_n)\cdot Y^\g_{(1)}(\tau'')
\end{align} 
which implies the same equation \eqref{eq:3.21} after analytic continuation.

\begin{figure}
\begin{centering}
\includegraphics[height=5cm]{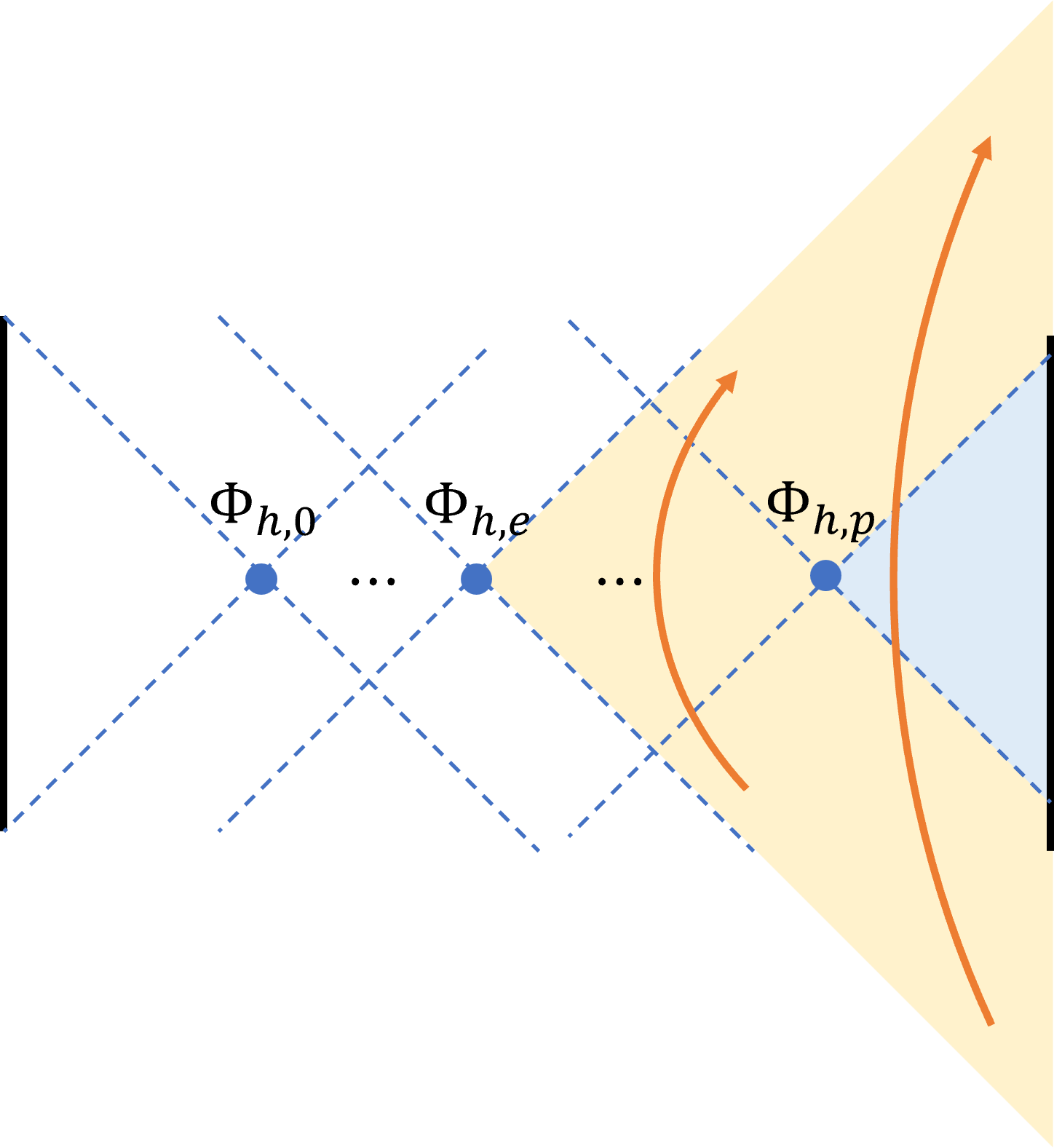}
\par\end{centering}
\caption{For each $e$-th saddle, the modular flow is the boost transformation around the local minimal dilaton point at $\Phi_{h,e}$. The dominant saddle over the $p+1$ saddles is the one corresponding to the global minimal dilaton point $\Phi_{h,e^*}$.}\label{fig:5}
\end{figure}

The physical meaning of this $SL(2)$ is simple. In EAdS$_{2}$, $M_{3}(x)$
rotates the angular coordinate $\tau$ of the disk at the center labeled
by $k_{e}$ in Figure \ref{fig:Replica-geometry.}. The analytic continuation
of $\tau\ra it$ changes EAdS to AdS with metric 
\begin{equation}
ds^{2}=d\r^{2}-\sinh^{2}\r dt^{2}
\end{equation}
for the Rindler wedge corresponding to $e$-th disk. Therefore, $M_{3}(-2i\pi s)$ is the boost
transformation around a local minimal point of dilaton $\Phi_{h,e}$
(see Figure \ref{fig:5}).

This $SL(2)$ transformation of boundary coordinate can be uniquely extended into the bulk using the HKLL reconstruction \eqref{eq:hkll}. For the bulk field $\p_i$ in the right causal wedge, the modular flowed field is given as
\be 
\p_j(s;\r,t)\equiv\r_R^{is}\p_j(\r,t)\r_R^{-is}=\int dt' K_{\D_j}(\r,t;t')O^l_{j,R}(s;u(t'))
\ee 
The matrix elements of $\p_j(s;\r,t)$  are similar to \eqref{eq:3.3} by integrating $O^l_{j,R}(t')(s;t')$ against the HKLL kernel $K_{\D_j}(\r,t;t')$, for example, 
\begin{align} 
\avg{\Psi_p;\mO^l_i|\p_{j}(s;\r,t)|\Psi_p}&=\d_{ij}\int dt' K_{\D_j}(\r,t;t') (-Y_{(1)}^{\a}(\tau')\cdot M_{3}(-2i\pi s)\cdot Y_{(1)}^{p}(it'))^{-\D} \nn\\
&=\d_{ij}\int dt' K_{\D_j}(\r,t;t') (-\tilde Y^{\a}(-s;\tau')\cdot Y^{p}(it'))^{-\D} \nn\\
&=\avg{\Psi_p;\tilde \mO^l_i|\p_{j}(\r,t)|\Psi_p} \label{eq:3.26}
\end{align} 
where $\ket{\Psi_p;\tilde \mO^l_i}$ is a state with $\mO^l_i$ inserted at a deformed location $\tilde Y^\a(-s;\tau')$. On the other hand, with the HKLL kernel, we should be able to compute the matrix element $\avg{\Psi_p;\mO^l_i|\p_{j}(\r,t)|\Psi_p}$, which is in the form of a bulk-to-boundary propagator and is an $SL(2)$ invariant function of the locations of $\mO^l_j$ and $\p_j(\r,t)$ thanks to the isometry of AdS$_2$ and the decoupling from heavy operators and the dilaton field. Therefore, we can do a global $SL(2)$ transformation $M_3(-2i\pi s)$ to both coordinates in  \eqref{eq:3.26} and have
\be 
\avg{\Psi_p;\mO^l_i|\p_{j}(s;\r,t)|\Psi_p}=\avg{\Psi_p;\mO^l_i|\p_{j}(M_{3}(-2i\pi s)\cdot Y^{p}(\r,t))|\Psi_p}\label{eq:3.27}
\ee
Though we assume $\p_j(\r,t)$ as the bulk field in the right causal wedge, the modular transformation \eqref{eq:3.27} has a unique smooth extension beyond the right causal wedge because $M_3$ is well-defined in the whole global Lorentzian spacetime. The same computation applies to $\avg{\Psi_p;\mO^l_i|\p_{j}(s;\r,t)|\Psi_p;\mO^l_k}$ because the three-point function \eqref{eq:3.21} is also an $SL(2)$ invariant function, and we may conclude that the modular flow of a bulk field from the $e$-th saddle is just the boost transformation around the $e$-th minimal dilaton point.

Now let us determine which $e$ dominates. In semiclassical limit,
for different choices of $e$, only the one with minimal on-shell
action survives. Since the on-shell action is replica symmetric, it
suffices to show which $e$ leads to minimal 
\begin{equation}
I_{e}^{(n)}/(2\phi_{b})=\sum_{i=0}^{p}\left(\f{k_{i}^{2}}2\b_{i}-2\pi k_{i}\right)+2\pi(1-1/n)k_{e}+2\sum_{i=1}^{p+1}k_{i-1}(\a_{i-1}+\g_{i})+\mu_{i}\log(\cos\a_{i}+\cos\g_{i})\label{eq:51}
\end{equation}
where $k_{i},\a_{i},\g_{i}$ are functions of $n$ as solutions to
the saddle equations. Taking derivative to $n$ will have two pieces.
One acts explicitly on $(1-1/n)$ and the other acts only implicitly
on $k_{i},\a_{i},\g_{i}$, which are all proportional to the equations
of motion and thus vanish \cite{Goel:2018ubv}. It follows that
\begin{equation}
\del_{n}I_{e}^{(n)}/\phi_{b}=2\pi k_{e}(n)/n^{2}=2\pi\Phi_{h,e}(n)/n^{2}\label{eq:52}
\end{equation}
where $\Phi_{h,e}(n)$ is the dilaton minimal value on $e$-th circle.
As we will take analytic continuation $n\ra1$, we only need to find
the minimal action for $n$ near 1. When $n=1$, the action (\ref{eq:51})
and equations of motion are independent on the choice of $e$. The
derivative (\ref{eq:52}) under $n\ra1$ limit is minimal when $e$
is picked such that the dilaton on $e$-th disk is the global minimum in the $n=1$
geometry of Figure \ref{fig:Geometry-of-boundary.}.\footnote{In this paper we always assume there is a unique global minimal dilaton point. For the case with multiple global minimal dilaton point with the same dilaton value by fine-tuning, the RT surface is ambiguously defined and one needs to compare the subleading in $\phi_b$ part of bulk entanglement entropy to define the entanglement wedge in terms of the QES formula. But this is beyond the scope of this paper of strictly $\phi_b\ra \infty$ limit. } This is exactly
the same type of proof of RT formula \cite{Lewkowycz:2013nqa}. This implies that
the modular flow in the semiclassical limit is the boost transformation
around the global minimum of dilaton $\Phi_{h,e^*}$. As the von Neuman algebra is close in the weak operator topology, we may conclude in bulk operator sense
\begin{equation}
\p_{j}(s;\r,t)=\p_j(M^*_{3}(-2i\pi s)\cdot Y^{p}(\r,t))
\end{equation}
where $M^*_3(-2i\pi s)$ is the boost around $\Phi_{h,e^*}$.

This is consistent with the conjecture of the entanglement wedge reconstruction \cite{Faulkner:2017vdd,Jafferis:2015del} by modular flow because the entanglement wedge of the right boundary is defined to be the causal diamond between $\Phi_{h,e^*}$ and the right boundary. Starting from a bulk location $(\r_0,0)$ with $\r_0$ distance from $\Phi_{h,e^*}$, and extending $s$ to all real numbers, we can reconstruct the bulk field along a trajectory parameterized by $s$  arbitrarily close to the boundary of the right entanglement wedge (see the orange arrow in Figure \ref{fig:5}). This is equivalent to reconstructing all light operators on a new AdS ``boundary" at $(\r_0,t_{e^*}(s))$ corresponding to the $e^*$-th disk. Here $t_{e^*}(s)$ is the Lorentzian time continued from the $e^*$-th disk and parameterized by $s$. Then we can apply HKLL reconstruction again for this new boundary at $(\r_0,t_{e^*}(s))$ to reconstruct the light fields in the whole right Rindler wedge of $\Phi_{h,e^*}$, which is exactly the entanglement wedge of $\bra{\Psi_p}$. Equivalently, we can regard this as a special time tube theorem \cite{borchers1961vollstandigkeit,araki1963generalization,Strohmaier:2000ib,Strohmaier:2023hhy,Strohmaier:2023opz} for the right entanglement wedge.

\subsection{Extension of operator algebra by the modular flow} \label{sec:3.2}

As we explained in Section \eqref{sec:2.5}, the light operators living on the left or right boundary become an emergent III$_1$ von Neumann algebra $\mA^0_{L/R}$ dual to the left or right bulk causal wedge $\mA^{CW}_{L/R}$ in terms of HKLL reconstruction respectively. However, these two algebras do not generate the full sub-Hilbert space $\mH_g$ for a PETS state $\bra{\Psi}$ because it does not excite states with bulk fields living in the long wormhole between two causal wedges. 

Given a generic PETS $\bra{\Psi}$ with all different heavy operators, it can be regarded as the ``vacuum" state in $\mH_g$. The modular flow of this state for $\mA_R$ can be constructed in terms of reduced density matrix $\r_R$. This modular flow becomes the boost transformation around the global minimal dilaton point when acting on a light operator $a\in \mA^0_R$ in the semiclassical limit. Therefore, in the bulk algebra viewpoint, the modular flow extends the causal wedge algebra $\mA_R^{CW}$ to the entanglement wedge algebra $\mA_R^{EW_g}$
\be 
\mA_R^{EW_g}\simeq \{\r_R^{is}\mA_R^{CW}\r_R^{-is}\}_{s\in \R}''\simeq \{\r_R^{is}\mA_R^0\r_R^{-is}\}_{s\in \R}''\supset \mA_R^0 \label{eq:3.31}
\ee
where the double prime means double commutant, which completes it as a von Neumann algebra. From the bulk viewpoint, the entanglement wedge algebra $\mA^{EW_g}_R$ is a type III$_1$ algebra. Similarly, we can use the left modular flow to extend $\mA^{CW}_L\simeq \mA^{0}_L$ to the left entanglement wedge algebra $\mA^{EW_g}_L$. 

By the structure of the sub-Hilbert space $\mH_g$, it is equivalent to a bulk sub-Hilbert space by acting all light bulk fields $\p_i$ on the bulk Cauchy slice corresponding to $\bra{\Psi}$. It follows that $\p_i$ forms the full set of bounded operators $\mB(\mH_g)$. On the other hand, from the bulk viewpoint, the algebra of light fields living on the Cauchy slice is generated by the union of $\mA_R^{EW_g}$ and $\mA_L^{EW_g}$, which leads to
\be  
\mB(\mH_g)=(\mA_R^{EW_g}\cup \mA_L^{EW_g})'' \label{eq:3.32}
\ee 
Note that the HKLL reconstruction preserves bulk locality and the modular flow is a $SL(2)$ transformation, which acts locally as well. Therefore, in the reconstruction \eqref{eq:3.31}, there is an {\it emergent} bulk locality. This bulk locality implies two things: first, $\mA^{EW_g}_R$ commutes with $\mA^{EW_g}_L$ because they are spacelike separated; second, there is no nontrivial bounded bulk field living on the border of left and right entanglement wedge, namely the point $\Phi_{h,e^*}$, because of the universal UV divergence of bulk QFT.\footnote{Equivalently, every bounded operator needs smearing to be well-defined.} It follows that $\mA^{EW_g}_R=(\mA^{EW_g}_L)'$. With \eqref{eq:3.32}, we have \cite{Kolchmeyer:2023gwa}
\be 
(\mA^{EW_g}_R\cap(\mA^{EW_g}_R)')'=(\mA^{EW_g}_R\cup(\mA^{EW_g}_R)')''=(\mA^{EW_g}_R\cup\mA^{EW_g}_L)''=\mB(\mH_g)=(\C \cdot \I)'
\ee
which implies that $\mA^{EW_g}_{L/R}$ are two type III$_1$ factors. With the extension by the modular flow, the full algebra of JT gravity with matter in the semiclassical limit contains the following decomposition
\be 
\mB(\mH)\supset \oplus_g \mB(\mH_g)=\oplus_g (\mA_R^{EW_g}\cup \mA_L^{EW_g})''
\ee

\section{Conclusion and discussions} \label{sec:4}

In this paper, we studied the semiclassical limit of the theory of JT gravity with matter with two boundaries by taking $\phi_b\ra\infty$ limit. We assume the heavy operators with conformal weight $\sim O(\phi_b)$ become generalized free fields and decouple from light operators with conformal weight $\sim O(1)$. For a generic partially entangled thermal state (PETS) $\bra{\Psi}$ with different heavy operators inserted on the Euclidean path, the Euclidean path integral in the semiclassical limit is dominated by a saddle, which has a simple geometric formalism by gluing EAdS$_2$ disks with $Sl(2)$ charge conservation. We show that the geometry corresponds to a fixed Euclidean long wormhole connecting the left and right boundaries. Different configurations of heavy operators typically correspond to different Euclidean geometries, and these PETS are orthogonal to each other in the semiclassical limit. Therefore, the full Hilbert space contains a direct sum decomposition labelled by the Euclidean geometry $g$ corresponding to each generic PETS $\bra{\Psi}$
\be 
\mH\supset \oplus \mH_g \label{eq:4.1}
\ee
where each $\mH_g$ is generated by inserting light operators on the Euclidean path of $\bra{\Psi}$.

On the other hand, all heavy operators do not survive in the semiclassical limit as their conformal weights become infinite. The left and right Hamiltonian does not survive either because their on-shell expectation values are $\sim O(\phi_b)$ in a generic PETS. Nevertheless, the Lorentzian evolution of light operators on the left or right boundary are still bounded operators and they form a universal type III$_1$ von Neumann algebra $\mA_{L/R}^0$. For a generic PETS $\bra\Psi$ with a long wormhole, these two algebras are dual to the causal wedge in the bulk through HKLL reconstruction and do not generate the full sub-Hilbert space $\mH_g$ because there exists a gap between the left and right causal wedge on the global bulk Cauchy slice. In other words, the entanglement wedge is larger than the causal wedge. To reconstruct the entanglement wedge, we show that the modular flow by the reduced density matrix $\r_{L/R}$ for $\bra{\Psi}$ extends $\mA_{L/R}^0$ to the algebra of the full entanglement wedge because the flow acts geometrically as the boost transformation around the global RT surface in the semiclassical limit. This proves the entanglement wedge reconstruction proposed in \cite{Jafferis:2015del,Faulkner:2017vdd} in this scenario. The extended algebra, by the bulk dual in the entanglement wedge, is also type III$_1$. The full algebra in the theory contains a decomposition similar to \eqref{eq:4.1}
\be 
\mB(\mH)\supset \oplus_g \mB(\mH_g)=\oplus_g (\mA_R^{EW_g}\cup \mA_L^{EW_g})''
\ee 
where $\mA_{L/R}^{EW_g}$ is the entanglement wedge bulk algebra and equivalent to the extension of $\mA_{L/R}^0$ by the modular flow.

The following are discussions and future directions.

\subsection*{Other modular flows}

The modular flow in our construction for a generic PETS $\bra{\Psi}$ is a canonical choice at finite $\phi_b$ because we only have two algebras $\mA_L$ and $\mA_R$, which are commutant to each other. We take this modular flow generated by the reduced density matrix $\r_R$ in the $\phi_b\ra\infty$ limit though $\r_R$ does not survive in this limit. On the other hand, if we take $\phi_b\ra\infty$ first, as we argued in Section \ref{sec:2.5}, all light operators form an emergent subalgebra $\mA^0_{L/R}$. As we know that for any type III$_1$ algebra and its commutant, modular operator exists. From bulk QFT viewpoint, we can definitely consider the modular operator $\D_{0,R}$ for $\mA_R^0$ and $(\mA_R^0)'$, whose modular flow is an automorphism for $\mA_R^0$, namely $\D_{0,R}^{is}\mA_R^0\D_{0,R}^{-is}= \mA_R^0$ for $s\in\R$. More generally, for the bulk fields on the Cauchy slice of $g$ corresponding to $\bra{\Psi}$, we can choose an arbitrary causal diamond $\mD$ containing right causal wedge and consider the modular operator $\D_{\mD,R}$ for the bulk field algebra in this causal diamond $\mA_R^\mD$ and its commutant. We may expect the modular flow by $\D_{\mD,R}$ extends $\mA_R^0$ to some larger sub-algebra of $\mA_R^\mD$ though no precise statement can be easily made. In particular, we do not have a finite $\phi_b$ version of $\D_{\mD,R}$ because the operators in $\mA^\mD_R$ does not form a close algebra away from the semiclassical limit. Furthermore, in lack of a Euclidean path integral representation, it is unclear if such modular flow acts geometrically.

Even starting from finite $\phi_b$, we may have more choices of modular flows. In this work, we choose the PETS $\bra{\Psi}$ with only heavy operator insertions to construct the modular operator. This is the modular operator for the ``vacuum" state in $\mH_g$. We can definitely consider an ``excited" state $\bra{\Psi;\mO^l_i\cdots}$ and the corresponding modular operator. To compute the matrix elements of $\mO_{j,R}^l(t)$ under this new modular flow, we can use the same replica trick in Section \ref{sec:3.1}. Suppose there are $k$ light operators in $\mO_{R,i}^l(t)$. The $n$ replica wormhole saddle geometry is unchanged because light operators does not affect the saddle, then we need to evaluate a $(2kn+2)$-point function of light operators for $g_{ij}(n,m,\tau)$ and a $(2kn+3)$-point function for $g_{ijk}(n,m,\tau)$. Since the multi-point functions depend on $n$ and $m$ in a perhaps nontrivial way, it is unclear if the analytic continuation $n\ra 1$, $m\ra -is$ and $\tau\ra it$ still leads to a boost transformation of $Y^p(it)$ in a $(2k+2)$- or $(2k+3)$-point function. Naively, we should expect some nontrivial interaction between  $\mO_{j,R}^l$ and the light operators $\mO^l_i\cdots$ in the state along the modular flow. We leave this to a future investigation.

There is an interesting comment about the geometric feature of the modular flow in JT gravity in the semiclassical limit. It has been shown in \cite{Engelhardt:2021mue,Chen:2022eyi} that if the modular flow of a boundary subregion is geometric, then the entanglement wedge must be coincident with the causal wedge. On the other hand, in this work we show that the causal wedge is completely nested in the entanglement wedge but the modular flow acts geometrically. It turns out that both results are consistent with each other because the modular flow considered in \cite{Engelhardt:2021mue,Chen:2022eyi} is for a boundary subregion in higher dimensions rather than the two-sided case in JT gravity. Indeed, since the boundary in JT gravity is 0+1 dimensional, there does not exist a boundary subregion to apply the analysis of \cite{Engelhardt:2021mue,Chen:2022eyi} and thus no contradiction follows.

\subsection*{Modular flow of a non-generic PETS}

In Section \ref{sec:3}, we assume a generic PETS $\bra{\Psi_p}$ with all different heavy operators. For a non-generic PETS $\bra{\Psi}$ with identical operators, there will be more subtleties similar to the discussion of the structure of Hilbert space in Section \ref{sec:2.4}. For a PETS without crossing contractions, as the Euclidean geometry holds, the modular flow will be the boost transformation around the global minimal dilaton point. However, it is unclear if this minimal point always lies in the central row of Euclidean disks (e.g. Figure \ref{fig:dif1}). If it lies in the central row (see Appendxi \ref{app:3} for an example), the modular flow is the boost around the RT surface, exactly the same as the generic case; If it does not, there must be two identical minimal points with one from the ket state and the other from the bra state. This implies that the modular flow in the Lorentzian theory is no longer geometric. More generally, if there are crossing contractions between different types of heavy operators, the modular flow could be more involved. We leave the investigation of these non-generic cases to future work.

\subsection*{Operational algebraic ER=EPR}
It was recently proposed by Engelhardt and Liu in \cite{Engelhardt:2023xer} that the type of the full boundary algebra can be used to distinguish whether the dual bulk has a connected geometry or disconnected geometry for the entanglement wedge. The statement is as follows. For two boundary regions $R_1$ and $R_2$ and a pure entangled state $\bra{\Psi(R_1,R_2)}$, in $G_N\ra 0$ limit the entanglement wedge for $R_1\cup R_2$ has three cases
\begin{enumerate}
\item If and only if both boundary algebra $\mA_{R_1}$ and $\mA_{R_2}$ are type I, then the dual entanglement wedge of $R_1\cup R_2$ is disconnected;
\item If an only if both boundary algebra $\mA_{R_1}$ and $\mA_{R_2}$ are type III$_1$, then the dual entanglement wedge of $R_1\cup R_2$ is classically connected;
\item If both boundary algebra $\mA_{R_1}$ and $\mA_{R_2}$ are not type I, and the state $\bra{\Psi(R_1,R_2)}$ does not obey ``classical condition", then the dual entanglement wedge of $R_1\cup R_2$ is quantumly connected.
\end{enumerate}
This is an interesting extension of the well-known proposal of ``ER=EPR" \cite{Maldacena:2013xja} by Maldacena and Susskind, in which a large amount of quantum entanglement is conjectured to be dual to connected spacetime. As pointed out in \cite{Engelhardt:2023xer}, the connectedness of the spacetime depends not only on the amount of entanglement but also on the structure of the entanglement, which can be naturally characterized by the type of boundary von Neumann algebra.

However, this algebraic ER=EPR criterion based on the type of boundary algebra is not operationally easy to verify. Given a boundary theory, we may not know all operators in the boundary algebra but just the set of single trace operators, which in semiclassical limit forms an emergent type III$_1$ sub-algebra $\mA^0$ dual to the causal wedge \cite{Leutheusser:2021frk}. This is the universal type III$_1$ sub-algebra of the full algebra. What do we need to know additionally empowers us to determine the type of full algebra? This is equivalent to entanglement wedge reconstruction if we assume that we can determine the type of the algebra after we know all the operators. 

Although it is not rigorously proven, let us assume the entanglement wedge reconstruction by the modular flow \cite{Faulkner:2017vdd}. The operational algebraic ER=EPR problem becomes what types of algebra the modular flow extends to from a sub-algebra, say $\mA^0$. As the modular flow is a one-parameter group for each operator, the flowed algebra $\D^{is}\mA^0\D^{-is}$ for each $s\in\R$ is the same type as $\mA^0$ because $\D^{is}$ is a unitary that preserves trace. However, if we consider the algebra generated by the union of all $s\in\R$, we may be able to have different types of algebra. 

In this paper, we show that $\{\D^{is}\mA^0_R\D^{-is}\}_{s\in\R}''$ is of type III$_1$ for the right entanglement wedge. Here we provide another example that the modular flow extends a type III$_1$ subalgebra to a type I algebra. Consider the two CFTs on two distinct spheres with radius $\l$. Prepare them in a thermofield double state below Hawking-Page temperature, so the dual bulk consists of two disconnected AdS spaces with entangling thermal gas. The full algebra on either side is type I while a spatial hemisphere on the left (or right) CFT has a type III$_1$ subalgebra $\mA_{L,hs}^0$ (or $\mA_{R,hs}^0$). Let us focus on the right system. The modular flow in this case is simple, that is, the Hamiltonian evolution on either side. Consider the extended algebra $\{\D^{is}\mA_{R,hs}^0\D^{-is}\}_{|s|\leq s_0}''$. It is clear that by the time tube theorem \cite{borchers1961vollstandigkeit,araki1963generalization,strohmaier2000local,Strohmaier:2023hhy} when $s_0$ is greater than a value that is of order $O(\l)$, this extended algebra will be the same as the full CFT algebra on the left, which is type I. We can also see from this example that the extended algebra covers the whole entanglement wedge, namely the whole bulk on the left. For general cases, what is the type of modular extended algebra, and how to determine it? We leave this question for future work.

\subsection*{Low-complexity reconstruction}

It was argued in \cite{Engelhardt:2021mue} that one can use Einstein
equations to reconstruct at most to the outmost extremal surface from
causal wedge with low complexity operations on the boundary. This
means that python's lunch \cite{Brown:2019rox} is the complexity obstacle for
the reconstruction to the deep bulk from the boundary. On the other hand, we show in this work that the modular flow in JT gravity acts geometrically and
extends the causal wedge all the way to the entanglement wedge regardless
of python's lunch. Clearly, we should regard the modular operator
$\D_{\Psi_{p}}$ as a very complex operator because it lies outside
of the algebra $\mA_{R}^0$ and its construction requires
Euclidean evolution that is quite complex from boundary viewpoint.

In JT gravity, there is an analogous method of reconstruction beyond
causal wedge towards the outmost extremal surface without modular
flow. Similar to using the Einstein equation in \cite{Engelhardt:2021mue}, which is at $O(G_N)$ perturbative order, this method also requires large but finite $\phi_b$. Large $\phi_b$ means we can still treat the geometry as the saddle solution, but finite $\phi_b$ allows us to measure some quantities the same order as $\phi_b$ in the bulk. 

Consider the state (\ref{eq:1}) as an example. For $\mu>\mu_{*}$
the edge of causal wedge is the outmost extremal surface and there
is a python's lunch (Figure \ref{fig:1d}); for $\mu<\mu_{*}$ the
edge of causal wedge is not the outmost extremal surface and there
is no python's lunch (Figure \ref{fig:1c}). Our method applies to the second case $\mu<\mu_{*}$ as follows.

As $\phi_b$ is finite, let us assume we can measure the dilaton profile in the causal wedge. Note that the geodesic (red line) from $\mO_i$ lies inside the causal
wedge and split it into two parts. The dilaton profile on the two Euclidean disks is given by (\ref{eq:20}) with different charges $Q_L$ and $Q_R$. Witten in $(\r,\tau)$ coordinates of each disk, they are
\be 
\Phi_L=\f{2\pi \phi_b}{b_L}\cosh \r_L,\quad \Phi_R=\f{2\pi \phi_b}{b_R}\cosh \r_R
\ee
Since the radial metric is flat $d\r^2$ on $t=0$ Cauchy slice, we can measure the dilaton profile fall-off along the radial direction within the causal wedge for both disks, and learn $b_L$ and $b_R$. On the other hand, the dilaton profile is continuous across
the geodesic connecting two $\mO_i$'s but its normal derivative is discontinuous and is proportional
to $\mu$ \cite{Goel:2018ubv}. To see this, recall that the dilaton profiles on the two EAdS$_2$ disks are given by (\ref{eq:20}) with different charges $Q_L$ and $Q_R$. On the border of these two disks (namely along the geodesic connecting two $\mO_i$'s), the dilaton value should be continuous
\begin{equation}
\Phi_L(Y)-\Phi_R(Y)=\phi_{b}(Q_L-Q_R)\cdot Y=\phi_{b}Q_{\mO_i}\cdot Y=0 \label{eq:65}
\end{equation}
where $Q_{\mO_i}$ is the $SL(2)$ charge carried by $\mO_i$. However, the difference of the normal derivative to the dilaton profile is not continuous
\begin{equation}
n\cdot\del_{Y}(\Phi_L(Y)-\Phi_R(Y))=\phi_{b}n\cdot(Q_L-Q_R)=\phi_{b}n\cdot Q_{\mO_i}
\end{equation}
By (\ref{eq:65}), for a tangent vector $T$ along the geodesic we
have $Q_{\mO_i}\cdot T=0$. It follows that 
\begin{equation}
|n\cdot\del_{Y}(\Phi_L(Y)-\Phi_R(Y))|^{2}=\phi_{b}^{2}|n\cdot Q_{\mO_i}|^{2}=\phi_{b}^{2}Q_{\mO_i}^2=\phi_{b}^{2}\mu^{2}
\end{equation}
and we learn $\mu$ from this measurement. Knowing $b_L,b_R,\mu$, we can use \eqref{eq:30} to compute $z\equiv z_R-z_L$ with \eqref{eq:27-1}. The physical meaning of $z$ is the distance between the centers of the two disks. In particular, their centers are related by a $SL(2)$ transformation $M_2(z)$.

\begin{figure}
\begin{centering}
\includegraphics[height=5cm]{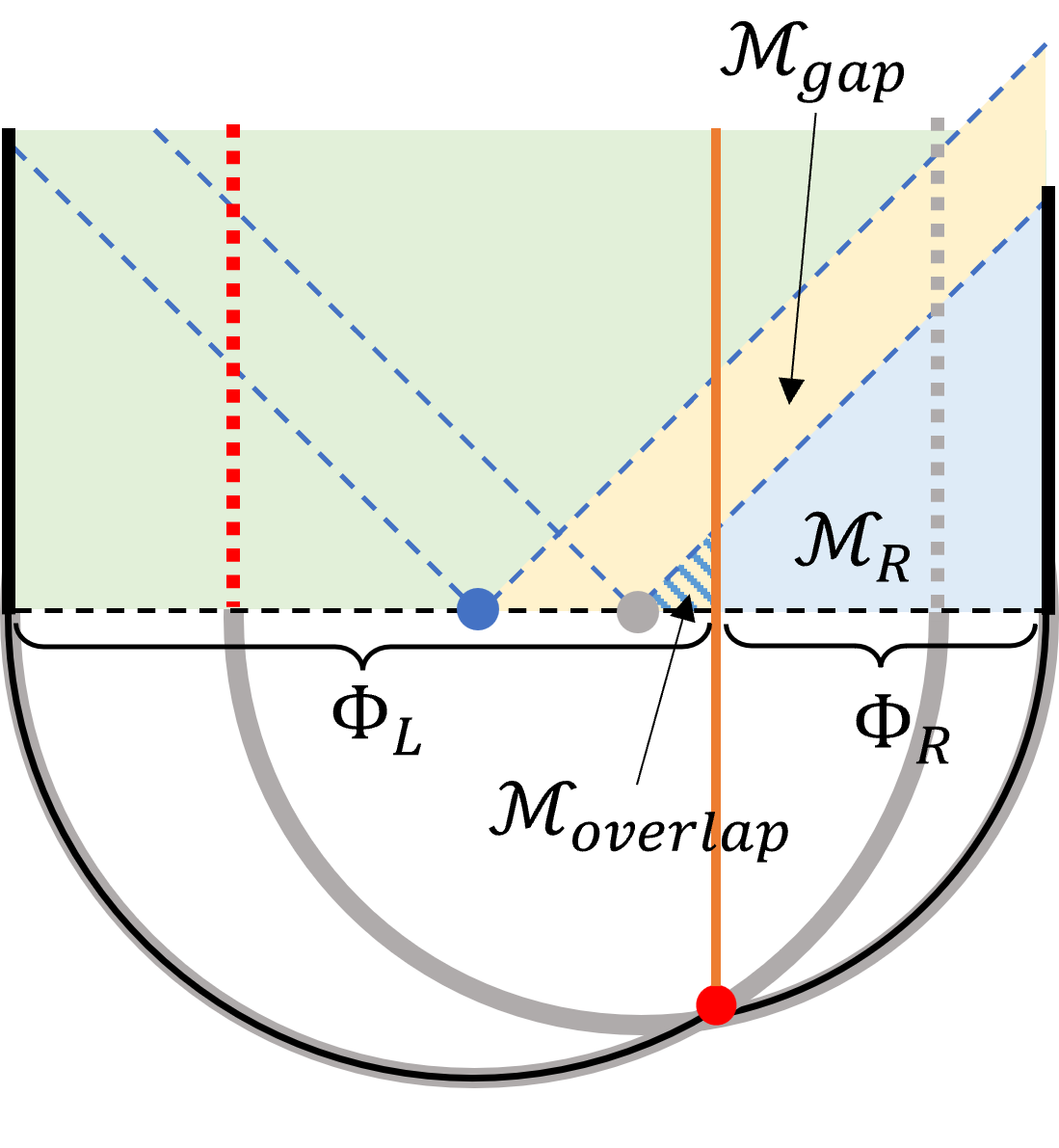}
\par\end{centering}
\caption{The lower half are two Euclidean disks and the upper half is the Lorentzian continuation of a long wormhole. There is only one extremal dilaton point (blue dot), which is the outmost extremal surface from the right boundary. The geodesic (orange line) separates the region between the outmost extremal surface and right boundary into $\mM_{gap}$ (yellow region) and $\mM_R$ (blue region). The right causal wedge has an overlap with $\mM_{gap}$, which we call $\mM_{overlap}$  (shaded region). The right causal wedge $\mM_{CW}=\mM_R\cup\mM_{overlap}$. If there were no right disk, the fictitious right boundary would be along the dashed gray line. }\label{fig:lowC}
\end{figure}

Let us separate the spacetime region between the outmost extremal surface and right boundary along the red geodesic in Figure \ref{fig:1c} into two parts $\mM_{gap}$ and $\mM_R$. The gap region has an overlap with the right causal wedge $\mM_{CW}$, which we call $\mM_{overlap}$. See Figure \ref{fig:lowC} as an illustration. The reconstruction in $\mM_{R}$ is just the ordinary HKLL from the right boundary
\be  
\p_i(Y)=\int dt K_{\D_i}(\r,t;t')\mO^l_{i,R}(u_R(t')),\quad Y=(\r,t)\in\mM_{R} \label{eq:4.5}
\ee
For $\mM_{gap}$, we need some extension of HKLL. If there were no right disk, the state would just be thermofield double with inverse temperature $b_L$. The fictitious right boundary would be the Lorentzian continuation along the right side of the left disk and we could reconstruct the whole fictitious right wedge by the HKLL reconstruction along the fictitious right boundary
\be 
\p_{i,F}(\r_F,t_F)=\int dt K_{\D_i}(\r_F,t_F;t)\mO^l_{i,F}(u_F(t)) \label{eq:4.6}
\ee 
where $u_F=b_L t/(2\pi)$ the subscript $F$ is for the fictitious right spacetime. This fictitious right wedge exactly covers $\mM_{gap}$. Since the two disks are related by $M_2(z)$, we can rewrite \eqref{eq:4.6} as a reconstruction in $\mM_{gap}$
\be  
\p_i(Y)=\int dt K_{\D_i}(\r_z,t_z;t)\mO^l_{i,F}(u_F(t)),~(\r_z,t_z)=M_2(z)\cdot Y\in \mM_{CW},~ Y\in \mM_{gap} \label{eq:4.7}
\ee 
where we stress that $(\r_z,t_z)$ is written in the coordinate of the right causal wedge $\mM_{CW}$. 

To determine the fictitious boundary operator $\mO^l_{i,F}(u)$, we need to impose the smoothness condition for the bulk field in $\mM_{overlap}$ between the $\mM_{R}$ and $\mM_{gap}$. Since $\mM_{overlap}\subset\mM_{CW}$, we can safely extend the reconstruction \eqref{eq:4.5} from $\mM_R$ to $\mM_{CW}$ and the two reconstructions \eqref{eq:4.5} and \eqref{eq:4.7} should match with each other. In particular, extending \eqref{eq:4.7} to $\mM_{CW}$ smoothly and taking $\r_F\ra \infty$ limit, we have
\begin{align}
\mO^l_{i,F}(u_F(t_F))&=\lim_{\r_F\ra\infty} e^{\D_i \r_F}\p_i(\r_F,t_F) \nn\\
&=\lim_{\r_F\ra\infty} e^{\D_i \r_F}\int dt K_{\D_i}(\r_{-z},t_{-z};t)\mO^l_{i,R}(u_R(t)) \label{eq:4.8}
\end{align}
where $(\r_{-z},t_{-z})=M_2(-z)\cdot (\r_F,t_F)$ is the coordinate in $\mM_{CW}$, and in the second line we used \eqref{eq:4.5} to reconstruct the bulk field $\p_i(\r_F,t_F)$. The $SL(2)$ transformation in Lorentzian signature can computed using \eqref{eq:8} with analytic continuation $\tau\ra it$, which leads to
\be 
\cosh\r_{-z}=\cosh z \cosh \r_F-\sinh z \sinh \r_F \cosh t_F,\quad \sinh t_{-z}=\sinh \r_F \sinh t_F/\sinh \r_{-z}
\ee
In large $\r_F$ limit, we have
\be 
\sinh t_{-z}=\f{\sinh t_F}{\cosh z-\sinh z \cosh t_F},\quad e^{\r_{-z}}=e^{\r_F}(\cosh z-\sinh z\cosh t_F)
\ee 
Note that the HKLL kernel obeys
\be 
\lim _{\r\ra\infty}e^{\r \D}K_{\D}(\r,t;t')=\d(t-t')
\ee 
We take them into \eqref{eq:4.8} and derive
\be 
\mO^l_{i,F}(u_F(t_F))=\f{\mO_{i,R}^l\left(\f{b_R}{2\pi}\arcsinh\left(\f{\sinh t_F}{\cosh z-\sinh z \cosh t_F}\right)\right)}{(\cosh z-\sinh z\cosh t_F)^{\D_i}} \label{eq:4.12}
\ee
which by \eqref{eq:4.7} extends the reconstruction all the way to the outmost extremal surface.

For our setup, $z>0$ and \eqref{eq:4.12} becomes singular at finite $t_F=\pm t_*=\pm\arccosh(\coth z)$, which corresponds to $u=\pm \infty$ for $\mO^\l_{i,R}(u)$. From Figure \eqref{fig:lowC}, we see that a light ray from the causal horizon (gray dot) intersects the fictitious boundary (dashed gray line) at a finite $t_F$ but the right boundary at $\infty$. Indeed, in $(\r_F,t_F)$ coordinate, the light ray from causal horizon obeys $t_F=\log (\tanh(\r_F/2)/\tanh(z/2))$. Taking $\r_F\ra\infty$ leads to finite $t_F=\log \coth(z/2)=t_*$. which is consistent with the singularity in \eqref{eq:4.12}. To continue \eqref{eq:4.12} beyond the singularity at $|t_F|=t_*$, we need to add appropriate $i\e$ to $t_F$ depending on the ordering of operators in a correlation function involving $\mO^l_{i,F}$. For example, in a time ordered correlation function, we should modify $t_F\ra t_F-i\e$ with infinitesimal $\e>0$ in \eqref{eq:4.12} and the argument in $\mO_{i,R}^l$ gains an imaginary part $-i b_R/2$ for $|t_F|>t_*$
\be 
\mO^l_{i,F}(u_F(t_F))=\f{\mO_{i,R}^l\left(\f{b_R}{2\pi}\arcsinh\left(\f{\sinh t_F}{\sinh z \cosh t_F-\cosh z}\right)-ib_R/2\right)}{e^{i\pi \D_i \sgn(t_F)}(\sinh z\cosh t_F-\cosh z)^{\D_i}},\quad |t_F|>t_*\label{eq:4.13}
\ee
The imaginary part $b_R/2$ is nothing but the circumference of the semicircle of the right disk in Figure \ref{fig:lowC}. In the thermofield double state prepared by the right disk, this Euclidean analytic continuation would correspond to a left boundary operator, which lives on a fictitious world line to the left of $\mM_{gap}$ (the dashed red line in Figure \ref{fig:lowC}). \eqref{eq:4.13} is the mirror operator in the state-dependent reconstruction \cite{Papadodimas:2012aq,Papadodimas:2013jku}. Indeed, the state dependence is manifest in \eqref{eq:4.12} because the fictitious operator $\mO^l_{i,F}$ is parameterized by $b_R$ and $z$ from the state. As $t_F\ra \pm\infty$, the real part of the argument of $\mO^l_{i,R}$ becomes $\pm b_R/(2\pi)\arcsinh(1/\sinh z)=\pm u_R(t_*)$. This means that for $|t_F|\in(t_*,\infty)$, $\mO^l_{i,t_F}(u_F(t_F))$ corresponds to the operator on the fictitious left boundary within the time band $\pm(t_*,\infty)$. This is quite reasonable as it is exactly the region on the fictitious left boundary with causal dependence to $\mM_{gap}$.

It is easy to see that this method applies to a generic PETS as long as there is no python's lunch right behind the causal horizon, which forbids us to measure the dilaton profile non-smoothness within the causal wedge. Since this method only requires local bulk measurement and HKLL kernel, we may naturally say that this is a low-complexity reconstruction.

\section*{Acknowledgements} I thank Ahmed Almheiri, Chris Akers, Xiao-liang Qi, Huajia Wang, Zhenbin Yang, Yi Wang and Edward Witten for stimulating and helpful discussions. I especially thank David Kolchmeyer for his collaboration 
on the initial stage of this project, and for his talk at ``A Quantum Al-Khawarizm for Spacetime: A Workshop on von Neumann Algebras in Quantum Field Theory \& Gravity" at NYU Abu Dhabi Institute in New York. PG is supported by the US Department of Energy under grant DE-SC0010008.

\appendix 
\section{Generalized free fields and $sl(2)$ 6j symbol} \label{app:1}
In this section, we will use the Euclidean path integral method to
derive the $sl(2)$ 6j symbol for the following matrix element in
the generalized free fields theory
\begin{equation}
\avg{b;k_{4},k_{3}|\mO_{a,L}\mO_{a,R}|b;k_{1},k_{2}}=\avg{b;k_{4},k_{3}|(4e^{-\l})^{\D_{a}}|b;k_{1},k_{2}}=\begin{Bmatrix}\D_{a} & k_{1} & k_{4}\\
\D_{b} & k_{3} & k_{2}
\end{Bmatrix}\sqrt{\G_{k_{1}k_{4}}^{\D_{a}}\G_{k_{2}k_{3}}^{\D_{a}}}\label{eq:app1}
\end{equation}
where the pair of $\mO_{a}$ and $\mO_{b}$ operators are contracted
respectively. This corresponds to the third diagram in Figure \ref{fig:phi2}
when $a=b$. There are a few different ways to derive this formula
\cite{Mertens:2017mtv, Suh:2020lco}. However, the following derivation directly from the Euclidean
path integral is much simpler than \cite{Mertens:2017mtv, Suh:2020lco} and will be most convenient for discussing the semiclassical
limit. 

\begin{figure}
\begin{centering}
\includegraphics[height=4.5cm]{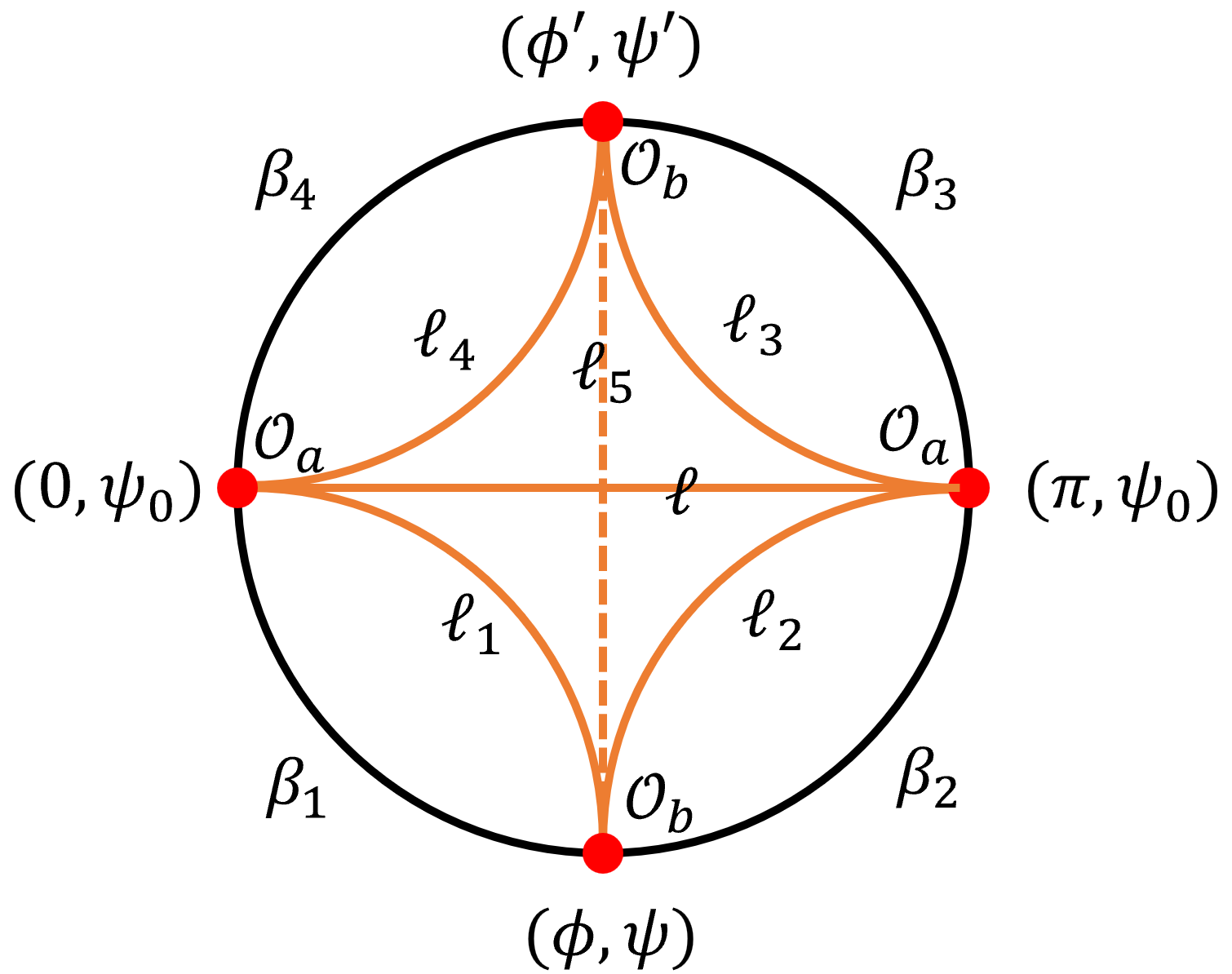}
\par\end{centering}
\caption{The Euclidean path integral for $\avg{b;\b_{4},\b_{3}|(4e^{-\l})^{\D_{a}}|b;\b_{1},\b_{2}}$. }\label{fig:appotoc}
\end{figure}

In the notation of this paper, the explicit formula for the Hartle-Hawking
state $\bra{\Psi_{\b_{1},\b_{2}}^{\mO}(\l)}$ is\footnote{This notation is a simpler version of Eq.(4.2), (2.28) and (2.29)
in \cite{Kolchmeyer:2023gwa} but with a few differences: the spectral density $\r_{here}(k)=2\r_{there}(k)$,
there is an additional factor of 2 in $\mK_{\b}$ but no factor of
2 in $\bra{\Psi_{\b_{1},\b_{2}}^{\mO}(\l)}$.} 
\begin{align}
& \bra{\Psi_{\b_{1},\b_{2}}^{\mO}(\l)} =\int_{-\infty}^{\infty}d\psi_{1}\int_{0}^{\pi}d\phi_{1}\mK_{\b_{2}}(\pi,\psi_{0};\phi_{1},\psi_{1})\mK_{\b_{1}}(\phi_{1},\psi_{1};0,\psi_{0})e^{\D\psi_{1}}\bra{\mO(\phi_{1})}|_{\psi_{0}=\log2-\l/2}\\
&\mK_{\b}(\phi_{1},\psi_{1};\phi_{0},\psi_{0}) =\f{e^{(e^{\psi_{1}}+e^{\psi_{2}})\cot\f{\phi_{1}-\phi_{2}}2}}{\sin\f{\phi_{2}-\phi_{1}}2}\int_{0}^{\infty}dk\r(k)e^{-\b k^{2}/(4\phi_{b})}2K_{2ik}\left(\f{2e^{(\psi_{1}+\psi_{2})/2}}{\sin\f{\phi_{2}-\phi_{1}}2}\right) \label{eq:a3}
\end{align}
By \eqref{hh1}, we have
\begin{equation}
\avg{b;\b_{4},\b_{3}|(4e^{-\l})^{\D_{a}}|b;\b_{1},\b_{2}}=\int\f{(4e^{-\l})^{\D_{a}}(4e^{-\l_{5}})^{\D_{b}}}{-\sin\phi\sin\phi'/4}\prod_{j=1,2,3,4}e^{m_{j}}\r(k_{j})e^{-k_{j}^{2}\b_{j}/(4\phi_{b})}2K_{2ik_{j}}(4e^{-\l_{j}/2})\label{eq:app2}
\end{equation}
where the integral measure is $d\phi d\phi'd\psi d\psi'd\l$ with
$\phi\in[0,\pi],\phi'\in[\pi,2\pi],\psi,\psi',\l\in\R$, and $\l_{j}$
and $m_{j}$ are defined as
\begin{align}
e^{-\l_{1}/2} & =\f{e^{(\psi_{0}+\psi)/2}}{2\sin\f{\phi}2},\quad e^{-\l_{2}/2}=\f{e^{(\psi_{0}+\psi)/2}}{2\cos\f{\phi}2},\quad e^{-\l_{3}/2}=\f{e^{(\psi_{0}+\psi')/2}}{-2\cos\f{\phi'}2},\quad e^{-\l_{4}/2}=\f{e^{(\psi_{0}+\psi')/2}}{2\sin\f{\phi'}2}\label{eq:app3}\\
\psi_{0} & =\log2-\l/2,\quad e^{-\l_{5}/2}=\f{e^{(\psi+\psi')/2}}{2\sin\f{\phi'-\phi}2},\quad m_{1}=-(e^{\psi_{0}}+e^{\psi})\cot\f{\phi}2\label{eq:app4}\\
m_{2} & =-(e^{\psi_{0}}+e^{\psi})\tan\f{\phi}2,\quad m_{3}=(e^{\psi_{0}}+e^{\psi'})\tan\f{\phi'}2,\quad m_{4}=(e^{\psi_{0}}+e^{\psi'})\cot\f{\phi'}2
\end{align}
where $\phi$ and $\psi$ are the canonical degrees of freedom
in quantization of Schwarzian action \cite{Kolchmeyer:2023gwa,Jafferis:2019wkd}, and $\l_{k}$ are geometric
variables for the renormalized geodesic lengths in Figure \ref{fig:appotoc}. Moreover,
$m_{k}$ can be further decomposed as geometric integrals using \eqref{eq:2.22}
\begin{equation}
e^{m_{1}+m_{2}}=I(\l_{1},\l_{2},\l),\quad e^{m_{3}+m_{4}}=I(\l_{3},\l_{4},\l)
\end{equation}
Changing $d\phi d\phi'd\psi d\psi'$ to $d\l_{1}d\l_{2}d\l_{3}d\l_{4}$
gives an additional Jacobian $-(\sin\phi\sin\phi')/4$ which cancels
the same factor in the denominator in (\ref{eq:app2}). Note from
(\ref{eq:app3}) and (\ref{eq:app4}) that $\l_{5}$ is not independent
from $\l_{1,2,3,4}$ and $\l$ but can be written as
\begin{equation}
e^{\l_{5}/2}=(e^{(\l_{1}+\l_{3})/2}+e^{(\l_{2}+\l_{4})/2})e^{-\l/2}\label{eq:app7}
\end{equation}
It follows that (\ref{eq:app2}) can be written as an integral over
geometric quantities
\begin{equation}
\avg{b;\b_{4},\b_{3}|(4e^{-\l})^{\D_{a}}|b;\b_{1},\b_{2}}=\int\f{(4e^{-\l})^{\D_{a}}(4e^{\l})^{\D_{b}}I(\l_{1},\l_{2},\l)I(\l_{3},\l_{4},\l)}{(e^{(\l_{1}+\l_{3})/2}+e^{(\l_{2}+\l_{4})/2})^{2\D_{b}}}\prod_{j=1,2,3,4}\Psi_{\b_{j}}(\l_{j})\label{eq:app8}
\end{equation}
which after inverse Laplace transformation gives
\begin{equation}
\avg{b;k_{4},k_{3}|(4e^{-\l})^{\D_{a}}|b;k_{1},k_{2}}=\int\f{(4e^{-\l})^{\D_{a}}(4e^{\l})^{\D_{b}}I(\l_{1},\l_{2},\l)I(\l_{3},\l_{4},\l)}{(e^{(\l_{1}+\l_{3})/2}+e^{(\l_{2}+\l_{4})/2})^{2\D_{b}}\sqrt{\G_{k_{1}k_{2}}^{\D_{b}}\G_{k_{3}k_{4}}^{\D_{b}}}}\prod_{j=1,2,3,4}2K_{2ik_{j}}(4e^{-\l_{j}/2})\label{eq:app9}
\end{equation}

To evaluate this integral, let us do a Barnes integral representation
for the $(4e^{-\l_{5}})^{\D_{b}}$ factor as
\begin{equation}
\f{(4e^{\l})^{\D_{b}}}{(e^{(\l_{1}+\l_{3})/2}+e^{(\l_{2}+\l_{4})/2})^{2\D_{b}}}=(4e^{\l})^{\D_{b}}e^{-(\l_{2}+\l_{4})\D_{b}}\int_{L_{t}}\f{dt}{2\pi i}\f{\G(-t)\G(2\D_{b}+t)}{\G(2\D_{b})}(e^{(\l_{1}+\l_{3}-\l_{2}-\l_{4})/2})^{t}
\end{equation}
where $L_{t}$ is the line from $-i\infty$ to $i\infty$ with real
part $t_{r}\in(-2\D_{b},0)$. Taking this back to (\ref{eq:app9}),
we can easily integrate over $\l_{i}$ and $\l$ using \eqref{Gd19} and have
\begin{align}
\mM & \equiv\avg{b;k_{4},k_{3}|(4e^{-\l})^{\D_{a}}|b;k_{1},k_{2}}\sqrt{\G_{k_{1}k_{2}}^{\D_{b}}\G_{k_{3}k_{4}}^{\D_{b}}}\nonumber \\
 & =\int_{L_{t}}\f{dt}{2\pi i}\f{\G(-t)\G(2\D_{b}+t)}{\G(2\D_{b})}\int dkdk'\r(k)\r(k')\G_{kk_{1}}^{-t/2}\G_{kk_{2}}^{\D_{b}+t/2}\G_{kk'}^{\D_{a}-\D_{b}}\G_{k'k_{3}}^{-t/2}\G_{k'k_{4}}^{\D_{b}+t/2}
\end{align}
Note that the spectrum can be written as $\r(k)=(\pi|\G(2ik)|^{2})^{-1}$.
Let us assume $\D_{a}>\D_{b}$. We can integrate $k$ as
\begin{align}
 & \f 1{\pi}\int_{0}^{\infty}dk\f{\G_{kk_{1}}^{-t/2}\G_{kk_{2}}^{\D_{b}+t/2}\G_{kk'}^{\D_{a}-\D_{b}}}{|\G(2ik)|^{2}}=\f{\G(\D_{b}\pm ik_{1}+ik_{2},\D_{a}+t/2-ik_{2}\pm ik')}{2\pi i\G(2\D_{b}+t)}\nonumber \\
 & \times\int_{L_{s}}ds\f{\G(\D_{a}+t/2+s+ik_{2}\pm ik',s+t+2\D_{b},\pm ik_{1}-ik_{2}-s-t-\D_{b},-s)}{\G(-s-t,s+t+2\D_{a})}\label{eq:app12}
\end{align}
where we used (2.3) in \cite{yurii2002beta}\footnote{There are a few different ways to apply (2.3) in \cite{yurii2002beta}. For our
purpose of (\ref{eq:app12}) to separate $\D_{a}$ and $\D_{b}$,
we choose parameters there as $a,f=\D_{a}+t/2\pm ik_{2}$, $b,e=\D_{a}-\D_{b}\pm ik'$,
and $c,d=-t/2\pm ik_{1}$.} and $L_{s}$ is a line from $-i\infty$ to $i\infty$ with real part
$s_{r}$ such that the real part of all gamma functions in the numerator
of (\ref{eq:app12}) is positive, namely $\max(-2\D_{b}-t_{r},-\D_{a}-t/2)<s_{r}<\min(0,-\D_{b}-t)$.
Applying (2.3) in \cite{yurii2002beta}\footnote{This time we choose parameters there as $a,f=-t/2\pm ik_{3}$, $b,e=\D_{b}+t/2\pm ik_{4}$,
$c=\D_{a}+t/2+s+ik_{2}$, and $d=\D_{a}+t/2-ik_{2}$.} again to the $k'$ integral, we have
\begin{align}
 & \f 1{\pi}\int_{0}^{\infty}dk'\f{\G_{k'k_{3}}^{-t/2}\G_{k'k_{4}}^{\D_{b}+t/2}\G(\D_{a}+t/2+s+ik_{2}\pm ik',\D_{a}+t/2-ik_{2}\pm ik')}{|\G(2ik')|^{2}}\nonumber \\
= & \f{\G(\D_{a}+s+ik_{2}+ik_{3},s+t+2\D_{a},\D_{b}-ik_{3}\pm ik_{4},\D_{a}+ik_{3}-ik_{2})}{2\pi i\G(-t)}\nonumber \\
 & \times\int_{L_{s'}}ds'\f{\G(\D_{b}+ik_{3}\pm ik_{4}+s',s'-t,\D_{a}+t-ik_{2}-ik_{3}-s',\D_{a}+t+ik_{2}-ik_{3}+s-s',-s')}{\G(2\D_{b}+s',2\D_{a}+s+t-s')}\label{eq:app12-1}
\end{align}
where $L_{s'}$ is defined similarly and the real part of $s'$ obeys
$\max(t,-\D_{b})<s_{r}'<\min(\D_{a}+t+s,0)$. Putting (\ref{eq:app12})
and (\ref{eq:app12-1}) together, we have
\begin{align}
\mM & =\int\f{dtdsds'}{(2\pi i)^{3}}\f{\G(\D_{b}\pm ik_{1}+ik_{2},\D_{b}-ik_{3}\pm ik_{4},\D_{a}+ik_{3}-ik_{2})}{\G(2\D_{b})}\nonumber \\
 & \times\f{\G(s+t+2\D_{b},\pm ik_{1}-ik_{2}-s-t-\D_{b},-s,\D_{a}+s+ik_{2}+ik_{3})}{\G(-s-t)}\nonumber \\
 & \times\f{\G(\D_{b}+ik_{3}\pm ik_{4}+s',s'-t,\D_{a}+t-ik_{2}-ik_{3}-s',\D_{a}+t+ik_{2}-ik_{3}+s-s',-s')}{\G(2\D_{b}+s',2\D_{a}+s+t-s')}\label{eq:app14}
\end{align}
Define $\tilde{t}=t-s$ and we find that there are four gamma functions
only involving $s$ for fixed $\tilde{t}$
\begin{align}
 & \f 1{2\pi i}\int_{L_{s}}ds\G(\D_{a}+ik_{2}+ik_{3}+s,s'-\tilde{t}+s,\D_{a}-s'+\tilde{t}-ik_{2}-ik_{3}-s,-s)\nonumber \\
= & \f{\G(2\D_{a}-s'+\tilde{t},\D_{a}+ik_{2}+ik_{3},\D_{a}-ik_{2}-ik_{3},s'-\tilde{t})}{\G(2\D_{a})}
\end{align}
which is the first Barnes lemma. Taking this back
to (\ref{eq:app14}) we have
\begin{align}
\mM & =\int\f{d\tilde{t}ds'}{(2\pi i)^{2}}\f{\G(\D_{b}\pm ik_{1}+ik_{2},\D_{b}-ik_{3}\pm ik_{4},\D_{a}\pm ik_{3}\pm ik_{2})}{\G(2\D_{a},2\D_{b})}\nonumber \\
 & \times\f{\G(\tilde{t}+2\D_{b},\pm ik_{1}-ik_{2}-\tilde{t}-\D_{b},\D_{b}+ik_{3}\pm ik_{4}+s',-s')}{\G(-\tilde{t},2\D_{b}+s')}\nonumber \\
 & \times\f{\G(\D_{a}+\tilde{t}+ik_{2}-ik_{3}-s',s'-\tilde{t})}{\G(\D_{a}+ik_{2}-ik_{3})}\label{eq:app14-1}
\end{align}
Note that the last line is a beta function, which has an integral
representation
\begin{equation}
\f{\G(\D_{a}+\tilde{t}+ik_{2}-ik_{3}-s',s'-\tilde{t})}{\G(\D_{a}+ik_{2}-ik_{3})}=\int_{0}^{1}dxx^{\D_{a}+\tilde{t}+ik_{2}-ik_{3}-s'-1}(1-x)^{s'-\tilde{t}-1}
\end{equation}
Taking this back to (\ref{eq:app14-1}), we find that the $\tilde{t}$
and $s'$ integrals can be done separately. Using the integral representation
of hypergeometric function
\begin{equation}
_{2}F_{1}(a,b;c;z)=\f{\G(c)}{2\pi i\G(a,b)}\int_{L}dt\f{\G(a+t,b+t,-t)}{\G(c+t)}(-z)^{t}
\end{equation}
where $L$ is a line from $-i\infty$ to $i\infty$ with real part
of $a+t,b+t,-t$ all positive, the $\tilde{t}$ and $s'$ integrals
lead to a product of two hypergeometric functions
\begin{align}
\mM & =\f{\G(\D_{b}\pm ik_{1}\pm ik_{2},\D_{b}\pm ik_{3}\pm ik_{4},\D_{a}\pm ik_{2}\pm ik_{3})}{\G(2\D_{a},2\D_{b})\G(2\D_{b})^{2}}\int_{0}^{1}dxx^{\D_{a}+ik_{1}+ik_{4}-1}\nonumber \\
 & \times(1-x)^{2\D_{b}-1}{}_{2}F_{1}(\D_{b}+ik_{1}\pm ik_{2};2\D_{b};1-x){}_{2}F_{1}(\D_{b}+ik_{4}\pm ik_{3};2\D_{b};1-x)\nonumber \\
 & =\G_{k_{1}k_{4}}^{\D_{a}}\G_{k_{2}k_{3}}^{\D_{a}}\G_{k_{3}k_{4}}^{\D_{b}}\G_{k_{1}k_{2}}^{\D_{b}}\left[\f{\G(-2ik_{4})}{\G(\D_{a}+\D_{b}+ik_{4}\pm ik_{2},\D_{a}\pm ik_{1}-ik_{4},\D_{b}\pm ik_{3}-ik_{4})}\right.\nonumber \\
 & \left.\times{}_{4}F_{3}(\D_{a}\pm ik_{1}+ik_{4},\D_{b}\pm ik_{3}+ik_{4};\D_{a}+\D_{b}+ik_{4}\pm ik_{2},1+2ik_{4};1)+(k_{4}\leftrightarrow-k_{4})\right]
\end{align}
where we used the equation 12 on page 331 of \cite{prudnikov1990more}. Comparing with
the definition of the $sl(2)$ 6j symbol in \cite{Jafferis:2022wez} ((A.17) and (A.18)),
we find that
\begin{equation}
\mM=\sqrt{\G_{k_{1}k_{4}}^{\D_{a}}\G_{k_{2}k_{3}}^{\D_{a}}\G_{k_{3}k_{4}}^{\D_{b}}\G_{k_{1}k_{2}}^{\D_{b}}}\begin{Bmatrix}\D_{a} & k_{3} & k_{2}\\
\D_{b} & k_{1} & k_{4}
\end{Bmatrix}
\end{equation}
which matches with (\ref{eq:app1}) due to the symmetry of the $sl(2)$
6j symbol
\begin{equation}
\begin{Bmatrix}\D_{a} & k_{3} & k_{2}\\
\D_{b} & k_{1} & k_{4}
\end{Bmatrix}=\begin{Bmatrix}\D_{a} & k_{1} & k_{4}\\
\D_{b} & k_{3} & k_{2}
\end{Bmatrix}\label{eq:app23}
\end{equation}

One of the advantage of the geometric integral formula (\ref{eq:app9})
for the $sl(2)$ 6j symbol is that one can easily see the permutation
symmetries besides (\ref{eq:app23}). First, (\ref{eq:app9}) is symmetric
under exchange of $(k_{1},k_{2})\leftrightarrow(k_{3},k_{4})$ by
simultaneous exchange of $(\l_{1},\l_{2})\leftrightarrow(\l_{3},\l_{4})$;
second, (\ref{eq:app9}) is symmetric under exchange of $(k_{1},k_{2})\leftrightarrow(k_{4},k_{3})$
by simultaneous exchange of $(\l_{1},\l_{2})\leftrightarrow(\l_{4},\l_{3})$.
Divided by $\sqrt{\G_{k_{1}k_{4}}^{\D_{a}}\G_{k_{2}k_{3}}^{\D_{a}}}$,
(\ref{eq:app9}) is symmetric under exchange of $(\D_{a},k_{2})\leftrightarrow(\D_{b},k_{4})$.
This can be seen as follows. We can first exchange $\l_{2}\leftrightarrow\l_{4}$,
then define $\l_{5}$ as (\ref{eq:app7}) to replace $\l$, which
makes the power of $\D_{a}$ and $\D_{b}$ part switch back. Using
the explicit formula \eqref{eq:2.22} for $I(\l_{1},\l_{2},\l_{3})$, we can
easily show that
\begin{equation}
I(\l_{1},\l_{2},\l)I(\l_{3},\l_{4},\l)=I(\l_{1},\l_{4},\l_{5})I(\l_{2},\l_{3},\l_{5})\label{eq:app22}
\end{equation}
which justifies this symmetry. Indeed, the geometric meaning of (\ref{eq:app22})
is obvious: we can split a rectangle into two triangles in two equivalent
ways in JT gravity. 

\section{Geometric integral in the semiclassical limit} \label{app:2}

Another advantage of the geometric integral formula is that we can
easily see the OTOC (the third diagram in Figure \ref{fig:phi2}) is exponentially
suppressed relative to TOC (the first two diagrams in Figure \ref{fig:phi2}) in
the semiclassical limit. Let us take $\D_{a}=\D_{b}=\D$ in (\ref{eq:app8}),
which corresponds to the third diagram in Figure \ref{fig:phi2} with $\b_{1}=\b_{L},\b_{2}=\b_{4}=\b_{0}$
and $\b_{3}=\b_{R}$. For the first diagram in Figure \ref{fig:phi2}, in terms
of the geometric integral we have
\begin{equation}
D_{1}=\int(4e^{-\l_{1}})^{\D}(4e^{-\l_{3}})^{\D}I(\l_{1},\l_{2},\l)I(\l_{3},\l_{4},\l)\prod_{j=1,2,3,4}\Psi_{\b_{j}}(\l_{j})
\end{equation}
Similarly, for the second diagram in Figure \ref{fig:phi2}, we have
\begin{equation}
D_{2}=\int(4e^{-\l_{2}})^{\D}(4e^{-\l_{4}})^{\D}I(\l_{1},\l_{2},\l)I(\l_{3},\l_{4},\l)\prod_{j=1,2,3,4}\Psi_{\b_{j}}(\l_{j})
\end{equation}
Note that (\ref{eq:app8}) and $D_{1,2}$ share the same $I$ and
$\Psi_{\b}$ factors. By the geometric relation (\ref{eq:app7}),
we have
\begin{equation}
e^{-\l-\l_{5}}=(e^{(\l_{1}+\l_{3})/2}+e^{(\l_{2}+\l_{4})/2})^{-2}<\min(e^{-\l_{1}-\l_{3}},e^{-\l_{2}-\l_{4}})
\end{equation}
In the semiclassical limit, we take $\D\sim O(\phi_{b})\ra\infty$,
which implies 
\begin{equation}
(e^{-\l-\l_{5}})^{\D}/\min(e^{-\l_{1}-\l_{3}},e^{-\l_{2}-\l_{4}})^{\D}\sim e^{-O(\phi_{b})}\ra0
\end{equation}
In order to show (\ref{eq:app8}) is exponentially suppressed relative
to $D_{1,2}$, it suffices to show that both $I$ and $\Psi_{\b}$
are positive in the semiclassical limit. By \eqref{eq:2.22}, $I$ is manifestly
positive. The wavefunction $\Psi_{\b}(\l)$ is positive in the semiclassical
limit \cite{Harlow:2018tqv} because we can use integral representation \eqref{eq:7} and
saddle approximation
\begin{align}
\Psi_{\b}(\l) & =\int dk\r(k)e^{-\b k^{2}/(4\phi_{b})}K_{2ik}(4e^{-\l/2})\nonumber \\
 & \sim\int dkd\a e^{2\phi_{b}(2\pi k-\b k^{2}/2-4e^{-\l/2}\cos\a-2k\a)}\nonumber \\
 & \sim\int d\a e^{2\phi_{b}(2(\pi-\a)^{2}/\b-4e^{-\l/2}\cos\a)}
\end{align}
where we rescale $k\ra2\phi_{b}k$ and $e^{-\l/2}\ra2\phi_{b}e^{-\l/2}$.
Taking derivative to $\a$, we have the saddle value $\a_{s}$ obeying
\begin{equation}
\sin\a_{s}=e^{\l/2}(\pi-\a_{s})/\b
\end{equation}
Define $x=\pi-\a_s$, and we have \cite{Harlow:2018tqv}
\be 
\Psi_\b(\l)\sim e^{4\phi_b/\b(x(\l)^2+2x \cot x(\l))},\quad \f{\sin x(\l)}{x(\l)}=\b^{-1}e^{\l/2}
\ee 
When $\b^{-1}e^{\l/2}\leq 1$, there is a real solution $x(\l)\in[0,\pi]$; when  $\b^{-1}e^{\l/2}> 1$, there are two conjugate pure imaginary solutions $x(\l)\in i\R$. Both solutions give positive wavefunction $\Psi_\b(\l)$ though it is not smooth (but continuous) at $\l=2\log \b $.

This method can be easily generalized to a Euclidean path integral
with multiple heavy operator insertions. For a generic Euclidean path
integral, whose boundary consists of $n$ segment of length $\b_{i}$,
it can be written as
\begin{equation}
\mZ_{n}=\int\prod_{j=1}^{n}d\l_{j}\Psi_{\b_{j}}(\l_{j})\mI(\l_{1},\cdots,\l_{n})
\end{equation}
where $\mI$ is the region bounded by $n$ geodesics and can be further
triangularized in terms of a product of functions $I(\l_{a},\l_{b},\l_{c})$
where $\l_{a,b,c}$ are the three sides of each triangle. See Figure
\ref{fig:triang} as an illustration. Let us label all sides of a triangularization
as $\{\l_{a}\}$ with $a=1,\cdots,2n-3$ (clearly $\{\l_{1},\cdots,\l_{n}\}$
is a subset). Then we can write
\begin{equation}
\mI(\l_{1},\cdots,\l_{n})=\int\prod_{a\geq n+1}d\l_{a}F_{\D_{i}}(\{\l_{a}\})\prod_{\{a,b,c\}}I(\l_{a},\l_{b},\l_{c})
\end{equation}
where $F_{\D_{i}}$ is a product of $(4e^{-\tilde{\l}})^{\D_{i}}$
representing the Wick contractions and each $\tilde{\l}$ is a function
of $\l_{a}$ determined by geometric relations similar to (\ref{eq:app7}).
Note that different triangularizations are equivalent by repetitively
using (\ref{eq:app22}).

\begin{figure}
\begin{centering}
\includegraphics[height=4.5cm]{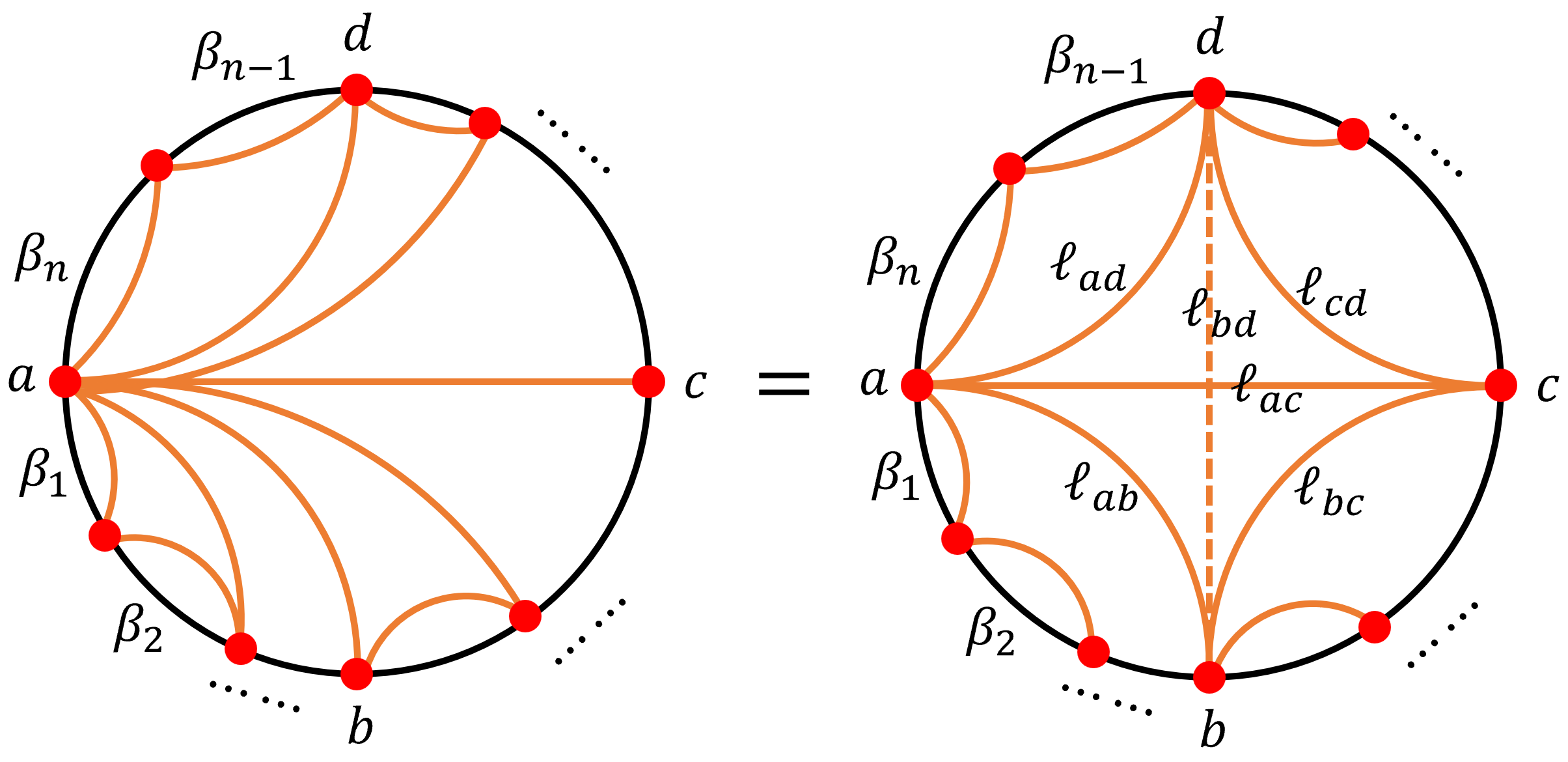}
\par\end{centering}
\caption{If we directly use the propagator $\mK_\b$ in \eqref{eq:a3}, we will have the triangularization in the left diagram, which is equivalent to any triangularization, for example, the one in the right diagram.}\label{fig:triang}
\end{figure}

By this formalism, we will be able to show the following proposition.
\begin{prop}
If there are more than one pair of $\mO_{i}$ in $\mZ_{n}$, the Wick
contraction with crossing geodesics is exponentially suppressed relative
to a non-crossing Wick contraction in the semiclassical limit.
\end{prop}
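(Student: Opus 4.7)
The plan is to reduce the general case to the four-point comparison done just above: any crossing among same-type Wick pairs in a pattern $W$ can be localized to a bulk quadrilateral formed by the four insertion points involved, and on that quadrilateral an AdS$_2$ Ptolemy-type identity together with the positivity of $\Psi_\b$ and $I$ gives a pointwise inequality between the crossing and non-crossing integrands.

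First I would isolate one crossing. Suppose $W$ contains two Wick contractions of $\mO_i$ joining boundary points $a,c$ and $b,d$ in cyclic order $a,b,c,d$. By iterated application of the triangle identity \eqref{eq:app22}, refine the triangularization of $\mZ_n$ so that its edge set contains the four sides $\tilde\l_{ab},\tilde\l_{bc},\tilde\l_{cd},\tilde\l_{da}$ of the quadrilateral $abcd$ together with one of its diagonals, leaving all other triangularization edges and Wick contractions untouched. In this triangularization the crossing contraction inserts $(4e^{-\tilde\l_{ac}})^{\mu_i}(4e^{-\tilde\l_{bd}})^{\mu_i}$ into $F_{\D_i}$, while replacing it by the non-crossing $(a,b)(c,d)$ contraction gives $(4e^{-\tilde\l_{ab}})^{\mu_i}(4e^{-\tilde\l_{cd}})^{\mu_i}$, with every other ingredient of the integrand unchanged.

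Next I would apply the AdS$_2$ Ptolemy identity. Embedding the quadrilateral in the hyperboloid, the six pairwise geodesic lengths satisfy
\begin{equation}
e^{(\tilde\l_{ac}+\tilde\l_{bd})/2}=e^{(\tilde\l_{ab}+\tilde\l_{cd})/2}+e^{(\tilde\l_{bc}+\tilde\l_{da})/2},
\end{equation}
which generalizes \eqref{eq:app7} and forces $\tilde\l_{ac}+\tilde\l_{bd}>\tilde\l_{ab}+\tilde\l_{cd}$ strictly on the integration domain, so the ratio of the crossing to the non-crossing integrand is $e^{-\mu_i\delta(\l)}$ with $\delta(\l)>0$ pointwise. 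Since $\Psi_\b(\l)$ is positive in the semiclassical limit (by the saddle analysis given just above) and $I(\l_a,\l_b,\l_c)$ is manifestly positive by \eqref{eq:2.22}, the pointwise ratio passes under the integral; after the rescaling $\mu_i\to 2\phi_b\mu_i$ the suppression factor of the crossing integral relative to the non-crossing one becomes $e^{-2\phi_b\mu_i\delta_\star}$, with $\delta_\star>0$ the value of the gap at the dominant saddle. Iterating the argument on each remaining crossing of $W$ produces a fully non-crossing pattern $W'$ whose contribution dominates that of $W$ by a factor $e^{O(\phi_b)}$.

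The main obstacle I expect is controlling $\delta_\star>0$ when the dominant saddle of the non-crossing integral approaches a degenerate quadrilateral, e.g.\ two of $a,b,c,d$ colliding on the boundary or the quadrilateral flattening, since the Ptolemy gap can shrink in those limits. In the generic PETS regime of this paper the saddle remains in the interior of configuration space and $\delta_\star=O(1)$, so the exponential suppression is manifest; handling degenerate limits would require a separate analysis near the collisions, which I do not pursue here.
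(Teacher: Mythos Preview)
Your proposal is correct and follows essentially the same route as the paper: isolate a crossing among four same-type insertions $a,b,c,d$, invoke the Ptolemy-type identity $e^{(\l_{ac}+\l_{bd})/2}=e^{(\l_{ab}+\l_{cd})/2}+e^{(\l_{ad}+\l_{bc})/2}$ to get a strict pointwise inequality on the integrand, use positivity of $\Psi_\b$ and $I$ to pass it under the integral, and then iterate over all remaining crossings. Your explicit discussion of refining the triangularization via \eqref{eq:app22} and your remark about controlling the gap $\delta_\star$ at possibly degenerate saddles are details the paper leaves implicit, but the core argument is the same.
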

\begin{proof}
Similar to the proof for Figure \ref{fig:phi2}. Suppose there are
four $\mO_{i}$ with two crossing Wick contractions (see the right diagram in Figure \ref{fig:triang}).
Label them clockwise as $a,b,c,d$ respectively. The the contribution
from these two crossing Wick contractions are $(4e^{-\l_{ac}})^{\D_{i}}(4e^{-\l_{bd}})^{\D_{i}}$.
On the other hand, the geometric relation (\ref{eq:app7}) holds
\begin{equation}
e^{(\l_{ac}+\l_{bd})/2}=e^{(\l_{ab}+\l_{cd})/2}+e^{(\l_{ad}+\l_{bc})/2}
\end{equation}
which means that this diagram is exponentially suppressed relative
to the one with $ab$ and $cd$ contracted (and also the one with
$ad$ and $bc$ contracted). The later does not have crossing for
these four $\mO_{i}$'s. Applying this argument to all pairs of crossing
implies that the leading order contribution to $\mZ_{n}$ has no crossing
Wick contractions between any two pairs of identical $\mO_{i}$'s.
\end{proof}

\section{Semiclassical limit of $\bra{\Phi_2}$} \label{app:3}
Let us define the following function
\begin{align}     
F(\b_0,\b_2;\b_1)=&\int dk_1dk_2dk_3d\l_1d\l_2e^{-(\b_0k_0^2+\b_1 k_1^2+\b_2 k_2^2)/(4\phi_b)}\r(k_0)\r(k_1)\r(k_2)(4e^{-\l_1})^\D(4e^{-\l_2})^\D\nn\\
& \times2K_{2ik_0}(4e^{-\l_1/2})2K_{2ik_1}(4e^{-\l_1/2})2K_{2ik_1}(4e^{-\l_2/2})2K_{2ik_2}(4e^{-\l_2/2}) \label{c1}
\end{align}
where the first two arguments are symmetric. It is easy to see that the first diagram in Figure \ref{fig:phi2} is $F_1=F(\b_L,\b_R;2\b_0)$ and the second diagram is $F_2=F(\b_0,\b_0;\b_L+\b_R)$. As we show in Appendix \ref{app:2}, we only need to compare these two to determine the leading contribution in the saddle approximation. 

As $F_2$ depends on $\b_{L,R}$ only through $\b_L+\b_R$, we can define $\b_{\pm}=(\b_L\pm\b_R)/2$ and first study the $\b_-$ dependence of $F_1$ while keeping $F_2$ invariant. Taking $\b_-$ derivative to $F_1$, we have $\del_{\b_-}F_1=\avg{k_2^2-k_0^2}$, where the vev means evaluating in the integral \eqref{c1}. As $F_1$ takes value at the saddle, the derivative of $\del_{\b_-}$ acts on two parts: the explicit $\b_-$ dependence in the exponent, which gives $k_2^2-k_0^2$ at the saddle value, and the implicit $\b_-$ dependence through the saddle value of other variables. The latter vanishes at the saddle because it is proportional to the saddle equation. It boils down to determining which saddle value of $k_0$ and $k_2$ is larger. By symmetry, $F_1$ is an even function of $\b_-$ and we only need to compare $k_{0,2}$ for $\b_->0$. 

Solving the saddle equations \eqref{eq:11-1} and \eqref{eq:12-1} for $F_1$, we have 
\begin{align} 
\cos k_1 \b_0&=\f{k_1^2\sin \f{k_0\b_L}2 \sin \f{k_2\b_R}2-\left( \mu\sin \f{k_0\b_L}2+k_0\cos \f{k_0\b_L}2\right)\left( \mu \sin \f{k_2\b_R}2+k_2\cos \f{k_2\b_R}2\right)}{k_0 k_2}  \label{c2}\\
k_1^2+\mu^2&=k_2^2-2\mu k_2 \cot \f{k_2 \b_R}2=k_0^2-2\mu k_0 \cot \f{k_0 \b_L}2 \label{c3}
\end{align}
Note that \eqref{c3} is symmetry under $(k_0,\b_L)\leftrightarrow(k_2,\b_R)$. Define $f(k,\b)=k^2-2\mu k \cot \f{k \b}2$, and we have
\be 
\del_\b f(k,\b)=\f{\mu^2}{\sin^2 (k\b/2)}>0,\quad \del_k f(k,\b)=2k+\f{\mu(\b k-\sin \b k)}{\sin^2 (k\b/2)}>0
\ee 
For any value of $k_1$, \eqref{c3} implies that $k_0<k_2$ and $\del_{\b_-}F_1>0$ for $\b_->0$. Therefore, $F_1$ has a unique minimum at $\b_-=0$ and is monotonically increasing for $\b_->0$. On the other hand, if we take the maximal value $\b_-\ra\b_+$, namely $\b_R\ra 0$, we must have $k_2\sim (2\mu^{1/2}+\a \b_R+\cdots)/\b_R^{1/2}\ra\infty$, for which \eqref{c2} and \eqref{c3} in leading order become
\begin{align} 
&k_0 \cos\f{k_0\b_L}2+\mu \sin \f{k_0 \b_L}{2}+ k_0 \cos k_1 \b_0=0\\
k_1^2 &+\mu^2=4\mu^2/3+4\a \mu^{1/2}=k_0^2-2\mu k_0 \cot \f{k_0 \b_L}2
\end{align}
which gives finite $k_{0,1}$ and $\a$. Therefore $F_1$ is unbounded from above and scales as $F_1\sim \log (\b_+-\b_-)\ra\infty$ when $\b_-\ra \b_+$.

\begin{figure}
\begin{centering}
\subfloat[]{\begin{centering}
\includegraphics[height=4cm]{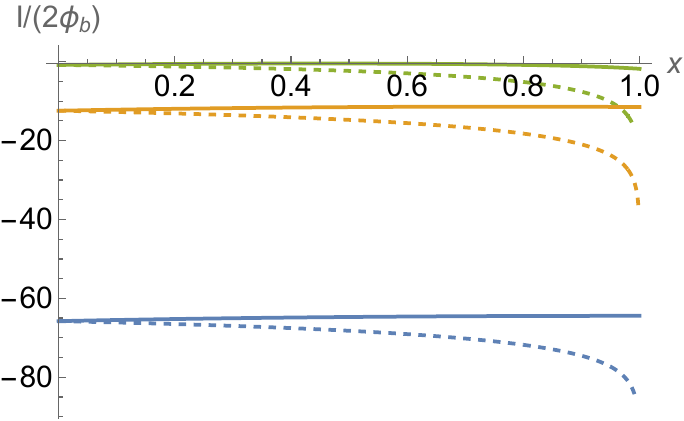}\label{fig:onshellI}
\par\end{centering}

}\subfloat[]{\begin{centering}
\includegraphics[height=4cm]{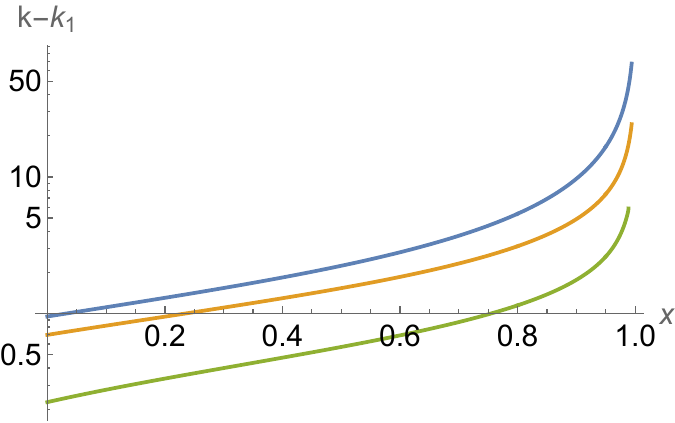}\label{fig:kk1}
\par\end{centering}

}
\par\end{centering} 
\caption{(a) Compare the on-shell action for $F_1$ (solid curves) and $F_2$ (dashed curves). (b) The on-shell $k-k_1$ of $F_2$. In both plots, the horizontal axis is $x=(\b_+-\b_0)/(\b_++\b_0)$ and blue, yellow and green curves are for $(\b_++\b_0)/2=0.1,1,10$ respectively.} 
\end{figure}

Then we will compare the minimum of $F_1=F(\b_+,\b_+;2\b_0)$ at $\b_L=\b_R$ with $F_2$. Since they are now in the same form, we can just study $F_1$. By symmetry $\b_L=\b_R=\b_+$, we have $k_0=k_2=k$ at the saddle and the saddle equations \eqref{c2} and \eqref{c3} become
\begin{align}
k_1^2 \sin^2 \f{k\b_+}2&=k^2 \cos k_1 \b_0+\left(\mu\sin \f{k\b_+}2+k\cos \f{k\b_+}2\right)^2 \label{c6}\\
k_1&=\sqrt{k^2-\mu^2 -2\mu k \cot \f{k \b_+}2} \label{c7}
\end{align}
The on-shell action \eqref{eq:2.40} is 
\be 
I/(2\phi_b)=-k^2\b_+-k_1^2\b_0+2\mu \log\left(\f{\mu^2 \sin\f{k \b_+}2 \cos \f{k_1 \b_0}{2}}{k k_1}\right)
\ee
Note that there are infinitely many solutions to \eqref{c7} due to the oscillation nature of the trigonometric functions. We only consider the solution in the range of $k\b_+,~2k_1\b_0\in(0,2\pi)$ because it smoothly interpolates to $k=k_1\ra\pi/(\b_0+\b_+)$ in the $\mu\ra 0$ limit. We can rescale $k\ra \mu k,k_1 \ra \mu k_1$, $\b_0\ra\b_0/\mu,\b_+\ra \b_+/\mu$, which is equivalent to taking $\mu=1$ in \eqref{c6} and \eqref{c7}. Since we do not have any analytic solution and the on-shell action turns out not always monotonic, we just numerically solve the saddle equation and compare the on-shell actions of $F_1$ and $F_2$ for $(\b_++\b_0)/2=0.1,1,10$ with $\b_+>\b_0$. The result in Figure \ref{fig:onshellI} shows that in all cases, the on-shell action of $F_2$ is smaller than that of $F_1$ and the second diagram in Figure \ref{fig:phi2} is preferred. This can be inferred from the small $(\b_++\b_0)/2$ case, which corresponds to small $\mu$ before rescaling. In this case, we can ignore the backreaction of heavy operators, and the Wick contraction in the second diagram of Figure \ref{fig:phi2} has a shorter geodesic distance on the EAdS$_2$ disk, which is preferred due to the $(4e^{-\l})^\mu$ suppression. 

Let us summaries the phases of $\bra{\Phi_2}$ in the semiclassical limit. If $\b_L+\b_R<2\b_0$, the first diagram of Figure \ref{fig:phi2} dominates; if $\b_L+\b_R>2\b_0$, there is a critical value $\b_*$ for $|\b_L-\b_R|$ such that the second diagram dominates when $|\b_L-\b_R|<\b_*$ and the first diagram dominates when $|\b_L-\b_R|>\b_*$. In the scenario where $F_2$ dominates, we find that the on-shell $k-k_1>0$ in Figure \ref{fig:kk1}, which means that the center disk has the global minimal dilaton point. It is important that this global minimum is located in the central row of disks (in the case where we have identical heavy operators and self-contraction dominates) because otherwise, the modular flow would not be the boost transformation in the Lorentzian spacetime. However, to prove this in general, we need additional effort. 

\bibliographystyle{JHEP.bst}
\bibliography{main}

\end{document}